\newcommand{\set}[1]{\{ #1 \} }
\def\nat{{\mathbb N}}
 \def\real{{\mathbb R}}
\DeclareMathOperator{\sep}{sep}
\begin{document}

\title{Upper bounds for Newton's method on\\monotone
polynomial systems,\\ and  
P-time 
model checking of\\probabilistic one-counter automata\thanks{A full
version of this paper is available at {\tt arxiv.org/abs/1302.3741}. Research
partially supported by NSF Grant CCF-1017955.}} 
\author{Alistair Stewart\inst{1} \and Kousha Etessami\inst{1} \and 
Mihalis Yannakakis\inst{2}}
\institute{School of Informatics, University of Edinburgh\\{\tt kousha@inf.ed.ac.uk} , {\tt stewart.al@gmail.com}
\and 
Department of Computer Science, Columbia University\\
{\tt mihalis@cs.columbia.edu}}

\maketitle

\thispagestyle{empty}

\begin{abstract}
A central computational problem for analyzing and model checking 
various classes of infinite-state
recursive probabilistic systems (including 
quasi-birth-death processes, 
multi-type branching processes, 
stochastic context-free
grammars, probabilistic pushdown automata and recursive Markov chains) 
is the computation of {\em termination probabilities}, and  computing these probabilities
in turn boils down to computing the {\em least fixed point} (LFP) solution of a corresponding
{\em monotone polynomial system} (MPS) of equations, denoted $x=P(x)$.

It was shown by Etessami and Yannakakis \cite{rmc} that 
a decomposed variant of Newton's method converges monotonically to 
the LFP solution for any MPS that has a non-negative solution. Subsequently,
Esparza, Kiefer, and Luttenberger \cite{lfppoly} obtained 
upper bounds on the convergence rate of Newton's method for certain classes of MPSs.
More recently, better upper bounds 
have been obtained for special classes
of MPSs (\cite{EWY-qbd-2010,ESY12}). 

However, prior to this paper,  for arbitrary (not necessarily strongly-connected) MPSs,
no upper bounds at all were known 
on the convergence rate of Newton's method
as a function of the encoding
size $|P|$ of the input MPS, $x=P(x)$.

In this paper 
we provide worst-case upper bounds, as a function of both the
input encoding size $|P|$, and $\epsilon > 0$, on the number of iterations
required for decomposed Newton's method ({\em even with rounding}) to converge
to within additive error $\epsilon > 0$ of $q^*$,
for an {\em arbitrary} MPS with LFP solution $q^*$.  
Our upper bounds are essentially optimal in terms of several important
parameters of the problem.

Using our upper bounds, and building on prior work,
we obtain the first P-time algorithm (in the standard Turing model of computation)
for quantitative model checking, to within arbitrary desired precision,
of discrete-time QBDs and
(equivalently) probabilistic 1-counter automata,  with respect to any 
(fixed) $\omega$-regular or LTL property.
\end{abstract}

\section{Introduction}

In recent years, there has been extensive work on the analysis of 
various classes of infinite-state recursive probabilistic systems,
including recursive Markov chains, 
probabilistic pushdown systems,
stochastic context-free grammars, multi-type branching processes,
quasi-birth-death processes and 
probabilistic 1-counter automata 
(e.g. \cite{rmc,EY-MC-12,EKM,ESY12,EWY-qbd-2010,BKK11}).
These are all finitely-presentable models that specify an
infinite-state underlying probabilistic system.
These classes of systems arise in a variety of fields and
have been studied by various communities.
Recursive Markov chains (RMC), and the equivalent model of probabilistic pushdown systems (pPDS), are natural models for probabilistic programs with
recursive procedures \cite{rmc,EKM}. 
Quasi-birth-death (QBD) processes, which are essentially
equivalent (in discrete-time) to probabilistic 1-counter automata (p1CA), are used
in queueing theory and performance evaluation \cite{Neuts81,LatRam99}.  
Stochastic context-free grammars are a central model
in natural language processing and are used also in biology 
\cite{DEKM99}, and branching processes are a classical probabilistic model
with many applications, including in population genetics
(\cite{harris63}).

A central problem for the analysis and model checking of these systems
is the computation of their associated {\em termination probabilities}.
Computing these probabilities amounts to solving a system of fixed-point
multivariate equations $x=P(x)$, where $x$ is a (finite) vector
of variables and $P$ is a vector of polynomials with positive coefficients;
such a system of equations is called a {\em monotone polynomial system} (MPS) 
because $P$ defines a monotone operator from the nonnegative orthant to itself. Each of the above classes has the property that, 
given a model $M$ in the class,
we can construct in polynomial time a corresponding MPS $x=P(x)$ such that
the termination probabilities of $M$ (for various initial states)
are the {\em least fixed point} (LFP) solution of the system, 
i.e., they satisfy the system, and any other nonnegative solution
is at least as large in every coordinate.
In general, a monotone polynomial system may not have any fixed point;
consider for example $x=x+1$. However, if it has a fixed point, then it
has a least fixed point (LFP). The systems constructed from probabilistic
systems as above always have a LFP, which has values in $[0,1]$ 
since its coordinates give the termination probabilities.

The equations are in general nonlinear, and their LFP solution
(the vector of termination probabilities) is in general irrational
even when all the coefficients of the polynomials (and the numerical
input data of the given probabilistic model) are rational.
Hence we seek to compute the desired quantities
up to a desired accuracy $\epsilon >0$. 
The goal is to compute them as efficiently as possible,
as a function of the encoding size of the input (the given probabilistic
model, or the MPS) and the accuracy $\epsilon$. 
We first review some of the relevant previous work and
then describe our results.

{\bf Previous Work.}
An algorithm for computing the LFP of MPSs, based on Newton's method, 
was proposed in \cite{rmc}.
Given a MPS, we can first identify in
polynomial time the variables that have value 0 in the LFP
and remove them from the system, yielding a new so-called {\em cleaned}
system. Then a dependency graph between the variables is constructed,
the variables and the MPS are decomposed into strongly connected
components (SCCs), and Newton's method is applied bottom-up 
on the SCCs, starting from the all-0 vector. It was shown
in \cite{rmc} that, for any MPS that has a (nonnegative) solution,
the decomposed variant of Newton's method
converges monotonically to the LFP. Optimized variants
of decomposed Newton's method have by now been implemented in
several tools (see, e.g., \cite{WojEte07,NedSat08}), 
and they perform quite well in practice on many instances.

Esparza, Kiefer and Luttenberger studied in detail the rate of
convergence of Newton's method on MPSs \cite{lfppoly} (with or without 
decomposition). On the negative side, they showed that there are instances
of MPSs $x=P(x)$ (in fact even simple RMCs), with $n$ variables,
where it takes an exponential number of iterations in the input size to get even within just one bit of precision
(i.e. accuracy 1/2). On the positive side, they showed that after some
initial number $k_P$ of iterations in a first phase, 
Newton's method
thereafter gains bits of precision at a linear rate, meaning
that $k_P + c_P \cdot i$ iterations suffice to gain $i$ bits of precision,
where both $k_P$ and $c_P$ depend on the input, $x=P(x)$. 
For strongly connected MPSs,
they showed that the length, $k_P$, of the initial phase is upper bounded
by an exponential function of the input size $|P|$,
and that $c_P = 1$. For general MPSs
that are not strongly connected (and for general RMCs and pPDSs),
they showed that $c_P = n 2^n$ suffices, but they provided
no upper bound at all on $k_P$ 
(and none was known prior to the present paper).
Thus, they obtained no upper bounds,
as a function of the size of the input, $x=P(x)$, for the number of iterations 
required to get to within even the first bit of precision 
(e.g., to estimate within $< 1/2$ the termination probability of a RMC)
for general MPSs and RMCs.
Proving such a general bound was left as an open problem in \cite{lfppoly}.

For special classes of probabilistic models (and MPSs) better results
are now known. For the class of quasi-birth-death processes (QBDs) and
the equivalent class of probabilistic 1-counter automata (p1CA),
it was shown in \cite{EWY-qbd-2010} that the decomposed Newton method
converges in a polynomial number of iterations in the size of the input
and the bits of precision, and hence the desired termination
probabilities of a given p1CA $M$ can be computed within 
absolute error $\epsilon = 2^{-i}$ in a number of arithmetic operations
that is polynomial in the size $|M|$ of the input and the
number $i=\log(1/\epsilon)$ of bits of precision.
Note that this is {\em not} polynomial time in the standard Turing model
of complexity, because the numbers that result from 
the arithmetic operations in general can become exponentially long
(consider $n$ successive squarings of a number). Thus, the result of
\cite{EWY-qbd-2010} shows that the termination problem for p1CAs can be solved in
polynomial time in the {\em unit-cost exact rational arithmetic} model,
a model in which arithmetic operations cost 1 time unit, regardless
of how long the numbers are. It is not known exactly how powerful
the unit-cost rational model is, but it is believed to be strictly
more powerful than the ordinary Turing model.
The question whether the termination probabilities of a p1CA (and a QBD)
can be computed in polynomial time (in the standard model) was left 
open in \cite{EWY-qbd-2010}.

Building on the 
results of \cite{EWY-qbd-2010} for computation of termination probabilities
of p1CAs,  more recently Brazdil, Kiefer and Kucera \cite{BKK11} 
showed how to do quantitative 
model checking of $\omega$-regular properties (given by
a deterministic Rabin automaton) for p1CAs, i.e., compute within desired 
precision $\epsilon > 0$ the probability that a run of a given p1CA, $M$,
is accepted by a given deterministic Rabin automaton, $R$, 
in time polynomial in
$M, R, \log(1/\epsilon)$ in the unit-cost rational arithmetic model.
The complexity
in the standard Turing model was left open.

For the classes of stochastic context-free grammars, multi-type
branching processes, and the related class of 1-exit RMCs,
we showed recently in \cite{ESY12} that termination probabilities can
be computed to within precision $\epsilon$ in polynomial time
in the size of the input model and $\log(1/\epsilon)$ 
(i.e. the $\#$ of bits of precision) in the standard Turing model \cite{ESY12}.
The algorithm is a variant of Newton's method, where the preprocessing
identifies and eliminates (in P-time \cite{rmc}) 
the variables that have value 1 
in the LFP (besides the ones with value 0). Importantly,
the numbers throughout the computation are not allowed to grow
exponentially in length, but are always rounded down to a polynomial
number of bits. The analysis then shows that the rounded Newton's algorithm
still converges to the correct values (the LFP) and the number of 
iterations and the entire time complexity is polynomially bounded.

For general RMCs (and pPDSs) and furthermore for general MPSs,
even if the LFP is in $[0,1]^n$, 
there are negative results indicating that it is probably
impossible to compute the termination probabilities and the LFP
in polynomial time in the standard Turing model.
In particular, we showed in \cite{rmc} that approximating
the termination probability of a RMC within {\em any} constant
additive error $< \frac{1}{2}$,
is at least as hard as the {\em square-root-sum} problem,
a longstanding open problem that
arises often in computational geometry, which is not even known to be in NP,
and that it is also as hard as the more powerful problem, called PosSLP \cite{ABKM06},
which captures the essence of unit-cost rational arithmetic.
Thus, if one can approximate the termination probability of
a RMC in polynomial time then it is possible to simulate unit-cost
rational arithmetic in polynomial time in the standard model,
something which is highly unlikely.

As we mentioned at the beginning, 
computing termination probabilities is
a key ingredient for performing other,  more general analyses,
including model checking 
\cite{EY-MC-12,EKM}.

\medskip

{\bf Our Results}.
We provide a thorough analysis of decomposed Newton's method
and show upper bounds on its rate of convergence as a function 
of the input size and the desired precision, 
which holds for {\em arbitrary} monotone polynomial systems.
Furthermore, we analyze a {\em rounded} version of the algorithm where
the results along the way are not computed exactly to arbitrary
precision but are rounded to a suitable number of bits
(proportional to the number of iterations $k$ of Newton's method
 that are performed),
while ensuring that the algorithm stays well-defined
and converges to the LFP. Thus, the bounds we show hold
for the standard Turing model and not only the unit-cost model.
Note that all the previous results on Newton's method that we mentioned,
except for \cite{ESY12},
assume that the computations are carried out in {\em exact} arithmetic.
To carry out $k$ iterations of Newton's method with exact arithmetic 
can require exponentially many bits, as a function of $k$,
to represent
the iterates.
In general, the fact that Newton's method converges with exact arithmetic
does not even imply automatically that rounded
Newton iterations will get anywhere close to the solution
when we round to, say, only polynomially many bits of precision
as a function of the number of iterations $k$, 
let alone that the same bounds on the convergence rate will continue to hold.
We nevertheless show that suitable rounding works for MPSs.

In more detail, suppose that the given (cleaned) MPS $x=P(x)$
has a LFP $q^*>0$.  The decomposition into
strongly connected components yields a DAG of SCCs with depth $d$,
and we wish to compute the LFP with (absolute) error 
at most $\epsilon$. Let $q^*_{\min}$ and $q^*_{\max}$ be the minimum
and maximum coordinate of $q^*$.
Then the rounded decomposed Newton method will converge
to a vector $\tilde{q}$
within $\epsilon$ of the LFP, i.e., such that $\|q^* - \tilde{q} \|_\infty \leq \epsilon$ in time polynomial in
the size $|P|$ of the input, $\log(1/\epsilon)$,
$\log(1/q^*_{\min})$, $\log( q^*_{\max})$, and $2^d$
(the depth $d$ in the exponent can be replaced by the
maximum number of {\em nonlinear} SCCs in any path of the DAG of SCCs).
We also obtain bounds on $q^*_{\min}$ and $q^*_{\max}$
in terms of $|P|$ and the number of variables $n$,
so the overall time needed is polynomial in $|P|$, $2^n$ and $\log(1/\epsilon)$.
We provide actually concrete expressions on the number of iterations
and the number of bits needed.
As we shall explain, the bounds are essentially optimal in terms of several
parameters.
The analysis is quite involved and
builds on the previous work. 
It uses several results and techniques
from \cite{rmc,lfppoly,ESY12}, 
and develops substantial additional machinery.

We apply our results then to probabilistic 1-counter automata (p1CAs).
Using our analysis for the rounded decomposed Newton method and 
properties of p1CAs from \cite{EWY-qbd-2010}, 
we show that termination probabilities of a p1CA $M$ (and QBDs)
can be computed to desired precision $\epsilon$ in polynomial time
in the size $|M|$ of the p1CA and $\log(1/\epsilon)$ (the bits of precision) in the standard Turing model of computation,
thus solving the open problem of \cite{EWY-qbd-2010}.

Furthermore, combining with the results of  \cite{BKK11} and
\cite{EY-MC-12}, we show that one can do quantitative model
checking of $\omega$-regular properties for p1CAs in polynomial time
in the standard Turing model,
i.e., we can compute to desired precision $\epsilon$
the probability that a run of a given p1CA $M$ satisfies an
$\omega$-regular property in time polynomial in 
$|M|$ and $\log(1/\epsilon)$ (and exponential in
the property if it is given for example as a non-deterministic B\"{u}chi 
automaton
or polynomial if it is given as a deterministic Rabin automaton).

The rest of the paper is organized as follows.
In Section 2 we give basic definitions and background.
In Section 3 we consider strongly-connected MPS, and in Section 4
general MPS. Section 5 analyzes p1CAs.  Most proofs are in
the appendix.

\section{Definitions and Background}

We first recall basic definitions about MPSs from \cite{rmc}.
A {\em monotone polynomial system of equations} (MPS) consists of 
a system of $n$ equations in $n$  variables,  $x= (x_1, \ldots, x_n)$,
the equations are of the form $x_i = P_i(x)$,  $i= 1,\ldots,n$,   such that
$P_i(x)$ is a multivariate polynomial in the variables $x$,
and such that the monomial coefficients and constant term of $P_i(x)$ are all non-negative.
More precisely, for $\alpha = (\alpha_1, \alpha_2, \ldots, \alpha_n) \in \nat^n$,
we use the notation $x^{\alpha}$ to denote the monomial $x_1^{\alpha_1} x_2^{\alpha_2} \ldots x_n^{\alpha_n}$.
(Note that by definition $x^{(0,\ldots,0)} = 1$.)
Then for each polynomial $P_i(x)$, $i=1,\ldots,n$, there is some {\em finite} subset of $\nat^n$,  denoted ${\mathcal C}_i$,
and for each $\alpha \in {\mathcal C}_i$, there is a positive (rational) coefficient $c_{i,\alpha} > 0$,
such that $P_i(x) \equiv \sum_{\alpha \in {\mathcal C}_i} c_{i,\alpha} x^{\alpha}$.

For computational purposes, we assume each polynomial $P_i(x)$ 
has rational coefficients\footnote{although we also reason
about MPSs with positive real-valued coefficients in our proofs.}, and that it is encoded
succinctly
by specifying the list of pairs $\langle (c_{i,\alpha}, \alpha) \mid \alpha \in {\mathcal C_i} \rangle$,
where
each rational coefficient $c_{i,\alpha}$ is represented by giving its numerator and
denominator in binary,
and each integer vector $\alpha$ is represented in {\em sparse representation},
by  only listing  its non-zero coordinates, $i_1, \ldots, i_k$, by using a list
$\langle (i_1, \alpha)_{i_1}), \ldots, (i_{k}, \alpha_{i_{k}}) \rangle$,
giving each integer $\alpha_{i_j}$ in binary.
(Proposition \ref{prop:snf-form} below, from \cite{rmc,ESY12}, shows that
 using such a sparse representation 
 does not entail any extra computational cost.)

We use vector notation, using $x=P(x)$ to denote the entire MPS.
We use $|P|$ to denote the encoding size (in bits) of the MPS $x=P(x)$
having rational coefficients, using the succinct representation just described.

Let $\real_{\geq 0}$ denote the non-negative real numbers.
Then $P(x)$ defines a monotone operator
on the non-negative orthant $\real^n_{\geq 0}$.  In other words, 
$P: \real^n_{\geq 0} \rightarrow \real^n_{\geq 0}$, and 
if ${\mathbf 0} \leq a \leq b$,  then   $P(a) \leq P(b)$.
In general, an MPS  need not have any real-valued solution: consider $x= x+1$.
However, because of monotonicity of $P(x)$, if there exists a solution $a \in \real^n_{\geq 0}$
such that $a=P(a)$, then there exists a {\em least fixed point} (LFP) solution $q^* \in \real^n_{\geq 0}$
such that $q^* = P(q^*)$, and such that $q^* \leq a$  for all
solutions $a \in \real^{n}_{\geq 0}$.    Indeed, if for $z \in \real^n$ 
we define $P^0(z) = z$, and define $P^{k+1}(z) = P(P^k(z))$, for all $k \geq 0$, then
(as shown in \cite{rmc}) value iteration starting at the all-$0$ vector ${\mathbf 0}$ converges
monotonically to $q^*$: in other words $\forall k \geq 0$  $P^{k}({\mathbf 0}) \leq P^{k+1}({\mathbf 0})$,
and $\lim_{k \rightarrow \infty} P^k({\mathbf 0}) = q^*$.\footnote{Indeed, even if an MPS does not have a 
{\em finite} LFP solution $q^* \in \real^n_{\geq 0}$, it always does have
an LFP solution
{\em over the extended non-negative reals}.  Namely,
we can define the LFP of any MPS, $x=P(x)$, to be the vector  $q^* \in \overline{\real}^n_{\geq 0}$ 
over $\overline{\real}_{\geq 0} = (\real_{\geq 0} \cup \{ + \infty \})$,
given by $q^* := \lim_{k \rightarrow \infty} P^k({\mathbf 0})$. 
In general, it is PosSLP-hard to decide whether a given MPS has a finite LFP.
(This follows easily from results in \cite{rmc}, although it is not stated there: is 
 was shown there it is PosSLP-hard to decide if $q^*_1 \geq 1$ in an MPS with
 finite LFP $q^* \in \real^n_{\geq 0}$.
 Then just add a variable $x_0$, and an equation $x_0 = x_0 x_1 + 1$ to the MPS.  
  In the new MPS,  $q^*_0 = +\infty$ if and only if $q^*_1 \geq 1$.)
However, various classes of MPSs, including those whose LFP corresponds to termination probabilities 
of various recursive probabilistic systems 
do have a finite LFP.
Thus in this paper we will 
only consider LFP computation for MPSs that have a finite LFP $q^* \in \real^n_{\geq 0}$.
So when we say  ``$x=P(x)$ is an MPS with LFP solution $q^*$'', we mean 
$q^* \in \real^n_{\geq 0}$, unless specified otherwise.  \label{foot:extended}}
 
Unfortunately, standard value iteration $P^k(0)$, $k \rightarrow \infty$, can converge very slowly to $q^*$, even for a fixed MPS
with 1 variable, even when $q^*=1$; specifically, $x = (1/2) x^2  + 1/2$ already
exhibits exponentially slow convergence to its LFP $q^* = 1$  (\cite{rmc}).
It was shown in \cite{rmc} that a decomposed variant of 
Newton's method also converges 
monotonically to $q^*$ for an MPS with LFP solution $q^*$.
More recently, in \cite{ESY12}, a version of Newton's method with suitable rounding between iterations
was studied.  Rounding is necessary if one wishes to consider the complexity of Newton's method
in the standard (Turing) model of computation, which does not allow unit-cost arithmetic
operations on arbitrarily large numbers.
In this paper we will apply a version of Newton's method to MPSs 
which uses both rounding and decomposition.  Before describing it, we need some further background.

An MPS, $x = P(x)$, is said to be in
{\em simple normal form} (SNF)
if for every $i = 1,\ldots,n$,  the polynomial $P_i(x)$ has one of two forms:
(1) \underline{Form$_{*}$}: $P_i(x) \equiv x_j x_k$ 
is simply a quadratic monomial;  \ or  (2) \underline{Form$_{+}$}: $P_i(x)$ is a {\em linear} 
expression 
$\sum_{j \in \mathcal{C}_i} p_{i,j} x_j + p_{i,0}$,  for some rational 
non-negative
coefficients $p_{i,j}$ and $p_{i,0}$, and
some index set $\mathcal{C}_i \subseteq
\{1,\ldots,n\}$.
In particular,  in any MPS in SNF form
every polynomial $P_i(x)$ has multivariate degree bounded by at most $2$ in the variables $x$. 
We will call such MPSs {\bf quadratic MPSs}.

As shown in \cite{rmc,ESY12}, it is easy to convert any MPS to SNF form, by adding auxiliary
variables and equations:

\begin{proposition}{(Propos. 7.3 \cite{rmc}, and Propos. 2.1 of \cite{ESY12})}
\label{prop:snf-form}
Every MPS, $x = P(x)$, with LFP $q^*$,  can be transformed in P-time
to an ``{\em equivalent}'' quadratic MPS  $y=Q(y)$  in SNF form,
such that 
$|Q| \in O( |P|  )$.
More precisely, the variables $x$ are 
a subset of the variables $y$, and  $y=Q(y)$ has LFP
$p^*$ iff $x=P(x)$ has LFP $q^*$,
and projecting $p^*$ onto the $x$ variables yields 
$q^*$.
\end{proposition}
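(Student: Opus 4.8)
The plan is to give an explicit, purely syntactic transformation that breaks every monomial of $P$ into a balanced binary tree of products (introducing one fresh variable per internal node) and then replaces each polynomial $P_i(x)$ by a single linear equation over the original and the fresh variables. The one genuine danger is that the exponents $\alpha_j$ in $P$ are written in binary, so a single monomial $x^\alpha$ can have enormous degree while contributing only $O(\sum_j \log \alpha_j)$ bits to $|P|$; a naive chain of $\deg(x^\alpha)-1$ multiplications would be exponentially large, so the degree reduction must be carried out by repeated squaring.

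Concretely: for each original variable $x_j$ introduce fresh variables $s_{j,1},s_{j,2},\dots$ with the Form$_{*}$ equations $s_{j,1}=x_j\cdot x_j$ and $s_{j,b+1}=s_{j,b}\cdot s_{j,b}$, carrying $b$ up to $\lfloor\log_2 M_j\rfloor$ where $M_j$ is the largest exponent on $x_j$ over all monomials of $P$ (so that $s_{j,b}$ ``should equal'' $x_j^{2^b}$). For each monomial $x^\alpha$ appearing in some $P_i$, write each $\alpha_j$ in binary and form the multiset of factors consisting of $x_j$ for each $j$ with bit $0$ of $\alpha_j$ set and $s_{j,b}$ for each $j$ with bit $b\geq 1$ of $\alpha_j$ set; multiply these factors through a balanced binary tree, introducing one fresh variable $w$ with a Form$_{*}$ equation $w=(\text{left child})\cdot(\text{right child})$ per internal node, and call the root variable $v_\alpha$ (for $x^\alpha$ a single variable take $v_\alpha:=x^\alpha$, and for $x^\alpha=x_jx_k$ the tree is a single Form$_{*}$ equation). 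Finally replace $x_i=P_i(x)=\sum_{\alpha\in\mathcal{C}_i}c_{i,\alpha}x^\alpha$ by the Form$_{+}$ equation $x_i=\sum_{\alpha\in\mathcal{C}_i}c_{i,\alpha}v_\alpha$, the constant term being the $\alpha=0$ summand. The resulting system $y=Q(y)$, with $y$ consisting of $x$ together with all the $s$- and $w$-variables, is quadratic and in SNF; the number of $s$-variables is $\sum_j\lfloor\log_2 M_j\rfloor$ and the number of $w$-variables is $O(\sum_i\sum_{\alpha\in\mathcal{C}_i}\sum_j(1+\log_2\alpha_j))$, both $O(|P|)$, and a careful count of the auxiliary variables and rewritten equations then gives $|Q|\in O(|P|)$, with the whole construction running in polynomial time.

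For the equivalence, order the fresh variables so that each is defined as a product of strictly earlier variables or of original variables, and back-substitute along this order: this expresses each fresh variable $z_\ell$ as a fixed monomial $g_\ell(x)$ in the $x$-variables, with the $s_{j,b}$-component of $g$ equal to $x_j^{2^b}$ and the $v_\alpha$-component equal to $x^\alpha$; write $g(x)$ for the vector of these monomials. I claim the LFP of $Q$ is $(q^*,g(q^*))$. First, $(q^*,g(q^*))$ is a fixed point of $Q$: the fresh-variable equations hold by the definition of $g$, and plugging $g(q^*)$ into the rewritten $x_i$-equation turns each $v_\alpha$ back into $(q^*)^\alpha$, recovering $q^*_i=P_i(q^*)$; hence if $P$ has a finite nonnegative LFP then $Q$ has a fixed point, so it has a finite LFP $p^*\leq(q^*,g(q^*))$. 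Conversely, for any nonnegative fixed point $(a,b)$ of $Q$, processing the fresh-variable equations in topological order forces $b=g(a)$, and then the $x$-block of $(a,b)=Q(a,b)$ reads $a_i=\sum_\alpha c_{i,\alpha}a^\alpha=P_i(a)$, so $a$ is a fixed point of $P$; thus $a\geq q^*$ and, by monotonicity of the monomial map $g$, $(a,b)=(a,g(a))\geq(q^*,g(q^*))$. So every fixed point of $Q$ dominates the fixed point $(q^*,g(q^*))$, whence $p^*=(q^*,g(q^*))$; in particular $Q$ has a finite LFP iff $P$ does, and $p^*$ projects to $q^*$ on the $x$-coordinates.

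The main obstacle is thus not any single deep step but getting the bookkeeping right simultaneously on two fronts: performing the degree reduction by repeated squaring (sharing the squaring chains across monomials) so that the blow-up stays linear in $|P|$ despite binary-encoded exponents, while keeping every auxiliary equation in the rigid Form$_{*}$ or Form$_{+}$ shape; once the system is set up this way, the LFP correspondence reduces to the routine back-substitution and monotonicity argument above.
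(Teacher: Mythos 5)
Your construction is correct and is essentially the standard one from the cited sources (the paper itself only quotes Proposition~\ref{prop:snf-form} from \cite{rmc,ESY12} without reproving it): shared repeated-squaring chains to handle binary-encoded exponents, balanced product trees with one Form$_{*}$ equation per internal node, a Form$_{+}$ equation absorbing the coefficients, and the topological back-substitution argument identifying the LFP of $y=Q(y)$ with $(q^*,g(q^*))$. No gaps.
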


Furthermore, for any MPS, $x=P(x)$, 
we can in P-time find and remove
any variables $x_i$, such that the LFP solution has $q^*_i = 0$.$^{\ref{foot:second-extended}}$

\begin{proposition}{(Proposition 7.4 of \cite{rmc})}
There is a P-time algorithm that, given any MPS\footnote{This proposition holds regardless whether
the LFP $q^*$ is {\em finite} or is over the {\em extended non-negative reals},
$\overline{\real}_{\geq 0}$. Such an extended LFP exists for any MPS. See footnote 
\ref{foot:extended}.\label{foot:second-extended}}, $x = P (x)$, 
over $n$ variables, determines for each $i \in \{1, \ldots, n\}$
whether $q^*_i = 0$.
\label{prop:zero-removal-in-p-time}
\end{proposition}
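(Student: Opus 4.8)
The plan is to exhibit an explicit combinatorial fixpoint algorithm that computes the set $Z = \set{i : q^*_i = 0}$, and to verify its correctness by matching the combinatorial iteration against value iteration $P^k(\mathbf{0})$. First I would define a monotone operator $F$ on subsets of $\set{1,\dots,n}$: for $S \subseteq \set{1,\dots,n}$, let $F(S)$ be the set of indices $i$ such that either $P_i$ has a positive constant term, or there is a monomial $\alpha \in \mathcal{C}_i$ with $\set{j : \alpha_j > 0} \subseteq S$. Since $P$ has only non-negative coefficients, $F$ is monotone, so iterating from $\emptyset$ yields an increasing chain $\emptyset \subseteq F(\emptyset) \subseteq F^2(\emptyset) \subseteq \cdots$, which stabilizes after at most $n$ steps at a least fixpoint $S^* := F^n(\emptyset)$. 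Each application of $F$ requires, for each $i$ and each monomial of $P_i$, only membership tests of the (sparsely represented) exponent vector's support against the current set $S$, so the whole computation runs in time polynomial in $|P|$. The algorithm outputs ``$q^*_i = 0$'' exactly for the indices $i \notin S^*$.

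The real content is correctness, and the key claim is that for every $k \geq 0$ we have $F^k(\emptyset) = \set{i : P^k(\mathbf{0})_i > 0}$. This is an easy induction on $k$: the base case $k=0$ is $\emptyset = \emptyset$, and for the step note that $P^{k+1}(\mathbf{0})_i = P_i(P^k(\mathbf{0}))$ is positive iff $P_i$ has a positive constant term or some monomial $\prod_j (P^k(\mathbf{0})_j)^{\alpha_j}$ is positive, and (no cancellation is possible, as coefficients are positive) the latter holds iff every variable $j$ with $\alpha_j>0$ satisfies $P^k(\mathbf{0})_j > 0$, i.e.\ $j \in F^k(\emptyset)$ by the inductive hypothesis. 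Since the chain $F^k(\emptyset)$ stabilizes by step $n$, this gives $S^* = \bigcup_{k\geq 0}\set{i : P^k(\mathbf{0})_i > 0}$. Finally, because value iteration converges monotonically, $P^k(\mathbf{0}) \nearrow q^*$ (as stated earlier; this holds whether $q^*$ is finite or is taken over the extended non-negative reals), we have $q^*_i > 0$ iff $P^k(\mathbf{0})_i > 0$ for some $k$: if $q^*_i = 0$ then $0 \le P^k(\mathbf{0})_i \le q^*_i = 0$ for all $k$, and if $q^*_i > 0$ then the nondecreasing sequence $P^k(\mathbf{0})_i \to q^*_i$ is eventually positive. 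Hence $\set{i : q^*_i > 0} = S^*$, so $Z = \set{1,\dots,n}\setminus S^*$, as required, and the argument is insensitive to finiteness of $q^*$, covering the case of footnote~\ref{foot:second-extended}.

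I do not anticipate a genuine obstacle; the only point that needs care is the bookkeeping in the induction $F^k(\emptyset) = \set{i : P^k(\mathbf{0})_i > 0}$ together with the observation that the combinatorial chain stabilizes within $n$ rounds, so that the (a priori infinitary) set ``variables that are eventually positive under value iteration'' coincides with the finitely computable set $S^*$. One could alternatively first apply Proposition~\ref{prop:snf-form} to put the system in SNF, making every monomial either a single product $x_jx_k$ or a linear form and trivializing the monomial tests, but this preprocessing is unnecessary: the direct algorithm above already runs in polynomial time on an arbitrary MPS.
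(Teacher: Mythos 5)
Your argument is correct, and it is essentially the standard proof of this fact (the paper itself only cites Proposition 7.4 of \cite{rmc} without reproving it): the least fixed point of the monotone set operator $F$ computed in at most $n$ rounds coincides, by your induction and the absence of cancellation among positive terms, with the set of coordinates where value iteration $P^k(\mathbf{0})\nearrow q^*$ ever becomes positive. The handling of the extended-reals case and the polynomial-time accounting via the sparse representation are both fine.
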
 

Thus, for every MPS, we can detect in P-time all the variables 
$x_j$ such that $q^*_j = 0$, remove their equation $x_j = P_j(x)$,
and set the variable $x_j$ to $0$ on the RHS of the remaining
equations.  
We obtain as a result a {\bf\em cleaned} MPS, $x'=Q(x')$,  
which has an LFP $q^* > 0$.

Applying Propositions  
\ref{prop:snf-form} and 
\ref{prop:zero-removal-in-p-time}, 
{\bf\em we assume wlog in the rest of this paper that every MPS is a cleaned
quadratic MPS,  with LFP $q^* > 0$.}\footnote{For compatibility when quoting prior work, 
it will sometimes be convenient to assume 
quadratic MPSs, rather than the more restricted SNF form MPSs.}

In order to describe {\em decomposed} Newton's method,
for a {\em cleaned} MPS, $x=P(x)$
we need to define the {\em dependency graph}, 
$G_P = (V,E)$, of the MPS.
The nodes $V$ of $G_P$ are
the remaining variables $x_i$, 
and the edges are defined as follows: $(x_i, x_j) \in E$ if and only if 
$x_j$ appears in some monomial in $P_i(x)$ that has a positive coefficient.

We shall decompose the cleaned system of equation $x=P(x)$,
into  strongly connected components (SCCs),
using the dependency graph $G_P$ of variables, 
and we shall apply Newton's method separately on each SCC  ``bottom-up''.

We first recall basic definitions for (a rounded down version of) 
Newton's method applied to MPSs.
For an MPS, $x=P(x)$, with $n$ variables, we define $B(x) = P'(x)$ to be the $n \times n$ Jacobian 
matrix of partial derivatives of $P(x)$. 
In other words,  $B(x)_{i,j} =  \frac{\partial P_i(x)}{\partial x_j}$.
For a vector $z \in \real^n$, 
assuming that the matrix $(I - B(z))$ is non-singular, 
a single iteration of {\em Newton's method} ({\bf NM}) on $x=P(x)$ at $z$ 
is defined via the following operator:
\begin{equation}\label{newton-one-it-eq}
\mathcal{N}_P(z) :=     z + (I-B(z))^{-1} (P(z) - z)
\end{equation}

\noindent Let us now recall from \cite{ESY12} 
the rounded down Newton's method, with parameter $h$, applied to an MPS:

\begin{definition} {{\bf Rounded-down Newton's method} ({\bf R-NM}) , with rounding parameter $h$.)}
Given an MPS, $x=P(x)$,  
with LFP $q^*$, 
where ${\textbf 0} < q^*$,
in the {\em rounded down Newton's method} (R-NM) with integer 
rounding parameter $h > 0$, 
we compute a sequence of 
iteration vectors $x^{[k]}$,  where the initial starting vector is $x^{[0]} 
:= \mathbf{0}$,
and such that for each $k \geq 0$, given $x^{[k]}$, we compute 
$x^{[k+1]}$ as follows:

\begin{enumerate}
\item  First, compute $x^{\{k+1\}} :=  \mathcal{N}_P(x^{[k]})$, 
where the Newton
iteration operator $\mathcal{N}_P(x)$ was defined in equation 
(\ref{newton-one-it-eq}).
(Of course we need to show that all such Newton iterations are defined.)

\vspace*{-0.06in}

\item For each coordinate $i=1,\ldots,n$, set $x^{[k+1]}_i$ 
to be equal 
to the maximum (non-negative) multiple of $2^{-h}$ which is $\leq \max (x^{\{k+1\}}_i, 0)$.
(In other words, round down $x^{\{ k+1\}}$ to the nearest multiple of 
$2^{-h}$, while making sure that the result is non-negative.)
\end{enumerate}
\end{definition}

Now we describe the {\bf\em Rounded-down Decomposed Newton's Method} ({\bf R-DNM}) applied to an MPS, $x=P(x)$,
with real-valued LFP $q^* \geq 0$.
Firstly, we use  Proposition \ref{prop:zero-removal-in-p-time} to remove $0$ variables, and thus we
can assume we are given a cleaned MPS, $x=P(x)$, with real-valued LFP $q^* > 0$.
 
Let $H_P$ be the DAG of SCC's of the dependency graph $G_P$.    
We work bottom-up in $H_P$, starting at bottom SCCs. For each SCC, $S$,
suppose its corresponding equations are $x_S = P_S(x_S,x_{D(S)})$, where
$D(S)$ denotes the union of the variables in ``lower'' SCCs, below $S$,
on which $S$ depends.   In other words,  a variable $x_j \in D(S)$ iff there is
some variable $x_i \in S$ such that there is directed
path in $G_p$ from $x_i$ to $x_j$.
If the system $x_S = P_S(x_S,q^*_{D(S)})$ is a linear system (in $x_S$), 
we call $S$ a {\em linear SCC},
otherwise  $S$ is a {\em nonlinear SCC}.
Assume we have already calculated (using R-DNM) an approximation $\tilde{q}_{D(S)}$ to 
the LFP solution $q^*_{D(S)}$ for these lower SCCs.  
We plug in $\tilde{q}_{D(S)}$ into the equations for $S$, obtaining the equation system 
$x_S = P_S(x_S, \tilde{q}_{D(S)})$.
We denote the actual LFP solution of this new equation system by $q'_{S}$. 
(Note that $q'_{S}$ is not necessarily equal to $q^*_{S}$, because $\tilde{q}_{D(S)}$ is
 only an approximation of $q^*_{D(S)}$.)

If $S$ is a nonlinear SCC, we apply a chosen number $g$ of iterations of R-NM on the 
system $x_S = P_S(x_S,\tilde{q}_{D(S)})$
to obtain an approximation $\tilde{q}_S$ of $q'_{S}$;
if $S$ is linear then we just apply 1 iteration of R-NM,
i.e., we solve the linear system 
and round down the solution.  We of course want to make sure our approximations 
are such that $\|q^*_S -  \tilde{q}_S\|_\infty \leq \epsilon$, for all SCCs $S$, and for the
desired additive error $\epsilon > 0$.
We shall establish upper bounds on the number of iterations $g$, and
on the rounding parameter $h$, needed in R-DNM
for this to hold, as a function of various parameters:
the input size $|P|$ and the number $n$ of variables;
the {\em nonlinear depth} $f$ of $P$, which is defined as the maximum,
over all paths of the DAG $H_P$ of SCCs, of the number of nonlinear SCCs on the path;
and the maximum and minimum coordinates of the LFP.

\noindent {\bf Bounds on the size of  LFPs for an MPS.}
For a positive vector $v > 0$, we use $v_{\min} = \min_i v_i$ to denote
its minimum coordinate, and we use $v_{\max} = \max_i v_i$ to denote
its maximum coordinate.   
Slightly overloading notation, for an MPS, $x = P(x)$,  we shall use $c_{\min}$
to denote the minimum value of all positive monomial coefficients and all positive constant
terms in $P(x)$.  Note that $c_{\min}$ also serves as a lower bound for all 
positive constants and coefficients for entries of the Jacobian matrix
$B(x)$, since $B(x)_{ij} = \frac{\partial P_i(x)}{\partial x_j}$.

We prove the following Theorem
in the appendix,
establishing bounds on the maximum and minimum coordinates of the LFP $q^*$
of an MPS $x=P(x)$.

\begin{theorem} \label{thm:q-bounds} 
If $x=P(x)$ is a quadratic MPS in $n$ variables,  with LFP $q^* > 0$,  
and where $P(x)$ has rational coefficients and total encoding size $|P|$ bits, then
\begin{enumerate}

\item  \ \ $q^*_{\min} \geq 2^{-|P|(2^{n}-1)}$, \ \  \mbox{\rm and}\\

\item \ \ 
$q^*_{\max}  \leq 2^{2(n+1)(|P| + 2(n+1)\log(2n+2)) \cdot 5^{n}}$.
\end{enumerate}
\end{theorem}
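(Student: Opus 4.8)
The plan is to prove the two bounds separately, and in both cases to reduce the question about the LFP of the whole (decomposed, non-strongly-connected) MPS to statements about strongly connected MPSs, where tools from \cite{lfppoly} and \cite{rmc} are available, and then to propagate the bounds up the DAG $H_P$ of SCCs. For the lower bound $q^*_{\min} \ge 2^{-|P|(2^n-1)}$, I would induct on the depth of $H_P$. For a bottom SCC $S$ (one with $D(S) = \emptyset$), the equations $x_S = P_S(x_S)$ form a cleaned strongly connected MPS with LFP $q^*_S > 0$, and one can lower-bound $(q^*_S)_{\min}$ directly: because $S$ is strongly connected, value iteration $P_S^k(\mathbf{0})$ reaches a positive vector in at most $|S| \le n$ steps, and one step of $P_S$ from a vector whose positive entries are at least $c_{\min}$-type quantities keeps coordinates bounded below by a product of at most two previous coordinates times $c_{\min}$; tracking this for $n$ steps gives a bound of the form $c_{\min}^{O(2^n)} \ge 2^{-|P| \cdot O(2^n)}$ (the doubling comes from the quadratic monomials $x_jx_k$ in SNF). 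For a higher SCC $S$, I substitute the (already lower-bounded) values $q^*_{D(S)}$ into $P_S$, obtaining a cleaned strongly connected MPS $x_S = P_S(x_S, q^*_{D(S)})$ whose smallest positive constant/coefficient is now at least $c_{\min}$ times a product of a bounded number of lower-SCC coordinates, i.e. at least $2^{-|P| \cdot 2^{(\text{depth}-1)}\cdot O(1)}$ roughly; the same $n$-step value-iteration argument then yields the next level of the recursion, and unwinding the recurrence $a_d = 2 a_{d-1} + |P|$ (with appropriate constants) produces the claimed $2^{-|P|(2^n-1)}$. The constants need to be chosen carefully so that the exponent telescopes to exactly $|P|(2^n-1)$ rather than something larger; getting that clean form is a bookkeeping matter, but the shape $2^{-|P|\,\text{poly}\cdot 2^n}$ is robust.

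For the upper bound $q^*_{\max} \le 2^{2(n+1)(|P| + 2(n+1)\log(2n+2))\cdot 5^n}$, the natural route is again induction on the DAG $H_P$, but now the mechanism is different: I need an a priori \emph{finite} bound on the LFP of a strongly connected MPS in terms of its coefficients, given that it \emph{has} a finite LFP. For a strongly connected (cleaned) MPS $x_S = P_S(x_S)$ with finite LFP, the key fact is that $q^*_S$ satisfies $q^*_S = P_S(q^*_S)$, and one can extract a polynomial inequality (or a linear-programming / Perron–Frobenius style estimate) bounding $(q^*_S)_{\max}$; the cleanest handle is probably: since the LFP is finite, the spectral radius of the Jacobian $B_S(q^*_S)$ is $\le 1$, which constrains how large $q^*_S$ can be relative to the constant terms — if some coordinate were too large, a quadratic monomial would force $P_S(q^*_S) > q^*_S$, contradicting the fixed point. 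Quantitatively, one shows $(q^*_S)_{\max} \le (c\text{-stuff})^{O(5^{|S|})}$ or similar. Then for a general SCC $S$ sitting above $D(S)$, substitute the already-bounded values $q^*_{D(S)}$ into $P_S$: the new constant terms of $x_S = P_S(x_S, q^*_{D(S)})$ are bounded by $\|P_S\| \cdot (q^*_{D(S),\max})^2$ roughly, feeding a recurrence of the form $M_d \le (M_{d-1})^{c} \cdot (\text{input})$, which when iterated over $d \le n$ levels gives a bound with $5^n$ in the exponent.

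I expect the \textbf{main obstacle} to be the upper bound, and specifically the base case: establishing a clean, explicit a priori upper bound on $(q^*_S)_{\max}$ for a single strongly connected nonlinear MPS in terms of its coefficient data. Lower bounds on LFPs follow more or less mechanically from monotone value iteration reaching a positive point quickly, but upper bounds require genuinely using that the fixed point is finite — there is no value-iteration analogue bounding things from above, so one must argue via the fixed-point equation itself (e.g. via the spectral-radius-$\le 1$ characterization of finiteness, or via a Newton/linearization argument showing that beyond a certain magnitude the map $P_S$ strictly increases every coordinate). Controlling the constants so the final exponent comes out as $2(n+1)(|P|+\dots)\cdot 5^n$, rather than a worse tower, is where the work is; I would aim for a per-SCC multiplicative blow-up with base $5$ (one factor of $5$ per level accounting for squaring plus the overhead of re-cleaning and re-normalizing), so that $n$ levels contribute $5^n$, matching the statement.
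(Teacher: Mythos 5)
Your argument for the lower bound (1) is essentially sound, but the paper obtains it without any SCC decomposition: one shows directly that $P^n(\mathbf{0}) > 0$ (the zero-set of the value iterates can only shrink and must stabilize within $n$ steps, and a coordinate still zero at stabilization would be zero in $q^*$), and then tracks the minimum \emph{non-zero} coordinate of $P^k(\mathbf{0})$: each surviving monomial is constant, linear or quadratic, so this quantity satisfies $P^{k+1}(\mathbf{0})_{@} \geq c_{\min}\bigl(P^{k}(\mathbf{0})_{@}\bigr)^2$, which telescopes to exactly $\min\{1,c_{\min}\}^{2^n-1}$ at step $n$, and $q^* \geq P^n(\mathbf{0})$. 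Your per-SCC recurrence would deliver the same shape, but the clean exponent $2^n-1$ falls out for free from the global argument; the decomposition buys you nothing here.

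For the upper bound (2) there is a genuine gap, and it sits exactly where you located it: you never establish the a priori bound on $(q^*_S)_{\max}$ for a single strongly connected system, and the route you gesture at does not obviously close. The fixed-point inequalities $q^*_i \geq c_{\min} q^*_j q^*_k$ do constrain coordinates feeding quadratic monomials, but a \emph{linear} SCC (constant Jacobian $B$ with $\rho(B)<1$ arbitrarily close to $1$) has no self-limiting quadratic term, and $q^*_S = (I-B)^{-1}P_S(\mathbf{0})$ can then only be bounded through the denominator of $\det(I-B)$, i.e.\ through rationality of the coefficients. Once you substitute the irrational algebraic values $q^*_{D(S)}$ from lower SCCs, that rationality is gone, so the bit-size bookkeeping cannot be carried out SCC by SCC. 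The paper avoids this entirely by working globally and algebraically: it first proves (Lemma \ref{q*-isolated}, itself a nontrivial Perron--Frobenius argument split into linear and nonlinear strongly connected cases and then lifted through the DAG) that $q^*$ is an \emph{isolated} real solution of $x=P(x)$, then augments the system with the equation $x_k y = 1$ for the coordinate $k$ attaining $q^*_{\max}$, clears denominators to get integer coefficients, and invokes the explicit isolated-root magnitude bound of Hansen et al.\ (Theorem \ref{th:isol-real-root-bd-full}) to lower-bound $y = 1/q^*_{\max}$. The constant $5^n = (2d+1)^n$ with $d=2$ and the factor $2(n+1)(|P| + 2(n+1)\log(2n+2))$ come straight from that theorem, not from a per-level blow-up of base $5$. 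To salvage your approach you would need at minimum a finiteness/isolation certificate plus control on the algebraic degree and height of the lower-SCC values, which amounts to re-deriving the root bound.
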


\noindent {\bf How good are our upper bounds?}  In the
appendix we discuss how good our upper bounds on 
R-DNM are, and in what senses they are optimal, in light of
the convergence rate of Newton's method 
on known bad examples (\cite{lfppoly}), and considerations
relating to the size of $q^*_{\min}$ and $q^*_{\max}$.
In this way, our upper bounds can be seen to be essentially optimal
in several parameters, including the depth of SCCs in
the dependency graph of the MPS, and in terms of $\log \frac{1}{\epsilon}$.

\section{Strongly Connected Monotone Polynomial Systems}

\begin{theorem} \label{scmps} Let $P(x,y)$ be an  $n$-vector of 
monotone polynomials with degree $\leq 2$ in variables which are 
coordinates of the $n$-vector $x$ 
and the $m$-vector $y$, 
where $n \geq 1$ and $m \geq 1$.

Given non-negative $m$-vectors $y_1$ and $y_2$ such that 
$ 0 < y_1 \leq \mathbf{1}$ 
and $0 \leq y_2 \leq y_1$, let $P_1(x) \equiv P(x,y_1)$ 
and $P_2(x) \equiv P(x,y_2)$. 
Suppose that $x=P_1(x)$ is a strongly-connected MPS with LFP solution 
$0 < q^*_1 \leq \mathbf{1}$. \\
Let
 $\alpha = {\min} \{1,c_{\min}\} \mathrm{min } \{y_{\min}, \frac{1}{2}q^*_{\min}\}$, where 
$c_{\min}$ is the smallest non-zero constant or coefficient 
of any monomial in $P(x,y)$, 
where $y_{\min}$ is  
the minimum coordinate of $y_1$,
and finally where $q^*_{\min}$ is the minimum coordinate of $q^*_1$.
Then:

\begin{itemize}
\item[{\bf 1.}] The LFP solution of the MPS 
$x=P_2(x)$ is $q^*_2$ with $0 \leq q^*_2 \leq q^*_1$, and 
\begin{equation*}
\|q^*_1 - q^*_2\|_\infty \leq 
\sqrt{4n \alpha^{-(3n+1)} \|P(\textbf{1}, \textbf{1})\|_\infty \|y_1 - y_2\|_\infty}
\end{equation*} 
Furthermore, if $x=P_1(x)$ is a linear system, then:
$$ \|q^*_1 - q^*_2\|_\infty \leq 2n \alpha^{-(n+2)} \|P(\mathbf{1}, \mathbf{1})\|_\infty \|y_1 - y_2\|_\infty $$

\item[{\bf 2.}] Moreover, for every $0 < \epsilon < 1$,  
if we use $g \geq h-1$ iterations of rounded down
Newton's method
with parameter 
$$h \geq \lceil 
2 +  n \log \frac{1}{\alpha} + \log \frac{1}{\epsilon} \rceil$$
applied to the MPS, $x=P_2(x)$,
starting at $x^{[0]} := \mathbf{0}$,  
to approximate $q^*_2$, 
then the iterations are all defined, and
$\|q^*_2 - x^{[g]} \|_\infty \leq \epsilon$.\end{itemize}
\end{theorem}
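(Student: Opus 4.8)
Write $\eta := \|y_1 - y_2\|_\infty$, let $B_i(x) := \partial_x P(x, y_i)$ be the Jacobian of $P(\cdot, y_i)$ in the $x$-variables, and set $\delta := q^*_1 - q^*_2$. The two parts are almost independent, so I would treat them in turn. For Part~1, the preliminary observations are routine: since $0 \le y_2 \le y_1$ and $P$ has nonnegative coefficients, $P_2 \le P_1$ pointwise on $\real^n_{\geq 0}$, so value iteration gives $P_2^k(\mathbf 0) \le P_1^k(\mathbf 0) \le q^*_1 \le \mathbf 1$, and this bounded monotone sequence converges to the LFP $q^*_2$ with $0 \le q^*_2 \le q^*_1 \le \mathbf 1$. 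Next I would bound the \emph{residual} of $q^*_1$ for the perturbed system, $r := P_1(q^*_1) - P_2(q^*_1) = P(q^*_1,y_1) - P(q^*_1,y_2) \ge \mathbf 0$: since each $P_i$ has degree $\le 2$ with nonnegative coefficients and all arguments lie in $[0,1]$, the elementary inequality $\sum_k y_k\,\partial_{y_k}P_i \le 2P_i$ evaluated at $\mathbf 1$ bounds the relevant $y$-partials, giving $\|r\|_\infty \le 2\|P(\mathbf 1,\mathbf 1)\|_\infty\,\eta$. Thus $q^*_1$ is an approximate fixed point of $P_2$ with small residual, and the remaining task is a quantitative stability (``condition number'') estimate turning ``$\|r\|_\infty$ small'' into ``$\|\delta\|_\infty$ small''.

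For that estimate I would Taylor-expand $P_2$ about $q^*_1$ — exactly, as $P_2$ is quadratic — writing $H$ for its constant, entrywise-nonnegative Hessian tensor: $P_2(q^*_2) = P_2(q^*_1) - B_2(q^*_1)\delta + \tfrac12 H[\delta,\delta]$, hence $\delta = r + B_2(q^*_1)\delta - \tfrac12 H[\delta,\delta]$, i.e. $(I - B_2(q^*_1))\delta + \tfrac12 H[\delta,\delta] = r$. In the \emph{subcritical} case $\rho(B_2(q^*_1)) < 1$ the matrix $(I - B_2(q^*_1))^{-1}$ exists and is entrywise nonnegative, so $\delta \le (I-B_2(q^*_1))^{-1}r$; a bound $\|(I-B_2(q^*_1))^{-1}\| \le \alpha^{-O(n)}$ — valid because $P_1$ is strongly connected, so $B_1(q^*_1)$ is irreducible with $\rho(B_1(q^*_1)) \le 1$ and ``bounded away'' from $1$ in terms of $c_{\min}$ and $q^*_{\min}$, i.e. in terms of $\alpha$ — yields the \emph{linear} bound $\|\delta\|_\infty \le 2n\alpha^{-(n+2)}\|P(\mathbf 1,\mathbf 1)\|_\infty\,\eta$; this settles the case where $x=P_1(x)$ is linear, since there $B_1 = B_2$ is a constant matrix and finiteness of $q^*_1$ forces $\rho(B_1) < 1$. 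In the \emph{near-critical} case $\rho(B_2(q^*_1))$ may equal $1$ and $I - B_2(q^*_1)$ need not be invertible; instead I would pair the identity with a positive left-eigenvector $w$ of $B_1(q^*_1)$ (which exists by irreducibility), for which $w^\top(I - B_1(q^*_1)) \ge \mathbf 0$ and hence, using $B_2(q^*_1) \le B_1(q^*_1)$ and $\delta \ge \mathbf 0$, also $w^\top(I - B_2(q^*_1))\delta \ge 0$, so $\tfrac12 w^\top H[\delta,\delta] \le w^\top r \le \|w\|_1\|r\|_\infty$: the quadratic term must absorb the residual. Lower-bounding $w^\top H[\delta,\delta]$ by $\alpha^{O(n)}\|\delta\|_\infty^2$ — using strong connectivity to propagate the maximum coordinate of $\delta$ along a short path into a nonlinear monomial, together with the lower bounds $c_{\min}$ on coefficients and $\alpha^{O(n)}$ on $w_{\min}/w_{\max}$ and on the coordinates of $q^*_1$ — then gives the claimed square-root bound $\|\delta\|_\infty \le \sqrt{4n\alpha^{-(3n+1)}\|P(\mathbf 1,\mathbf 1)\|_\infty\,\eta}$. (The one-dimensional instance $P_1(x)=\tfrac12 x^2+\tfrac12$, $q^*_1=1$, where a constant-term perturbation of size $\varrho$ moves the LFP by $\sqrt{2\varrho}$, shows the square root is genuinely needed.)

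For Part~2 I would apply the convergence analysis of rounded Newton's method to $x = P_2(x)$, whose LFP satisfies $\mathbf 0 \le q^*_2 \le \mathbf 1$ by Part~1. After removing the coordinates with $q^*_{2,i}=0$ (Proposition~\ref{prop:zero-removal-in-p-time}) we may assume $q^*_2 > \mathbf 0$; rounded-down Newton from $\mathbf 0$ leaves those coordinates at $0$ anyway, since the exact iterates $\nu^{(k)} := \mathcal{N}_{P_2}^{(k)}(\mathbf 0)$ satisfy $\mathbf 0 \le x^{[k]} \le \nu^{(k)} \le q^*_2$ by monotonicity of the Newton operator on $[\mathbf 0,q^*_2]$ — which also shows the iterations are well-defined, as $I - B_2(x^{[k]})$ is nonsingular for $x^{[k]}$ lying below $q^*_2$. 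The quantitative core, building on \cite{lfppoly,ESY12} and the strongly-connected structure carried over from $P_1$, is that after an initial phase of length $O(n\log(1/\alpha))$ the \emph{exact} Newton iterates gain a bit per step: $q^*_2 - \nu^{(k)} \le 2^{-(k-O(n\log(1/\alpha)))}q^*_2 \le 2^{-(k-O(n\log(1/\alpha)))}\mathbf 1$. I would then track the rounding error: each step perturbs the Newton update by at most $O(n\,2^{-h})$, and since such a perturbation subsequently also decays geometrically along with $q^*_2 - \nu^{(k)}$, the accumulated rounding error stays $O(n\,2^{-h})$, so $\|q^*_2 - x^{[g]}\|_\infty \le \|q^*_2 - \nu^{(g)}\|_\infty + O(n\,2^{-h})$. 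Choosing $h \ge \lceil 2 + n\log\frac{1}{\alpha} + \log\frac{1}{\epsilon}\rceil$ makes the rounding term $\le \epsilon/2$, and $g \ge h-1$ iterations suffice to clear the initial phase and then accumulate the $\gtrsim \log(1/\epsilon)$ additional bits needed to make $\|q^*_2 - \nu^{(g)}\|_\infty \le \epsilon/2$, so $\|q^*_2 - x^{[g]}\|_\infty \le \epsilon$.

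\noindent\textbf{Main obstacle.} The crux is the near-critical case of Part~1 (and the matching bound on the initial-phase length in Part~2): when $\rho(B_1(q^*_1)) = 1$ the matrix $I - B_2(q^*_1)$ cannot be inverted, and extracting an \emph{explicitly quantified} square-root bound requires sharp, uniform lower bounds — of order $\alpha^{O(n)}$, which is where strong connectivity and the exponent $3n+1$ enter — on the Perron left-eigenvector of $B_1(q^*_1)$ and on the Hessian entries that $\delta$ can ``reach'', together with care that these estimates survive the passage from the controlled object $q^*_1$ to the only indirectly controlled $q^*_2$ (using solely $B_2 \le B_1$ and $q^*_2 \le q^*_1$). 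Feeding all of this through the rounded iteration of Part~2, while keeping the iterations well-defined and the final error bound clean, is the remaining technical work.
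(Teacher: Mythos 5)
Your plan for Part~1 in the nonlinear case takes a genuinely different route from the paper's, and it is a viable one. The paper never Taylor-expands at $q^*_1$; instead it uses the exact midpoint identity $P_1(q^*_1)-P_1(q^*_2)=B_1(\tfrac{1}{2}(q^*_1+q^*_2))(q^*_1-q^*_2)$, observes that the midpoint is \emph{strictly} below $q^*_1$ (so $\rho(B_1(\tfrac{1}{2}(q^*_1+q^*_2)))<1$ even when $\rho(B_1(q^*_1))=1$, and the inverse exists and is nonnegative), and then bounds $\|(I-B_1(\tfrac{1}{2}(q^*_1+q^*_2)))^{-1}\|_\infty$ by $2n\alpha^{-(3n+1)}/\|q^*_1-q^*_2\|_\infty$ via a Perron-vector argument --- the factor $\|q^*_1-q^*_2\|_\infty$ in the denominator is what produces the square root. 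Your pairing of $(I-B_2(q^*_1))\delta+\tfrac12 H[\delta,\delta]=r$ against a left Perron vector of $B_1(q^*_1)$ avoids inverting anything in the critical case, and with the ingredients you name ($w_{\min}/w_{\max}\geq\alpha^n$; $\delta_{\min}/\delta_{\max}\geq\alpha^n$, obtainable by applying the ratio lemma to $B_1(\tfrac{1}{2}(q^*_1+q^*_2))\delta\leq\delta$, not to $B_2$, whose irreducibility can fail since $y_2$ may have zero coordinates; and one quadratic monomial with coefficient $\geq c_{\min}$) it reproduces the exponent $3n+1$ exactly. Two corrections, though. The case split must be linear versus nonlinear, not subcritical versus near-critical: when $P_1$ is linear in $x$, $H=0$ and the pairing argument is vacuous, so there the linear bound (which dominates the square-root bound under the hypotheses) must carry the whole claim; and your stated justification for $\|(I-B_2(q^*_1))^{-1}\|\leq\alpha^{-O(n)}$ --- that $\rho(B_1(q^*_1))$ is ``bounded away from $1$ in terms of $\alpha$'' --- is false in general (it can equal $1$). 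The correct mechanism in the linear case is the identity $q^*_1=(I-B_1)^{-1}P_1(\mathbf{0})\leq\mathbf{1}$ together with $\|P_1(\mathbf{0})\|_\infty\geq\alpha^2$ fed into a quantitative inverse-norm lemma.

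In Part~2 there is a genuine gap: the claim that $I-B_2(x^{[k]})$ is nonsingular ``for $x^{[k]}$ lying below $q^*_2$'' is precisely what fails in the critical case. If $q^*_2=q^*_1$ and $\rho(B_1(q^*_1))=1$, then $I-B_2(q^*_2)$ is singular, so ``$x^{[k]}\leq q^*_2$'' does not suffice; one needs $x^{[k]}<q^*_2$ strictly, and the paper spends roughly a page on this (a lemma showing that for a nonlinear strongly connected system $z<q^*$ implies $\mathcal{N}_P(z)<q^*$, plus separate treatment of the linear case and of the case $q^*_2\neq q^*_1$, where $B_2(z)\leq B_1(q^*_2)\neq B_1(q^*_1)$ forces $\rho(B_2(z))<1$ by irreducibility). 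Your ``monotonicity of the Newton operator'' does not deliver this strictness. Separately, your rounding accounting is too optimistic: the accumulated rounding error is not $O(n2^{-h})$ in sup norm, because the per-step $2^{-h}$ perturbation contracts only in the Perron cone of $v$, and converting back to $\|\cdot\|_\infty$ costs $v_{\max}/v_{\min}\leq\alpha^{-n}$; the honest bound is $2^{-h+1}\alpha^{-n}$ (this is exactly how the paper's induction $q^*_2-x^{[k]}\leq(2^{-k}+2^{-h+1})\tfrac{1}{v_{\min}}v$ is organized, with no separate ``initial phase''). Your choice of $h$ happens to absorb the $\alpha^{-n}$ factor, but the derivation as written would not justify it.
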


Theorem \ref{scmps} and its proof are at the heart of this paper, but
unfortunately the proof is quite involved, and we have no
room to include it.
The proof is in the appendix.
The following easy corollary is also proved in \cite{SEY13}.

\begin{corollary} \label{cor:single-scmps} 
Let  $x=P(x)$ be a strongly connected MPS with $n$ 
variables, and with LFP $q^*$ where $0 < q^* \leq 1$.
Let
 $\alpha = {\min} \{1,c_{\min}\} \frac{1}{2}q^*_{\min}$, where 
$c_{\min}$ is the smallest non-zero constant or coefficient 
of any monomial in $P(x)$. 

Then for all $0 < \epsilon < 1$,  
if we use $g \geq h-1$ iterations of  R-NM 
with parameter 
$h \geq \lceil 2 +  n \log \frac{1}{\alpha} + \log \frac{1}{\epsilon} \rceil$
applied to the MPS, $x=P(x)$,
starting at $x^{[0]} := \mathbf{0}$, 
then the iterations are all defined, and
$\|q^* - x^{[g]} \|_\infty \leq \epsilon$.
\end{corollary}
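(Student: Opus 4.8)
The plan is to obtain Corollary~\ref{cor:single-scmps} as an immediate special case of Theorem~\ref{scmps}, by artificially adjoining a single dummy parameter variable so that the hypotheses of that theorem (which require $m \geq 1$ parameter variables $y$) become applicable. Concretely, given the strongly connected quadratic MPS $x = P(x)$ in $n$ variables with LFP $0 < q^* \leq \mathbf{1}$, I would introduce one fresh scalar variable $y$ (so $m = 1$) and define the $n$-vector $\hat{P}(x,y) \equiv P(x)$; that is, $\hat{P}$ simply ignores $y$. This $\hat{P}$ is still an $n$-vector of monotone polynomials of degree $\leq 2$ in the coordinates of $x$ and $y$ (degree $0$ in $y$), and its smallest nonzero monomial coefficient or constant term is exactly the $c_{\min}$ of $P$.

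Next I would instantiate the two parameter vectors of Theorem~\ref{scmps} as $y_1 := \mathbf{1}$ and $y_2 := \mathbf{1}$, which trivially satisfy $0 < y_1 \leq \mathbf{1}$ and $0 \leq y_2 \leq y_1$. Then $P_1(x) \equiv \hat{P}(x, y_1) \equiv P(x)$ and $P_2(x) \equiv \hat{P}(x, y_2) \equiv P(x)$, so $x = P_1(x)$ is exactly the given strongly connected MPS, with LFP $q^*_1 = q^*$ satisfying $0 < q^* \leq \mathbf{1}$; and since $P_2 \equiv P_1$ (equivalently, by part~1 of Theorem~\ref{scmps} applied with $\|y_1 - y_2\|_\infty = 0$), the LFP of $x = P_2(x)$ is $q^*_2 = q^*$. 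Here $y_{\min} = 1$, and because $q^* \leq \mathbf{1}$ we have $\frac{1}{2} q^*_{\min} \leq \frac{1}{2} < 1 = y_{\min}$, so $\min\{ y_{\min}, \frac{1}{2} q^*_{\min} \} = \frac{1}{2} q^*_{\min}$; hence the quantity $\alpha$ produced by Theorem~\ref{scmps} for this data equals $\min\{1, c_{\min}\} \cdot \frac{1}{2} q^*_{\min}$, which is precisely the $\alpha$ in the statement of the corollary.

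Finally, since $P_2 \equiv P$ as polynomial maps in $x$, the Jacobian $B(x)$ and the Newton operator $\mathcal{N}_{P_2}$ coincide with those of $P$, so the rounded-down Newton iterations for $x = P_2(x)$ started at $x^{[0]} := \mathbf{0}$ are literally the R-NM iterations for $x = P(x)$. Invoking part~2 of Theorem~\ref{scmps}: for every $0 < \epsilon < 1$, using $g \geq h - 1$ iterations of R-NM with parameter $h \geq \lceil 2 + n \log \frac{1}{\alpha} + \log \frac{1}{\epsilon} \rceil$ on $x = P_2(x)$, all iterations are defined and $\|q^*_2 - x^{[g]}\|_\infty \leq \epsilon$; since $q^*_2 = q^*$, this is exactly the claim $\|q^* - x^{[g]}\|_\infty \leq \epsilon$. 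There is no real obstacle here: the entire content of the corollary is carried by Theorem~\ref{scmps}, and the only points requiring (routine) care are that adjoining the unused variable $y$ changes neither $c_{\min}$, nor the Jacobian with respect to $x$, nor the resulting value of $\alpha$, and that the choice $y_1 = y_2 = \mathbf{1}$ makes the systems $x = P_1(x)$ and $x = P_2(x)$ identical, so that no approximation error is introduced.
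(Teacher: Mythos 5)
Your proof is correct and takes essentially the same route as the paper: both reduce the corollary to part~2 of Theorem~\ref{scmps} by adjoining a dummy scalar $y$ with $y_1 = y_2 = 1$ and checking that the resulting $\alpha$ coincides with the one in the corollary. The only cosmetic difference is that the paper makes $y$ appear by rewriting each constant $c$ as $cy$, whereas you leave $y$ entirely unused; either way $c_{\min}$, the Jacobian in $x$, and $\alpha$ are unchanged, so the argument goes through identically.
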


\section{General Monotone Polynomial Systems}

In this section, we use the rounded-down decomposed Newton's method (R-DNM),
to compute the LFP $q^*$ of general MPSs.  First we consider the case
where $0 < q^* \leq 1$:

\begin{theorem} \label{gen<=1} For all $\epsilon$, where $0 < \epsilon < 1$, 
if $x=P(x)$ is an MPS with LFP solution $0 < q^* \leq 1$, with $q^*_{\min} = \min_i q^*_i$, 
and the minimum non-zero coefficient or constant in $P(x)$ is $c_{\min}$, 
then rounded down decomposed Newton's method (R-DNM)  with parameter 
$$h \geq \left\lceil 3 + 2^f \cdot (\; \log (\frac{1}{\epsilon}) + d \cdot( \log (\alpha^{-(4n+1)}) + \log(16n) + \log (\| P({\mathbf 1}) \|_\infty)) \; ) \right\rceil$$
using $g \geq h-1$ iterations for every nonlinear SCC (and 1 iteration for linear SCC), 
gives an 
approximation $\tilde{q}$ to $q^*$ with $\tilde{q} \leq q^*$ and such that $\|q^* - \tilde{q}\|_\infty \leq \epsilon$.

Here $d$ denotes the maximum depth of SCCs in the DAG $H_P$ of SCCs of 
the MPS $x=P(x)$, 
$f$ is the nonlinear depth,
and $\alpha = {\min} \{1,c_{\min}\} \cdot \frac{1}{2}q^*_{\min}$.
\end{theorem}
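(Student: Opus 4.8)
The plan is to prove Theorem~\ref{gen<=1} by induction on the structure of the DAG $H_P$ of SCCs, working bottom-up, with the induction hypothesis tracking an error bound $\epsilon_j$ for an SCC at depth $j$ (where depth is measured by the longest path of SCCs below it), so that at the top we recover $\epsilon_d \le \epsilon$. First I would set up notation: for an SCC $S$, write its equations as $x_S = P_S(x_S, x_{D(S)})$, let $q^*_{D(S)}$ be the true LFP of the lower variables and $\tilde q_{D(S)}$ the computed approximation, let $q'_S$ be the LFP of $x_S = P_S(x_S, \tilde q_{D(S)})$, and let $\tilde q_S$ be the rounded Newton approximation to $q'_S$. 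The triangle inequality gives $\|q^*_S - \tilde q_S\|_\infty \le \|q^*_S - q'_S\|_\infty + \|q'_S - \tilde q_S\|_\infty$. The first term is a \emph{perturbation} error controlled by Theorem~\ref{scmps}(1) applied with $y_1 = q^*_{D(S)}$ and $y_2 = \tilde q_{D(S)}$ (note $\tilde q_{D(S)} \le q^*_{D(S)}$ by the monotone/rounded-down property, so the hypotheses $0 \le y_2 \le y_1$ hold, and $y_1 \le \mathbf 1$ since $q^* \le \mathbf 1$); the second term is the \emph{approximation} error of running $g \ge h-1$ rounded Newton iterations on $x_S = P_S(x_S,\tilde q_{D(S)})$, controlled by Theorem~\ref{scmps}(2), which bounds it by $2^{-(h-2-n\log(1/\alpha))}$ essentially, i.e. it can be made $\le \epsilon'$ for any $\epsilon'$ by taking $h \ge \lceil 2 + n\log(1/\alpha) + \log(1/\epsilon')\rceil$.

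Next I would assemble the recursion. Suppose inductively that all SCCs below $S$ have been approximated to within $\eta$ in the $\infty$-norm, so $\|y_1 - y_2\|_\infty = \|q^*_{D(S)} - \tilde q_{D(S)}\|_\infty \le \eta$. For a nonlinear $S$, Theorem~\ref{scmps}(1) gives $\|q^*_S - q'_S\|_\infty \le \sqrt{4n\,\alpha^{-(3n+1)}\|P(\mathbf 1,\mathbf 1)\|_\infty\, \eta}$, which is the source of the square root and hence of the $2^f$ blow-up: if $\eta = 2^{-t}$ then this perturbation error is roughly $2^{-t/2}$ times a polynomial-in-input factor, so each nonlinear level roughly halves the number of good bits (after accounting for the multiplicative constant, which contributes the additive $\log(16n) + \log\|P(\mathbf 1)\|_\infty + \log\alpha^{-(4n+1)}$ term per level). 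For a linear $S$ the bound is linear in $\eta$, so linear levels do not halve the bit count — this is why the exponent is the \emph{nonlinear} depth $f$ rather than the full depth $d$, while the additive per-level loss is incurred at all $d$ levels. Solving the recursion $t_{j} \ge \tfrac12(t_{j-1} - C)$ (nonlinear) or $t_j \ge t_{j-1} - C$ (linear) with the root requirement $t_d \ge \log(1/\epsilon)$, and using $g \ge h-1$ to absorb the Newton approximation error at each SCC into the same budget, yields the stated $h \ge \lceil 3 + 2^f(\log(1/\epsilon) + d(\log\alpha^{-(4n+1)} + \log(16n) + \log\|P(\mathbf 1)\|_\infty))\rceil$.

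There are two technical points I would be careful about. First, I must verify that the hypotheses of Theorem~\ref{scmps} genuinely hold at every SCC: strong-connectedness of $x_S = P_S(x_S, q^*_{D(S)})$ holds by construction of the SCC decomposition; $0 < q^*_S \le \mathbf 1$ holds because the whole $q^* \le \mathbf 1$ and $q^* > 0$ after cleaning; and I need $\tilde q_{D(S)} > 0$ coordinatewise (not just $\ge 0$) for the definition of $\alpha$ to behave — but more precisely Theorem~\ref{scmps} only needs $y_1 > 0$, and it uses $y_{\min}$ (of $y_1 = q^*_{D(S)}$) in the definition of $\alpha$, so I should check that the global $\alpha = \min\{1,c_{\min}\}\cdot\tfrac12 q^*_{\min}$ is a valid (smaller) substitute for the per-SCC $\alpha$, which follows since $\tfrac12 q^*_{\min} \le \min\{y_{\min}, \tfrac12 (q^*_S)_{\min}\}$. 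Second, I must ensure all Newton iterations are well-defined throughout: this is exactly the conclusion of Theorem~\ref{scmps}(2) (the iterations are all defined), applied at each SCC with its plugged-in lower values, provided the rounding parameter $h$ meets the stated lower bound there, which it does since our global $h$ is larger. The main obstacle is bookkeeping the recursion cleanly — tracking how the square-root perturbation bound compounds across nonlinear levels while the additive constants compound across all levels, and pinning down the exact constants (the $3$, the $4n+1$ versus $3n+1$, the $16n$) so that the final closed-form $h$ strictly dominates every per-SCC requirement — rather than any conceptual difficulty, since Theorem~\ref{scmps} does all the heavy analytic lifting.
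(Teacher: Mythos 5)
Your proposal is correct and follows essentially the same route as the paper: the paper's proof is an induction on the height of each SCC establishing $\|q^*_S - \tilde q_S\|_\infty \le \beta^{h_S}\delta^{2^{-f_S}}$ (with $\beta = 16n\,\alpha^{-(3n+1)}\|P(\mathbf 1)\|_\infty$ and $\delta = (\epsilon/\beta^d)^{2^f}$), which is exactly the closed form of the bit-budget recursion you describe, split via the same triangle inequality into the perturbation term from Theorem~\ref{scmps}(1) (square-root for nonlinear SCCs, linear for linear SCCs) and the rounded-Newton term from Theorem~\ref{scmps}(2). Your side checks (that $\tilde q_{D(S)} \le q^*_{D(S)}$, that the global $\alpha$ lower-bounds each per-SCC $\alpha$, and that the global $h$ dominates each per-SCC requirement) are precisely the verifications the paper carries out.
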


Before proving the theorem, let us note that we can obtain worst-case expressions for the needed  number of iterations $g = h-1$,
and the needed rounding parameter $h$, in terms of only $f \leq d \leq n \leq |P|$, and $\epsilon$,  
by noting that $\log(\| P({\mathbf 1})\|_\infty) \leq |P|$, and by
appealing to Theorem \ref{thm:q-bounds} 
to remove references to 
$q^*_{\min}$ in the bounds.
Noting that $c_{\min} \geq 2^{-|P|}$,   
these tell us that $\min \{1,c_{\min}\} \frac{1}{2} q^*_{\min} \geq 2^{-|P|2^n - 1}$. 
Substituting, we obtain that any:

\begin{equation}g \geq \left\lceil 2 + 2^f \cdot (\; \log (\frac{1}{\epsilon}) +  
 d \cdot (|P|2^n(4n+1) + (4n+1) + \log(16n) + |P|) \; ) \right\rceil
\label{eq:bounds-forq<=1-in-terms-of-P-and-eps}
\end{equation}
iterations suffice in the worst case, with rounding parameter $h = g + 1$.
Thus, for $i = \log(1/\epsilon)$ bits of precision,
$g= k_P + c_P \cdot i$ iterations suffice, where $c_P= 2^f$ and
$k_P = O( 2^f 2^n n d |P|)$, with tame constants in the big-O.

\begin{proof}[of Theorem \ref{gen<=1}]
For every SCC $S$, its {\em height} $h_S$ (resp. {\em nonlinear height} $f_S$) 
is the maximum over all paths of the DAG $H_P$ of SCCs starting at $S$, of the number of
SCCs (resp. nonlinear SCCs) on the path. 
We show by induction on the height $h_S$ of each SCC $S$
that $\|q^*_S - \tilde{q}_S\|_\infty \leq \beta^{h_S}  \delta^{2^{-f_S}}$ where $\beta = 
16n \alpha^{-(3n+1)} \|P(\mathbf{1})\|_\infty$ and $\delta = (\frac{\epsilon}{\beta^d})^{2^f}$.
Note that since $n \geq 1$, $\epsilon < 1$, and $\alpha \leq c_{\min}$,
we have $\beta \geq 1$  and $\delta \leq 1$,
and thus also $\delta \leq \sqrt{\delta}$.

Let us first check that this would imply the theorem.   
For all SCCs, $S$, we have $1 \leq h_S \leq d$ and $0 \leq f_S \leq f$, and  thus $\| q^*_S - \tilde{q}_S\|_{\infty} \leq  
\beta^{h_S}  \delta^{2^{-f_S}} \leq \beta^d \delta^{2^{-f}} = \beta^d (\frac{\epsilon}{\beta^d}) = \epsilon$.

We note that $h$ is related to $\delta$ by the following:
\begin{equation} h \geq 2 + n \log \frac{1}{\alpha} + \log \frac{2}{\delta} \label{eq:h-boundb} \end{equation}
This is because $\log \frac{2}{\delta} = 1 +  \log \frac{1}{\delta} = 1 + 2^f(\log \frac{1}{\epsilon} + d\log \beta) =
1+ 2^f(\log (\frac{1}{\epsilon}) + d\log(16n \alpha^{-3n+1} \|P(\mathbf{1})\|_\infty))$.
Note that (3) implies that this inequality holds also for any subsystem
of $x=P(x)$ induced by a SCC $S$ and its successors $D(S)$ because the parameters $n$ and $1/\alpha$ for a subsystem are no larger
than those for the whole system.

We now prove by induction on $h_S$ that
$\|q^*_S - \tilde{q}_S\|_\infty \leq \beta^{h_S}  \delta^{2^{-f_S}}$.

In the base case,  $h_S =1$, we have a strongly connected MPS $x_S = P_S(x)$.
If $S$ is linear, we solve the linear system exactly and then round down to a multiple of $2^{-h}$. Then $f_S = 0$, and we have to show $\|q^*_S - \tilde{q}_S\|_\infty \leq  \beta^{h_S}  \delta^{2^{-f_S}} = \beta \delta$. 
But $\|q^*_S - \tilde{q}_S\|_\infty \leq 2^{-h} \leq \frac{\delta}{2} \leq \beta\delta$.

For the base case where $S$ in non-linear, equation \ref{eq:h-boundb}
and Corollary \ref{cor:single-scmps} imply that
$\|q^*_S - \tilde{q}_S\|_\infty \leq \frac{\delta}{2} $,
which implies the claim since $\delta \leq 1$ and $\beta \geq 1$,
hence $\frac{\delta}{2} \leq \beta^{h_S} \delta^{2^{-f_S}} = \beta^{1} \delta^{2^{-1}}$.

Inductively, 
consider an SCC $S$ with $h_S > 1$.  Then $S$ depends only on SCCs with height at most $h_S-1$. If $S$ is linear, it depends on SCCs of nonlinear depth at most $f_{D(S)} = f_S$, whereas if $S$ is non-linear, it depends on SCCs of nonlinear depth at most $f_{D(S)} = f_S - 1$. We can assume by inductive hypothesis that 
$\|q^*_{D(S)} - \tilde{q}_{D(S)}\|_\infty \leq \beta^{h_S-1}  \delta^{2^{-f_{D(S)}}}$. Take $q'_S$ to be the LFP of $x_S = P_S(x_S, \tilde{q}_{D(S)})$.

Suppose $x_S = P_S(x_S, q^*_{D(S)})$ is linear in $x_S$. 
Then Theorem \ref{scmps} 
with $y_1 := q^*_{D(S)}$ and $y_2 := \tilde{q}_{D(S)}$, yields
 $$\|q^*_S - q'_S\|_\infty \leq 2n_S \alpha^{-(n_S+2)} \|P(\mathbf{1}, \mathbf{1})\|_\infty \|q^*_{D(S)} - \tilde{q}_{D(S)} \|_\infty$$
But $2n_S \alpha^{-(n_S+2)} \|P(\mathbf{1}, \mathbf{1})\|_\infty \leq \frac{\beta}{2}$, so
$\|q^*_S - q'_S\|_\infty \leq   \frac{\beta}{2} \|q^*_{D(S)} - \tilde{q}_{D(S)} \|_\infty 
 \leq  \frac{\beta}{2}  \beta^{h_S-1} \delta^{2^{-f_S}}=  \frac{1}{2}\beta^{h_S}  \delta^{2^{-f_S}}$. 
Since $\|q'_S-\tilde{q}_S\|_\infty \leq 2^{-h} \leq \frac{\delta}{2} \leq \frac{1}{2}\beta^{h_S}  \delta^{2^{-f_S}}$, it follows that 
$\|q^*_S - \tilde{q}_S\|_\infty \leq \beta^{h_S}  \delta^{2^{-f_S}}$.

Suppose that  $x_S = P_S(x_S, q^*_{D(S)})$ is non-linear in $x_S$. 
Theorem \ref{scmps}, with $y_1 := q^*_{D(S)}$ and $y_2 := \tilde{q}_{D(S)}$, yields that 
\begin{equation}
\|q^*_S - q'_S\|_\infty \leq \sqrt{4n \alpha^{-(3n+1)} \|P(\mathbf{1})\|_\infty \|q^*_{D(S)} - (\tilde{q})_{D(S)}\|_\infty}
\label{eq:bound-induct-decomposed-ineqb}
\end{equation}
Note that the $\alpha$ from Theorem \ref{scmps} is indeed the same or better (i.e., bigger) than the $\alpha$ in this Theorem, because 
$y_{\min} = (q^*_{D(S)})_{\min} \geq q^*_{\min}$ and $(q^*_S)_{\min} \geq q^*_{\min}$. 
Rewriting (\ref{eq:bound-induct-decomposed-ineqb}) 
in terms of $\beta$, we have  $\|q^*_S - q'_S\|_\infty \leq \sqrt{ \frac{1}{4}\beta \|q^*_{D(S)} - (\tilde{q})_{D(S)}\|_\infty}$.
By inductive assumption, 
$\|q^*_{D(S)} - \tilde{q}_{D(S)}\|_\infty \leq \beta^{h_S-1}  \delta^{2^{-f_S+1}}$, and thus
$\|q^*_S - q'_S\|_\infty \leq \sqrt{ \frac{1}{4} \beta^{h_S}  \delta^{2^{1-f_S}} } \leq \frac{1}{2} \beta^{h_S} \delta^{2^{-f_S}}$. 
Thus to
 show that the inductive hypothesis holds also for SCC $S$, it suffices 
to show that for the approximation $\tilde{q}_S$ we have $\|q'_S - \tilde{q}_S\|_\infty \leq \frac{1}{2} \beta^{h_S} \delta^{2^{-f_S}}$. 
But $\beta \geq 1$, $h_S \geq 1$,  $2^{-f_S} \leq 1$ and $\delta \leq 1$, so 
$\frac{1}{2}  \delta \leq \frac{1}{2} \beta^{h_S} \delta^{2^{-f_S}}$, so
it suffices to show that 
$\|q'_S - \tilde{q}_S\|_\infty \leq \frac{1}{2}  \delta$.
Part 2 of Theorem \ref{scmps} tells us that we will have
$\|q'_S - \tilde{q}_S\|_\infty \leq \frac{1}{2}  \delta$
if $g \geq h-1$ and $h \geq 2 + n \log \frac{1}{\alpha} + \log \frac{2}{\delta}$. 
But we have already established this in equation (\ref{eq:h-boundb}),
hence the claim follows.

\qed   \end{proof}

Next, we want to generalize Theorem \ref{gen<=1} to arbitrary MPSs that have an LFP, $q^* > 0$,  without the restriction
that $0 < q^* \leq 1$.
The next Lemma allows us to establish this by a suitable ``rescaling'' of any MPS which has an LFP $q^* > 0$.
If $x=P(x)$ is a MPS and $c >0$, we can consider the MPS $x=\frac{1}{c}P(cx)$. 

\begin{lemma}  \label{scale} Let $x=P(x)$ be a  MPS with LFP solution $q^*$, 
and with Jacobian $B(x)$, and recall that for $z \geq 0$, 
${\mathcal N_P}(z) := 
z + (I-B(z))^{-1}(P(z)-z)$
denotes the Newton operator applied at $z$ on $x=P(x)$.
Then:
\begin{itemize}
\item[(i)] The LFP solution of $x=\frac{1}{c}P(cx)$ is $\frac{1}{c}q^*$.
\item[(ii)] The Jacobian of $\frac{1}{c}P(cx)$ is $B(cx)$.
\item[(iii)] A Newton iteration of the ``rescaled'' MPS, $x=\frac{1}{c}P(cx)$, 
applied to the vector $z$ is given by $\frac{1}{c}{\mathcal N_P}(cz)$.
\end{itemize}
\end{lemma}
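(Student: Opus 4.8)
The plan is to verify the three claims by direct computation, treating the rescaling map $x \mapsto cx$ as a linear change of coordinates and tracking how it interacts with the fixed-point equation, the Jacobian, and the Newton operator. Write $R(x) := \frac{1}{c}P(cx)$ for the rescaled system. For part (i), the key observation is that $a$ is a fixed point of $R$ iff $ca$ is a fixed point of $P$: indeed $R(a) = a \iff \frac{1}{c}P(ca) = a \iff P(ca) = ca$. Moreover the map $a \mapsto ca$ is an order-isomorphism of $\real^n_{\geq 0}$ (since $c>0$), so it carries the least fixed point to the least fixed point; hence the LFP of $x = R(x)$ is $\frac{1}{c}q^*$. (Alternatively, one can argue via value iteration: a straightforward induction shows $R^k(\mathbf{0}) = \frac{1}{c}P^k(\mathbf{0})$ for all $k$, and then take limits using the monotone convergence established in \cite{rmc}.)

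For part (ii), apply the chain rule coordinatewise. With $R_i(x) = \frac{1}{c}P_i(cx)$, we get $\frac{\partial R_i}{\partial x_j}(x) = \frac{1}{c}\cdot\frac{\partial P_i}{\partial x_j}(cx)\cdot c = \frac{\partial P_i}{\partial x_j}(cx) = B(cx)_{ij}$, so the Jacobian of $R$ at $x$ is exactly $B(cx)$. For part (iii), let $\mathcal{N}_R$ denote the Newton operator of the rescaled system. Using part (ii), $I - B_R(z) = I - B(cz)$, which is nonsingular exactly when $I - B(cz)$ is, and then
\[
\mathcal{N}_R(z) = z + (I - B(cz))^{-1}(R(z) - z) = z + (I - B(cz))^{-1}\Big(\tfrac{1}{c}P(cz) - z\Big).
\]
Factoring out $\frac{1}{c}$ from the parenthesized term gives $\frac{1}{c}(P(cz) - cz)$, so $\mathcal{N}_R(z) = \frac{1}{c}\big(cz + (I - B(cz))^{-1}(P(cz) - cz)\big) = \frac{1}{c}\mathcal{N}_P(cz)$, as claimed.

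None of the three steps presents a real obstacle; this is essentially a bookkeeping lemma. The only mild subtlety worth stating carefully is the LFP claim in (i): one should note explicitly either that scaling by $c>0$ preserves the coordinatewise order (so it sends infima to infima and the least fixed point to the least fixed point), or equivalently invoke the value-iteration characterization $R^k(\mathbf{0}) = \frac{1}{c}P^k(\mathbf{0})$ to transfer the monotone convergence from \cite{rmc}. Everything else is the chain rule and algebraic rearrangement.
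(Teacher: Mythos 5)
Your proposal is correct and matches the paper's proof: part (ii) is the same chain-rule computation and part (iii) the same algebraic rearrangement, while for part (i) the paper uses exactly the value-iteration argument you offer as an alternative (your primary order-isomorphism argument for (i) is an equally valid, essentially equivalent observation).
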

\begin{proof} From \cite{rmc}, we know that the value iteration sequence  
$P(0)$, $P(P(0))$, $P(P(P(0))) \ldots P^k(0)$ converges to $q^*$. 
Now note that
for the MPS $x=\frac{1}{c}P(cx)$, the value iteration sequence 
is $\frac{1}{c}P(0)$, $\frac{1}{c}P(c \frac{1}{c}P(0)) = \frac{1}{c}P(P(0))$,  $\frac{1}{c}P(P(P(0)))$... which thus converges to 
$\frac{1}{c}q^*$. This establishes (i).

For (ii), note that, by the chain rule in multivariate
calculus (see, e.g., \cite{apostol74} Section 12.10), the Jacobian of $P(cx)$ is $cB(cx)$.
Now (iii) follows because:
\begin{equation*}
z + (I-B(cz))^{-1}( \frac{1}{c}P(cz) - z) = \frac{1}{c} (cz + (I-B(cz))^{-1}(P(cz)-cz)) = \frac{1}{c}{\mathcal N_P}(cz).
\label{eq:rescaled-newton-iterate}
\end{equation*}
\qed   \end{proof}

We use Lemma \ref{scale}  to generalise Theorem \ref{gen<=1} to MPSs with LFP $q^*$, where $q^*$ does
not satisfy $q^* \leq 1$.

\begin{theorem} \label{gen}
If $x=Q(x)$ is an MPS with $n$ variables, with LFP solution $q^* > 0$,  
if $c'_{\min}$ is the least positive coefficient of any monomial in $Q(x)$, 
then R-DNM 
with rounding parameter $h'$, and using $g'$ iterations per nonlinear SCC
(and one for linear), gives an approximation 
$\tilde{q}$ such that $\|q^* - \tilde{q}\|_\infty \leq \epsilon'$, where
$$g'= 2 + \lceil \; 2^f \cdot (\log (\frac{1}{\epsilon'}) + d \cdot (2u   +  \log (\alpha'^{-(4n+1)}) + \log(16n) + \log (\| Q({\mathbf 1}) \|_\infty)) \; ) \ \rceil $$ 
and 
$h' = g'+1 - u$, 
where $u = \max \{ 0, \lceil \log q^*_{\max} \rceil \}$,  $d$ is the maximum depth of SCCs in the DAG $H_Q$ of SCCs of 
$x=Q(x)$, $f$ is the nonlinear depth, and $\alpha' = 2^{-2u} \min \{1,c'_{\min}\} \min \{1, \frac{1}{2}q^*_{\min}\}$.
\end{theorem}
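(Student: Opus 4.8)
The plan is to reduce Theorem~\ref{gen} to the already-established Theorem~\ref{gen<=1} by rescaling the MPS so that its LFP lies in $(0,1]^n$. Concretely, set $c := 2^u$ where $u = \max\{0, \lceil \log q^*_{\max}\rceil\}$, and consider the rescaled system $x = P(x) := \frac{1}{c} Q(cx)$. By Lemma~\ref{scale}(i) its LFP is $q^*_P := \frac{1}{c} q^*$, and since $c \geq q^*_{\max}$ we get $0 < q^*_P \leq 1$, so Theorem~\ref{gen<=1} applies to $x = P(x)$. First I would verify that running R-DNM on $x = Q(x)$ with rounding parameter $h'$ corresponds, via Lemma~\ref{scale}(iii), to running R-DNM on $x = P(x)$ with rounding parameter $h := h' + u$ (rounding $Q$'s iterates to multiples of $2^{-h'}$ is the same as rounding $P$'s iterates, which are $1/c$ times $Q$'s, to multiples of $2^{-h'}/c = 2^{-(h'+u)}$), and that the DAG of SCCs, the depth $d$, and the nonlinear depth $f$ are unchanged by rescaling since $G_P = G_Q$. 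So if the rescaled R-DNM produces $\tilde{q}_P$ with $\|q^*_P - \tilde{q}_P\|_\infty \leq \epsilon$, then $\tilde{q} := c\,\tilde{q}_P$ satisfies $\|q^* - \tilde{q}\|_\infty \leq c\,\epsilon = 2^u \epsilon$, so to get final error $\epsilon'$ we should target $\epsilon := \epsilon'/2^u = \epsilon' 2^{-u}$ in the rescaled system.

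Next I would track how the parameters of Theorem~\ref{gen<=1} transform under the rescaling. The minimum coordinate of $q^*_P$ is $q^*_{\min}/c = 2^{-u} q^*_{\min}$, and the least positive coefficient $c_{\min}$ of $P(x) = \frac{1}{c}Q(cx)$ relates to $c'_{\min}$ of $Q$: a monomial $c'_{i,\alpha} x^\alpha$ of total degree $|\alpha| \in \{0,1,2\}$ becomes $c'_{i,\alpha} c^{|\alpha|-1} x^\alpha$, so coefficients get multiplied by $c^{|\alpha|-1} \in \{c^{-1}, 1, c\}$; hence $c_{\min} \geq c'_{\min} c^{-1} = 2^{-u} c'_{\min}$ (in the worst case, for a quadratic monomial $x_j x_k$ with coefficient $1$, after SNF we actually just have coefficient $1$, but to be safe I use the degree-$2$ case only when a genuine quadratic-coefficient monomial is present; in SNF form Form$_*$ monomials $x_j x_k$ have coefficient $1$ and rescale to $\frac{1}{c} x_j x_k$, giving coefficient $1/c = 2^{-u}$). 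Thus $\min\{1, c_{\min}\} \geq 2^{-u}\min\{1, c'_{\min}\}$, and $\min\{1, \tfrac12 q^*_{P,\min}\} \geq 2^{-u}\min\{1, \tfrac12 q^*_{\min}\}$, so the parameter $\alpha$ of Theorem~\ref{gen<=1} for the rescaled system satisfies $\alpha \geq 2^{-2u}\min\{1,c'_{\min}\}\min\{1,\tfrac12 q^*_{\min}\} = \alpha'$. Also $\|P(\mathbf{1})\|_\infty = \|\frac1c Q(c\mathbf{1})\|_\infty$, and since $c \geq 1$ and $Q$ is monotone with degree $\leq 2$, $\frac1c Q(c\mathbf 1) \leq \frac1c \cdot c^2 Q(\mathbf 1) = c\, Q(\mathbf 1)$, giving $\log\|P(\mathbf 1)\|_\infty \leq u + \log\|Q(\mathbf 1)\|_\infty$. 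Plugging these into the bound of Theorem~\ref{gen<=1}, replacing $\log\frac1\epsilon$ by $\log\frac1{\epsilon'} + u$, $\log\frac1\alpha$ by $\log\frac1{\alpha'}$, and $\log\|P(\mathbf 1)\|_\infty$ by $u + \log\|Q(\mathbf 1)\|_\infty$, yields a value of $h$ (hence $g = h-1$, or $g' = g + $ a small constant) that is bounded by the displayed $g'$; the extra $2u$ inside the $d\cdot(\cdots)$ term absorbs both the $u$ from $\epsilon$ and the $u$ from $\|Q(\mathbf 1)\|_\infty$ (and, if needed, covers the discrepancy between $\alpha'^{-(4n+1)}$ accounting and the $\alpha^{-(3n+1)}$ appearing in $\beta$), and $h' = g' + 1 - u$ matches $h = g+1$ shifted down by $u$ as required for the unscaled rounding.

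The main obstacle I expect is the bookkeeping around the rounding parameter: I must make sure that rounding the unscaled iterates of $x=Q(x)$ to $2^{-h'}$ never produces a \emph{negative} result and keeps all Newton iterations well-defined, and that the translation to $2^{-(h'+u)}$-rounding on the rescaled system is exact rather than merely approximate — the round-down-to-nonnegative-multiple operation commutes with multiplication by the power of two $2^{-u}$, so this should go through cleanly, but it needs to be stated carefully. A secondary subtlety is that $u$ could be $0$ (when $q^*_{\max} \leq 1$), in which case $c = 1$ and the theorem degenerates exactly to Theorem~\ref{gen<=1}; the $\max\{0,\cdot\}$ in the definition of $u$ handles this, and one should check the formulas remain valid (they do, since all the $2^u$ and $2^{2u}$ factors become $1$). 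Finally I would double-check the monotonicity inequality $\frac1c Q(c\mathbf 1) \leq c\,Q(\mathbf 1)$ and the coefficient-scaling inequality hold in the precise SNF setting we have reduced to, since those are the only places the degree-$\leq 2$ assumption is used; with those in hand, the theorem follows by applying Theorem~\ref{gen<=1} to $x = \frac1c Q(cx)$ and scaling the output back up by $c$.
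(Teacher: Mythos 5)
Your proposal is correct and follows essentially the same route as the paper's proof: rescale by $c=2^u$ via Lemma~\ref{scale}, apply Theorem~\ref{gen<=1} to $x=2^{-u}Q(2^ux)$ with target error $\epsilon=2^{-u}\epsilon'$, track how $q^*_{\min}$, $c_{\min}$, and $\|P(\mathbf 1)\|_\infty$ transform (yielding $\alpha'$ and the extra $2u$ term), and show that R-DNM on $x=Q(x)$ with parameter $h'=h-u$ produces exactly $2^u$ times the iterates of the rescaled system because round-down to a multiple of a power of two commutes with scaling by $2^u$. (One cosmetic slip: a quadratic monomial's coefficient is multiplied by $2^{u}$, not $2^{-u}$, under the rescaling; this does not affect your bound since the binding case is the constant term.)
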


We can again obtain worst-case expressions for the needed  
number of iterations $g'$,
and the needed rounding parameter $h'$, in terms of only $f \leq d \leq n \leq |Q|$, and $\epsilon'$,  
by noting that $\log(\| Q({\mathbf 1}) \|_\infty) \leq |Q|$ and by
appealing to Theorem \ref{thm:q-bounds} 
to remove references to 
$q^*_{\min}$ and $q^*_{\max}$ in the bounds.  
Substituting and simplifying we get that to guarantee 
additive error at most $\epsilon'$, i.e. for $i = \log(1/\epsilon')$
bits of precision, it suffices in the worst-case to
apply $g'=k_Q + c_Q \cdot i$ iterations of R-DNM with rounding
parameter $h'=g'+1$ 
(which is more accurate rounding than $h'=g'+1-u$),
where $c_Q = 2^f$, and
$k_Q = O(2^f 5^n n^2 d (|Q|+n \log n))$
(and we can calculate precise, tame, constants for the big-O expression).

\begin{corollary}  If $x=P(x)$ is an MPS with LFP solution $q^*$  with $0 < q^*_{\min} \leq q^*_i \leq q^*_{\max} $ for all $i$, with the least coefficient of any monomial in $P(x)$, $c_{\min}$, with  $f$  the nonlinear depth of the DAG of SCCs of $x=P(x)$ and with encoding size $|P|$ bits, we can compute an approximation $\tilde{q}$ to $q^*$ with $\|q^* - \tilde{q}\|_\infty \leq \epsilon$, for any given $0 < \epsilon \leq 1$, in 
time polynomial in $|P|$,$2^f$, $\log \frac{1}{\epsilon}$,$\log \frac{1}{q^*_{\min}}$ and $\log q^*_{\max}$.\end{corollary}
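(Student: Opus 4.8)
The plan is to obtain this corollary essentially for free from Theorem~\ref{gen}, by unpacking its parameter bounds and accounting for the cost of each iteration in the bit model. First I would invoke Propositions~\ref{prop:snf-form} and \ref{prop:zero-removal-in-p-time} to reduce, in P-time, to the case where $x=P(x)$ is a cleaned quadratic MPS with $q^*>0$; this changes the number of variables $n$, $\log(1/c_{\min})$ and $\log\|P(\mathbf 1)\|_\infty$ by at most polynomial factors, and --- crucially --- the quadratic hypothesis is what keeps $q^*_{\min}$ and $q^*_{\max}$ (hence also $\log(1/q^*_{\min})$ and $\log q^*_{\max}$) from blowing up under the transformation. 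Then I would apply Theorem~\ref{gen} with $\epsilon':=\epsilon$, which already supplies an explicit rounding parameter $h'$ and per-SCC iteration count $g'$ (one iteration for linear SCCs) guaranteeing $\tilde q\le q^*$ and $\|q^*-\tilde q\|_\infty\le\epsilon$. (Theorem~\ref{gen} already absorbs the rescaling of Lemma~\ref{scale} and the error propagation across the SCC DAG into its parameters.)

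It then remains to verify two routine points. (a)~\emph{The parameters are polynomial in the claimed quantities.} Using $f\le d\le n\le|P|$, $\log\|P(\mathbf 1)\|_\infty\le|P|$, $\log(1/c_{\min})\le|P|$, $u=O(1+\log q^*_{\max})$ and $\log(1/\alpha')=O\bigl(u+\log(1/c_{\min})+\log(1/q^*_{\min})\bigr)$, one reads off directly from the formulas of Theorem~\ref{gen} that $g'$ and $h'$ are each bounded by a polynomial in $|P|$, $2^f$, $\log(1/\epsilon)$, $\log(1/q^*_{\min})$ and $\log q^*_{\max}$, and that the total number of Newton iterations carried out over the whole DAG $H_P$ of SCCs is at most $n\cdot g'$. (b)~\emph{Each R-NM iteration is P-time in the Turing model.} Because $P$ is quadratic, every iterate $x^{[k]}$ has coordinates that are multiples of $2^{-h'}$ of magnitude $O(q^*_{\max})$, so $x^{[k]}$, the vector $P(x^{[k]})-x^{[k]}$ and the matrix $I-B(x^{[k]})$ all have entries of bit-length $O(h'+|P|+\log q^*_{\max})$; the linear system $(I-B(x^{[k]}))\,\Delta=P(x^{[k]})-x^{[k]}$ can then be solved exactly over $\rat$ and rounded down in polynomial time, keeping all intermediate numerators and denominators of bit-length $O\bigl(n(h'+|P|+\log q^*_{\max})\bigr)$ via the Hadamard bound on minors (fraction-free Gaussian elimination) --- this is exactly the bit-model analysis already carried out for the undecomposed rounded Newton method in \cite{ESY12}. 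Multiplying the per-iteration cost by the $O(ng')$ iterations and adding the P-time preprocessing gives the claimed polynomial bound.

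I do not expect (a) or (b) to present any real difficulty: all the substance sits in Theorems~\ref{scmps} and \ref{gen}, and here one is merely assembling those bounds with standard bit-model bookkeeping. The one slightly delicate point is that $g'$ and $h'$ in Theorem~\ref{gen} are stated in terms of $q^*_{\min}$ and $q^*_{\max}$, which are not known up front; I would handle this by running R-DNM under a geometrically growing guess $N=1,2,4,\dots$ for a single parameter dominating $\log(1/q^*_{\min})$, $\log q^*_{\max}$ and the $|P|$-dependent terms --- by Theorem~\ref{gen} the output becomes correct once $N$ passes its (polynomially bounded) true value, and the work accumulated up to that point is still polynomial in the claimed quantities; Theorem~\ref{thm:q-bounds} supplies, if one wants it, an unconditional terminating fallback. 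In the target application to p1CAs and QBDs (Section~5), explicit bounds on $q^*_{\min}$ and $q^*_{\max}$ are anyway available from \cite{EWY-qbd-2010}, so even this point does not arise there.
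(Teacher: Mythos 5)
Your proposal is correct and follows essentially the same route as the paper's own (very brief) proof: clean the system in P-time, invoke Theorem~\ref{gen} with its explicit $g'$ and $h'$, observe that these parameters are polynomial in $|P|$, $2^f$, $\log\frac{1}{\epsilon}$, $\log\frac{1}{q^*_{\min}}$ and $\log q^*_{\max}$, and note that each Newton iterate is just a rational linear-system solve computable in time polynomial in $|P|$ and $h'$. Your extra care about $q^*_{\min}$ and $q^*_{\max}$ not being known in advance (handled by the doubling guess, with Theorem~\ref{thm:q-bounds} as a fallback) addresses a point the paper silently glosses over, but it does not change the underlying argument.
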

\begin{proof} After preprocessing to remove all variables $x_i$ with $q^*_i =0$, which takes P-time in $|P|$,
we use R-DNM as specified in Theorem \ref{gen}.
Calculating a Newton iterate at $z$ is just a matter of solving a matrix equation and if the coordinates of $z$ are multiples of $2^{-h}$ 
this can be done in time polynomial in $|P|$ and $h$. Theorem \ref{gen} 
tells us that the number of iterations and $h$ are polynomial in 
$2^f$, $\log \frac{1}{\epsilon}$, $\log \frac{1}{q^*_{\min}}$, $\log q^*_{\max}$, $n$,
$\log \frac{1}{c_{\min}}$ and $\log \|P(\textbf{1})\|_\infty$. The last three of these are bounded by $|P|$. 
Together, these give the corollary. \qed   \end{proof}

\section{MPSs and Probabilistic 1-Counter Automata}

A {\bf probabilistic 1-counter automaton} (p1CA), $M$, is a $3$-tuple 
$M= (V,
\delta, \delta_0)$ where $V$ is a finite set of {\em control states} and
$\delta \subseteq V \times \mathbb{R}_{>0} \times \set{-1, 0, 1}
\times V$ and $\delta_0 \subseteq V \times \mathbb{R}_{>0} \times
\set{0, 1} \times V$ are {\em transition relations}.
The transition relation $\delta$ is enabled when the counter is nonzero,
and the transition relation $\delta_0$ is enabled when it is zero.
For example, a transition of the form,  $(u,p,-1,v) \in \delta$,
says that if the counter value is positive, and we are currently
in control state $u$, then with probability $p$ we move in
the next step to control state $v$ and we decrement the counter by 1.
A p1CA defines in the obvious way an underlying countably infinite-state 
(labeled) Markov chain, whose set of configurations (states) 
are pairs $(v,n) \in V \times \nat$.   
A {\em run} (or {\em trajectory}, 
or {\em sample path}), starting at initial state $(v_0,n_0)$ 
is defined in the usual way, as 
a sequence of configurations  $(v_0,n_0), (v_1,n_1), (v_2,n_2), \ldots$
that is consistent with the transition relations of $M$.

As explained in \cite{EWY-qbd-2010},
p1CAs are in a precise sense equivalent to 
discrete-time {\em quasi-birth-death processes} (QBDs),
and to {\em 1-box recursive Markov chains}.

Quantities that play a central role for the analysis of QBDs
and p1CAs  (both for transient analyses and steady-state analyses,
as well as for model checking)
are their {\em termination probabilities}
(also known as their {\em $G$-matrix} in the QBD literature, 
 see, e.g., \cite{LatRam99,BiLaMe05,EWY-qbd-2010}).
These are defined as the probabilities, $q^*_{u,v}$,
of hitting counter value $0$ for
the first time in control state $v \in V$,  
when starting in configuration $(u,1)$.

Corresponding to the termination probabilities
of every QBD or p1CA is  
a special kind of MPS, $x=P(x)$,
whose LFP solution $q^*$ gives the termination probabilities of the p1CA.
The MPSs corresponding to p1CAs have the following special structure.
For each pair of control states $u,v \in V$ of the p1CA, there is 
a variable $x_{uv}$.
The equation for each variable $x_{uv}$ has the following form:
\begin{equation}\label{eq:xuv}
x_{uv} =
 p^{(-1)}_{uv} + \bigg(\sum_{w \in V} p^{(0)}_{uw} x_{wv}\bigg) +
\sum_{y \in V} p^{(1)}_{uy}
\sum_{z \in V} x_{yz} x_{zv}
\end{equation}
where
for all states $u,v \in V$, and $j \in \{-1,0,1\}$, 
the coefficients $p^{(j)}_{uv}$ are non-negative transition probabilities
of the p1CA,
and such that for all states $u \in V$, $\sum_{j \in \{-1,0,1\}} \sum_{v \in V} p^{(j)}_{uv} \leq 1$.
We can of course clean up this MPS in P-time (by Proposition \ref{prop:zero-removal-in-p-time}), to remove all variables
$x_{uv}$ for which $q^*_{u,v} = 0$.   In what follows,
we assume this has been done, and thus that for the remaining
variables $0 < q^* \leq 1$.

In \cite{EWY-qbd-2010}, the decomposed Newton's method (DNM)
is used {\em with exact arithmetic}
in order to approximate the  LFP for p1CAs using polynomially many arithmetic
operations, i.e., in polynomial time in the {\em unit-cost arithmetic model of computation}. 
However \cite{EWY-qbd-2010} did not establish any result about the rounded down 
version of DNM, and thus no results
on the time required in the standard Turing model of computation.
We establish instead results about 
R-DNM applied to the MPSs arising from p1CAs, in order to turn this 
method into a P-time
algorithm in the standard model of computation.

It was shown in \cite{EWY-qbd-2010} that in any path through the DAG of SCCs of
the dependency graph for the MPS 
associated with a p1CA, $M$, there is at most one non-linear SCC,
i.e. the nonlinear depth is $\leq 1$.
Also, \cite{EWY-qbd-2010} obtained a lower bound
on $q^*_{\min}$, the smallest positive  termination probability.  
Namely, if $c_{\min}$ denotes the smallest positive
transition probability of a p1CA, $M$, and thus also the
smallest positive constant or coefficient of any monomial
in the corresponding MPS, $x=P(x)$, they showed:

\begin{lemma}{(Corollary 6 from \cite{EWY-qbd-2010})} 
$q^*_{\min} \geq c_{\min}^{r^3}$, where $r$ is the number of control states of the p1CA.
\label{lem:low-bound-probs-in-qbds}
\end{lemma}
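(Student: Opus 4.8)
The plan is to reduce the bound to a purely combinatorial statement about runs of the p1CA. Fix a pair $(u,v)$ of control states with $q^*_{uv}=q^*_{\min}$; since $q^*_{uv}>0$, and $q^*_{uv}$ is the total probability of all runs that, started at configuration $(u,1)$, first hit counter value $0$ while in control state $v$, there is at least one such \emph{terminating run}. I would fix one such run $\pi$ that is minimal, first in its number of steps and then (for the counter-height argument below) in its maximum counter value. It then suffices to show that $\pi$ has at most $r^3$ steps: $q^*_{\min}=q^*_{uv}$ is at least the product of the step probabilities of $\pi$, each of which is a transition probability of $M$ and hence at least $c_{\min}$, so $q^*_{uv}\ge c_{\min}^{r^3}$ (using $c_{\min}\le 1$).

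Two observations reduce ``$\le r^3$ steps'' to ``counter value never exceeds $r^2$''. First, $\pi$ visits pairwise distinct configurations: since $\pi$ hits counter value $0$ only at its last configuration $(v,0)$, every earlier configuration has counter value $\ge 1$; if some such configuration $(a,j)$ occurred twice, the sub-run of $\pi$ between the two occurrences is nonempty, keeps the counter $\ge 1$ throughout, and can be excised to produce a shorter terminating run from $(u,1)$ to $(v,0)$, contradicting minimality. Second, if the maximum counter value reached by $\pi$ is $N$, then all configurations of $\pi$ lie in $(V\times\set{1,\dots,N})\cup\set{(v,0)}$, which has at most $rN+1$ elements; since a walk through $k$ distinct configurations uses $k-1$ steps, $\pi$ has at most $rN$ steps. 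So if $N\le r^2$ we are done. (The case $r=1$ is immediate: $q^*_{uv}>0$ then forces a $-1$-transition from $v$, hence a one-step terminating run, so $q^*_{uv}\ge c_{\min}$.)

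To bound $N$, I would use a pumping argument on the counter ``levels'' of $\pi$. For each level $\ell\in\set{1,\dots,N}$, let $a_\ell$ be the control state of $\pi$ at the first time the counter equals $\ell$, and let $b_\ell$ be the control state of $\pi$ at the first time after that moment that the counter equals $\ell-1$ (both exist because $\pi$ terminates). Between these two times the counter stays $\ge \ell$, because the first time after reaching $\ell$ that the counter goes below $\ell$ it must equal exactly $\ell-1$; hence shifting this sub-run of $\pi$ down by $\ell-1$ levels exhibits a terminating run from $(a_\ell,1)$ to $(b_\ell,0)$, and in particular the two sub-runs attached to levels $\ell$ and $\ell'$ can be shifted to a common counter range and substituted for one another inside $\pi$. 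If $N>r^2$, the pigeonhole principle gives levels $\ell<\ell'$ with $(a_\ell,b_\ell)=(a_{\ell'},b_{\ell'})$; substituting the shorter of the two corresponding sub-runs (suitably shifted) for the longer one inside $\pi$ yields a terminating run from $(u,1)$ to $(v,0)$ of no greater length and, with a careful choice of which sub-run to replace and of the minimality measure, strictly smaller — contradicting minimality of $\pi$. Hence $N\le r^2$, giving $q^*_{\min}\ge c_{\min}^{r^3}$.

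The main obstacle is this last step: setting up the minimality measure and the band-collapsing substitution so that it is genuinely a strict improvement — in particular ruling out the borderline case in which the two collapsed sub-runs have equal length, and ensuring the substituted sub-run does not itself raise the maximum counter value — and tracking constants so the final length bound is exactly $r^3$. The rest is routine bookkeeping. This is precisely Corollary 6 of \cite{EWY-qbd-2010}, where the argument is carried out; for the present paper it suffices to quote it.
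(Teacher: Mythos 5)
The paper offers no proof of this lemma: it is quoted verbatim as Corollary~6 of \cite{EWY-qbd-2010}, so your closing remark that it suffices to cite that result is exactly what the paper does. Your reduction is nevertheless the right one, and the standard one: a minimal terminating run from $(u,1)$ to $(v,0)$ visits pairwise distinct configurations, so if its counter never exceeds $r^2$ it has at most $r\cdot r^2=r^3$ steps, each contributing a factor $\geq c_{\min}$.

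The genuine gap is in your pumping step, and it is precisely the ``main obstacle'' you flag. With your choice of anchors ($a_\ell$ at the \emph{first} visit to level $\ell$, $b_\ell$ at the first subsequent return to $\ell-1$), the sub-runs for different levels need not be nested: a run whose counter evolves $1,2,1,2,3,\dots$ has its level-$3$ sub-run begin only after its level-$2$ sub-run has ended. For disjoint sub-runs with $(a_\ell,b_\ell)=(a_{\ell'},b_{\ell'})$ and equal lengths, neither substitution shortens the run, and shifting the lower band up can raise the maximum counter, so your secondary minimality measure does not obviously break the tie; the argument as set up does not close. The fix is to anchor the decomposition at the first time $T$ at which the maximum counter value $N$ is attained: for each $\ell$ let $t_\ell$ be the \emph{last} time $\leq T$ with counter $\ell$ and $s_\ell$ the \emph{first} time $\geq T$ with counter $\ell-1$. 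Then $t_1<\cdots<t_N=T\leq s_N<\cdots<s_1$, the counter stays $\geq\ell$ on $[t_\ell,s_\ell)$, and the intervals $[t_\ell,s_\ell]$ are \emph{strictly} nested. A pigeonhole collision between levels $\ell<\ell'$ now pairs an outer sub-run with a strictly shorter inner one carrying the same endpoint states; replacing the outer by the inner shifted down by $\ell'-\ell$ yields a strictly shorter terminating run (all transitions remain $\delta$-transitions since the counter stays positive until the final step), contradicting minimality in length alone. No secondary measure and no counter-height bookkeeping are needed, and the bound $N\leq r^2$, hence $q^*_{\min}\geq c_{\min}^{r^3}$, follows.
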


They used these results to bound the {\em condition number} of the Jacobian
matrix for each of the linear SCCs, and to thereby show that one
can approximate $q^*$ in polynomially many arithmetic operations using
decomposed Newton's method. Here, we get a stronger result,
placing the problem of computing termination probabilities
for p1CA in P-time in the standard Turing model,
using the results from this paper:

\begin{theorem}  Let $x=P(x)$ be the MPS associated with p1CA, $M$, 
let $r$ denote the number of
control states of $M$, and let $m$ denote the maximum number
of bits required to represent the numerator and denominator of
any positive rational transition probability in $M$.

Apply R-DNM, including rounding down linear SCCs, to the MPS $x=P(x)$,
using rounding parameter
$h := 8mr^7 + 
2mr^5 + 9r^2 + 3 + \lceil 2 \log \frac{1}{\epsilon} \rceil$
and such that for each non-linear SCC we perform $g = h-1$ iterations,
whereas for each linear SCC we only perform $1$ R-NM iteration.  

This algorithm computes an approximation $\tilde{q}$ to $q^*$, such that $\|q^* - \tilde{q}\|_\infty < 
\epsilon$.
The algorithm runs in time polynomial in $| M|$ and 
$\log \frac{1}{\epsilon}$, in the standard Turing model of computation.
\label{thm:qbd-termin-p-time}\end{theorem}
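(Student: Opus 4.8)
The plan is to instantiate the general result of Theorem \ref{gen<=1} (or rather its ``worst-case'' reformulation, equation \eqref{eq:bounds-forq<=1-in-terms-of-P-and-eps}) to the special MPS $x=P(x)$ associated with the p1CA $M$, and then to check that for this special class all the parameters appearing in that bound are small enough to make both the rounding parameter $h$ and the number of iterations $g$ polynomial in $|M|$ and $\log\frac{1}{\epsilon}$. After cleaning (Proposition \ref{prop:zero-removal-in-p-time}) we have a cleaned MPS with LFP $0<q^*\le 1$, so Theorem \ref{gen<=1} applies directly. The key structural facts about p1CAs, both from \cite{EWY-qbd-2010}, are: (i) the nonlinear depth $f\le 1$, which is crucial since $2^f$ appears in the exponent of the general bound and is the only place where exponential blowup could occur; and (ii) the lower bound $q^*_{\min}\ge c_{\min}^{r^3}$ from Lemma \ref{lem:low-bound-probs-in-qbds}, which controls $\alpha = \min\{1,c_{\min}\}\cdot\frac12 q^*_{\min}$.

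First I would collect the crude estimates on the remaining quantities in the bound of Theorem \ref{gen<=1}. The MPS has $n\le r^2$ variables (one per ordered pair of control states), so $d\le n\le r^2$. The minimum positive coefficient satisfies $c_{\min}\ge 2^{-m}$, since every coefficient is a transition probability with numerator and denominator at most $m$ bits. Also $\|P(\mathbf 1)\|_\infty\le 1$ because each $P_i(\mathbf 1)$ is a sum of transition probabilities out of a single control state, which is at most $1$ (from the defining inequality $\sum_{j}\sum_v p^{(j)}_{uv}\le 1$ and the shape of \eqref{eq:xuv}); hence $\log\|P(\mathbf 1)\|_\infty\le 0$. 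Combining (i) and (ii): $\log\frac1\alpha \le m + r^3\log\frac{1}{c_{\min}} \le m + m r^3$, and plugging into the formula for $h$ in Theorem \ref{gen<=1} with $f=1$ (so $2^f=2$) gives
$$h \ \ge\ 3 + 2\Big(\log\tfrac1\epsilon + d\big((4n+1)\log\tfrac1\alpha + \log(16n)\big)\Big),$$
and substituting $n\le r^2$, $d\le r^2$, $\log\frac1\alpha \le m r^3 + m$ shows $h$ is bounded by a polynomial in $r$, $m$, and $\log\frac1\epsilon$ — of the order of $mr^7$ plus lower-order terms, matching the stated expression $8mr^7 + 2mr^5 + 9r^2 + 3 + \lceil 2\log\frac1\epsilon\rceil$ after absorbing constants. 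Then setting $g=h-1$ satisfies the requirement $g\ge h-1$ of Theorem \ref{gen<=1}, and that theorem directly yields $\tilde q\le q^*$ with $\|q^*-\tilde q\|_\infty\le\epsilon$.

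It remains to argue the running time is polynomial in the standard Turing model. Each R-NM iteration on an SCC amounts to solving a linear system $(I-B(z))w = P(z)-z$ where $z$ has coordinates that are multiples of $2^{-h}$; since the matrix entries and right-hand side are then rationals with numerator and denominator of $\mathrm{poly}(|M|,h)$ bits, Gaussian elimination solves it in time polynomial in $|M|$ and $h$, and the subsequent rounding-down to multiples of $2^{-h}$ is cheap. We do at most $n$ SCCs, each with at most $g$ iterations, and $g,h$ are polynomial in $|M|$ and $\log\frac1\epsilon$; so the total time is polynomial. One must also confirm all Newton iterates are well-defined (the matrices $I-B(z)$ nonsingular), but this is guaranteed by the ``iterations are all defined'' clause in Theorem \ref{gen<=1} / Corollary \ref{cor:single-scmps}. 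The main obstacle — really the only nontrivial point — is verifying that the two p1CA-specific inputs, namely $f\le 1$ and the $q^*_{\min}$ lower bound, are exactly what is needed to tame the otherwise-exponential general bound; once those are in hand the rest is bookkeeping to match the explicit constant in the stated $h$. I would also double-check that the slightly different $\alpha$ conventions between Theorem \ref{scmps}, Corollary \ref{cor:single-scmps}, and Theorem \ref{gen<=1} line up for this MPS (they do, since here $y_{\min}=(q^*_{D(S)})_{\min}\ge q^*_{\min}$ for every SCC).
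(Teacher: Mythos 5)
Your proposal follows essentially the same route as the paper: instantiate Theorem \ref{gen<=1} for the p1CA-derived MPS, use the two structural facts from \cite{EWY-qbd-2010} (nonlinear depth $f\le 1$ and $q^*_{\min}\ge c_{\min}^{r^3}$) to tame the $2^f$ and $\log\frac{1}{\alpha}$ terms, bound $n,d\le r^2$ and $c_{\min}\ge 2^{-m}$, and then observe that each R-NM iteration is a P-time linear solve on numbers of $\mathrm{poly}(|M|,h)$ bits. This is exactly the paper's argument, and your reasoning is sound.

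One small slip: you claim $\|P(\mathbf{1})\|_\infty\le 1$, but the quadratic term in equation (\ref{eq:xuv}) is $\sum_{y} p^{(1)}_{uy}\sum_{z} x_{yz}x_{zv}$, whose inner sum evaluates to $r$ (not $1$) at $x=\mathbf{1}$; hence $P(\mathbf{1})_{uv}\le p^{(-1)}_{uv}+\sum_w p^{(0)}_{uw}+r\sum_y p^{(1)}_{uy}\le r$, which is the bound the paper uses. This only contributes an extra $2d\log r\le 2r^2\log r$ to $h$, so it is absorbed into the lower-order terms and does not affect the polynomial-time conclusion, but it should be corrected to match the stated constant.
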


This follows from
Theorem \ref{gen<=1}, using the fact that
$\log(1/q^*_{\min})$ is polynomially bounded by
Lemma \ref{lem:low-bound-probs-in-qbds}, and 
the fact that the nonlinear depth of the MPS $x=P(x)$ for any p1CA is $f \leq 1$
(\cite{EWY-qbd-2010}).
The detailed proof is in the appendix.

\subsection{Application to $\omega$-regular model checking for p1CAs}

Since computing termination probabilities of p1CAs 
(equivalently, the $G$-matrix
of QBDs) 
plays such a central role in other analyses 
(see, e.g., \cite{LatRam99,BiLaMe05,EWY-qbd-2010,BKK11}), 
the P-time algorithm given in the previous section for computing termination
probabilities of a p1CA  (within arbitrary desired precision)
directly facilitates 
P-time algorithms for various other important problems.

Here we highlight just one of these applications:  a P-time 
algorithm {\em in the Turing model of computation} 
for model checking a p1CA with respect to any
$\omega$-regular property.  An analogous result 
was established by Brazdil, Kiefer, and Kucera \cite{BKK11}
in the unit-cost RAM model of computation.

\begin{theorem}
\label{thm:qbd-mcing}
Given a p1CA, $M$, with states labeled from 
an alphabet $\Sigma$, and
with a specified initial control state $v$, and given an $\omega$-regular
property $L(B) \subseteq \Sigma^\omega$, 
which is specified by a non-deterministic 
B\"{u}chi automaton, $B$,  let $Pr_{M}(L(B))$ denote the probability that a run of
$M$ starting at configuration $(v,0)$ generates an $\omega$-word
in $L(B)$.  There is an algorithm that, for any $\epsilon > 0$,
computes an additive 
$\epsilon$-approximation, $\tilde{p} \geq 0$, of $Pr_{M}(L(B))$,
i.e., with $| Pr_{M}(L(B)) - \tilde{p} | \leq \epsilon$.
The algorithm runs in time polynomial in $|M|$,
$\log \frac{1}{\epsilon}$, and $2^{|B|}$,   
in the standard Turing model of computation.
\end{theorem}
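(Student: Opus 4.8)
The plan is to reduce quantitative $\omega$-regular model checking of a p1CA to polynomially many \emph{approximate} termination-probability computations on a product p1CA, and then invoke Theorem~\ref{thm:qbd-termin-p-time} (the P-time Turing-model algorithm just established) together with a bounded amount of P-time post-processing. First I would determinize: using the standard Safra-style construction, as in the general $\omega$-regular framework of \cite{EY-MC-12}, convert the nondeterministic B\"uchi automaton $B$ into an equivalent deterministic Rabin automaton $R$ with $L(R)=L(B)$, whose number of states and number of Rabin pairs are singly exponential in $|B|$; this is the source of the exponential dependence on $|B|$ in the statement. Since $R$ is deterministic and reads labels off the control states of $M$, the synchronous product $M\times R$ is again a p1CA: the counter and the transition probabilities are exactly those of $M$, only the control state is augmented by the current $R$-state, so $M\times R$ has $|V_M|\cdot|R|$ control states and the same bit-size $m$ of transition probabilities as $M$. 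Hence $|M\times R|$ is polynomial in $|M|$ and $2^{|B|}$, and a run of $M$ from $(v,0)$ lies in $L(B)$ iff the corresponding run of $M\times R$ satisfies the Rabin condition on its $R$-component, so $Pr_M(L(B))$ equals the probability of this event in $M\times R$.

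Next I would invoke the algorithm of Brazdil, Kiefer and Kucera \cite{BKK11} for p1CAs against deterministic Rabin automata. Their analysis expresses the desired probability as a value computed from (i) the termination probabilities (the $G$-matrix) of $M\times R$ and of various induced sub-p1CAs; (ii) the solutions of polynomially many linear systems whose coefficients are P-time functions of those termination probabilities and of the transition probabilities of $M\times R$; and (iii) finite-state Markov-chain analyses --- reachability of the ``accepting'' bottom SCCs, recurrence versus transience of the counter, stationary behaviour --- on chains of polynomial size whose transition probabilities are, again, P-time-computable rational functions of the quantities in (i)--(ii). Crucially, the \emph{qualitative} part of this computation --- which variables are zero, which limit probabilities equal $1$, the sign of the mean counter drift, which bottom SCCs are accepting --- was shown in \cite{EWY-qbd-2010,BKK11} to be decidable in P-time in the Turing model via combinatorial and simple linear-algebraic criteria that do not require the numeric termination probabilities. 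Thus the ``skeleton'' of the computation is fixed in P-time, and only the remaining numeric quantities need to be approximated; each of those reduces, after a bounded number of P-time arithmetic operations (matrix inversions, products, comparisons), to the entries of the $G$-matrix of $M\times R$, which Theorem~\ref{thm:qbd-termin-p-time} approximates to within $2^{-i}$ in time polynomial in $|M\times R|$ and $i$.

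The remaining work --- and where I expect the main obstacle to lie --- is error propagation: one must check that the operations \cite{BKK11} applies to the $G$-matrix entries are numerically stable with only polynomially bad condition, so that a polynomial number of bits of precision in the $G$-matrix suffices for the final answer. Here the lower bound $q^*_{\min}\geq c_{\min}^{r^3}$ of Lemma~\ref{lem:low-bound-probs-in-qbds} and the polynomial bounds on the condition numbers of the linear SCCs from \cite{EWY-qbd-2010} do the real work, together with the fact that all qualitative (exact-equality) decisions have already been made combinatorially, so that no step tests an irrational quantity for equality. Consequently there is a precision $i = \mathrm{poly}(|M|,|B|) + \log\tfrac{1}{\epsilon}$ such that running the algorithm of \cite{BKK11} with every exact $G$-matrix call replaced by an $i$-bit approximation from Theorem~\ref{thm:qbd-termin-p-time}, and carrying all intermediate arithmetic to $\mathrm{poly}(i)$ bits, yields $\tilde p\geq 0$ with $|Pr_M(L(B)) - \tilde p|\leq\epsilon$. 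Since every stage --- determinization of $B$, forming the product $M\times R$, the fixed P-time skeleton and post-processing of \cite{BKK11}, and the termination-probability approximations --- runs in time polynomial in $|M|$, $2^{|B|}$ and $\log\tfrac{1}{\epsilon}$, the theorem follows.
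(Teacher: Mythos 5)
Your overall strategy---determinize, form the product p1CA, invoke the \cite{BKK11} machinery with every exact termination-probability call replaced by an approximation from Theorem~\ref{thm:qbd-termin-p-time}, and then argue polynomial conditioning via Lemma~\ref{lem:low-bound-probs-in-qbds} and the fact that positive non-termination probabilities are also bounded below by $1/2^{poly}$---is exactly the route the paper takes for the case where the property is given by a \emph{deterministic Rabin automaton}, and that part of your argument is sound (including your correct identification of error propagation as the crux, and the observation that relative and absolute approximation are P-time equivalent here because nonzero probabilities are bounded away from $0$).

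The genuine gap is in the determinization step. Safra's construction produces a deterministic Rabin automaton with $2^{O(|B|\log|B|)}$ states, not $2^{O(|B|)}$; consequently $|M\times R|$ is polynomial in $|M|$ and $2^{|B|\log|B|}$, which is \emph{quasi}-polynomial, not polynomial, in $2^{|B|}$. So your argument as written proves a slightly weaker bound than the theorem states. To get the claimed $2^{|B|}$ dependence, the paper instead uses the \emph{naive subset construction} (as in \cite{CY95,EY-MC-12}) to obtain a deterministic B\"uchi automaton $D$ with exactly $2^{|B|}$ states; crucially $L(D)$ need \emph{not} equal $L(B)$, so one cannot simply run the \cite{BKK11} Rabin-automaton algorithm on $M\otimes D$. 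Instead the paper forms the product $M\otimes D$ (still a p1CA, i.e., a 1-box RMC), builds the \emph{conditioned summary chain} of \cite{EY-MC-12}---a finite Markov chain whose transition probabilities are rational expressions in the termination and non-termination probabilities of the product---and identifies in P-time the accepting bottom SCCs, reducing $Pr_M(L(B))$ to a hitting probability in that chain; the conditioning analysis (Theorem 17 of \cite{EWY-qbd-2010}) then plays the role you assigned to the \cite{BKK11} post-processing. So the missing idea is not the numerical one you flagged but the combinatorial one: a subset construction plus the conditioned-summary-chain reduction is needed to shave the $\log|B|$ factor out of the exponent.
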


\newpage

\appendix

\section{Background Lemmas, and  Missing Proofs}

\noindent We first recall some Lemmas from \cite{ESY12}:\footnote{
In \cite{ESY12}, the statements of Lemma 3.3 and 3.4 assume that the MPS
is in SNF form, but as noted 
in \cite{ESY12}, the proofs of Lemma 3.3 and 3.4 do not
require that $x=P(x)$ is in SNF form, nor that it is an MPS,
only that it is quadratic.}

\begin{lemma}{(Lemma 3.3 of \cite{ESY12})}
 Let $x=P(x)$ be a quadratic MPS, with $n$ variables,
and let $a, b \in \real^n$. Then:
$$P(a)-P(b) = B(\frac{a + b}{2})(a-b)= \frac{B(a) + B(b)}{2}(a-b)$$
\label{lem:restate-lem-3-3}
\end{lemma}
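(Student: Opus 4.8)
The final statement to prove is Lemma~\ref{lem:restate-lem-3-3}: for a quadratic MPS $x=P(x)$ with $n$ variables and any $a,b\in\real^n$, we have $P(a)-P(b)=B(\frac{a+b}{2})(a-b)=\frac{B(a)+B(b)}{2}(a-b)$.

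The plan is to reduce the claim to the scalar case by looking at each component $P_i$ separately, and then to use the fact that a quadratic MPS has each $P_i$ a polynomial of total degree at most $2$, so it is a sum of a constant term, linear terms, and degree-$2$ monomials of the form $c\,x_jx_k$ (possibly with $j=k$). First I would fix an index $i$ and write $P_i(x)=\sum_{\alpha\in\mathcal{C}_i}c_{i,\alpha}x^\alpha$ where each $\alpha$ has $|\alpha|\le 2$. Since all three expressions in the claimed identity are linear in the coefficients $c_{i,\alpha}$ and in the rows of $B$, it suffices to verify the identity monomial-by-monomial: constants contribute $0$ to all three sides and are trivially fine; for a linear monomial $x_j$ the partial derivative is the constant $1$, so $x_j|_a - x_j|_b = a_j-b_j$ matches $\frac{\partial}{\partial x_j}(\cdot)(a-b)$ evaluated at any point; the only real content is the quadratic monomials.

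So the crux is: for a single quadratic monomial $m(x)=x_jx_k$ (including the case $j=k$, giving $x_j^2$), show $m(a)-m(b)=\nabla m(\tfrac{a+b}{2})\cdot(a-b)=\tfrac12(\nabla m(a)+\nabla m(b))\cdot(a-b)$. For $m=x_jx_k$ with $j\ne k$, $\nabla m$ has entries $x_k$ in position $j$ and $x_j$ in position $k$, so $\nabla m(\tfrac{a+b}{2})\cdot(a-b)=\tfrac{a_k+b_k}{2}(a_j-b_j)+\tfrac{a_j+b_j}{2}(a_k-b_k)$, which expands to $\tfrac12(2a_ja_k-2b_jb_k)=a_ja_k-b_jb_k=m(a)-m(b)$; and the average-of-endpoints form expands identically. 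For $m=x_j^2$, $\nabla m$ has entry $2x_j$ in position $j$, giving $\tfrac12(a_j+b_j)\cdot 2\cdot(a_j-b_j)=a_j^2-b_j^2=m(a)-m(b)$, and again the average form matches. Summing these verified identities over all monomials of $P_i$ with their coefficients, and over all $i$, gives the vector identity, since $B(x)$ is exactly the matrix whose $i$-th row is $\nabla P_i(x)$.

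I do not expect any genuine obstacle here; the lemma is essentially the statement that a degree-$2$ polynomial restricted to a line is determined by its midpoint derivative (and equivalently by the average of its endpoint derivatives), which is the classical ``midpoint rule is exact for quadratics.'' The only thing to be careful about is the case $j=k$ in the quadratic monomials (the factor of $2$ from $\partial x_j^2/\partial x_j=2x_j$), and to note that the argument uses \emph{only} that $P$ is quadratic (degree $\le 2$ in each $P_i$), not that it is monotone or in SNF form, consistent with the footnote. If one prefers a coordinate-free argument, one can instead invoke Taylor's theorem: for a $C^2$ function $F$ on a line segment, $F(a)-F(b)=\int_0^1 \nabla F(b+t(a-b))\cdot(a-b)\,dt$, and when $F$ is quadratic the integrand is affine in $t$, so the integral equals the value at $t=\tfrac12$, namely $\nabla F(\tfrac{a+b}{2})\cdot(a-b)$; and affineness in $t$ also means the integral equals the trapezoidal average $\tfrac12(\nabla F(a)+\nabla F(b))\cdot(a-b)$. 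Either route closes the proof.
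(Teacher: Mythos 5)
Your proof is correct: the paper does not reprove this lemma (it simply imports it as Lemma 3.3 of \cite{ESY12}), and your monomial-by-monomial verification (with the $x_j^2$ case handled correctly, picking up the factor of $2$) is exactly the standard argument, equivalent to the Taylor/midpoint-rule observation you mention. You are also right that only degree $\leq 2$ is used, not monotonicity or SNF form, which matches the paper's footnote.
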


\begin{lemma} \label{newton-from-esy12} 
Let $x=P(x)$ be a quadratic MPS.
Let $z \in \real^n$ be any vector  
such that $(I - B(z))$ is non-singular, and thus 
$\mathcal{N}_P(z)$ is defined. 
Then:
$$q^* -  \mathcal{N}_P(z) = (I-B(z))^{-1}\frac{B(q^*) - B(z)}{2}(q^* - z)$$
\end{lemma}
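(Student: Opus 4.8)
The plan is to derive the identity for $q^* - \mathcal{N}_P(z)$ directly from the definition of the Newton operator together with Lemma \ref{lem:restate-lem-3-3}, which is the quadratic mean-value identity $P(a) - P(b) = \frac{B(a)+B(b)}{2}(a-b)$. First I would start from the definition
\begin{equation*}
\mathcal{N}_P(z) = z + (I - B(z))^{-1}(P(z) - z),
\end{equation*}
and compute $q^* - \mathcal{N}_P(z) = (q^* - z) - (I - B(z))^{-1}(P(z) - z)$. Factoring out $(I - B(z))^{-1}$ on the left, this equals $(I - B(z))^{-1}\big[(I - B(z))(q^* - z) - (P(z) - z)\big]$.

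Next I would simplify the bracketed expression. Since $q^* = P(q^*)$, we have $P(z) - z = P(z) - q^* + (q^* - z) = -(P(q^*) - P(z)) + (q^* - z)$, so
\begin{equation*}
(I - B(z))(q^* - z) - (P(z) - z) = (q^* - z) - B(z)(q^* - z) + (P(q^*) - P(z)) - (q^* - z) = (P(q^*) - P(z)) - B(z)(q^* - z).
\end{equation*}
Now I would apply Lemma \ref{lem:restate-lem-3-3} with $a := q^*$ and $b := z$ to rewrite $P(q^*) - P(z) = \frac{B(q^*) + B(z)}{2}(q^* - z)$. Substituting gives
\begin{equation*}
\frac{B(q^*) + B(z)}{2}(q^* - z) - B(z)(q^* - z) = \frac{B(q^*) - B(z)}{2}(q^* - z).
\end{equation*}
Putting this back together yields $q^* - \mathcal{N}_P(z) = (I - B(z))^{-1}\frac{B(q^*) - B(z)}{2}(q^* - z)$, as claimed.

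There is no real obstacle here: the argument is a short algebraic manipulation whose only nontrivial ingredient is the exact mean-value identity for quadratic polynomials, which is already available as Lemma \ref{lem:restate-lem-3-3}. The one point to be careful about is that $(I - B(z))$ is assumed non-singular (so $\mathcal{N}_P(z)$ is defined), which is exactly what licenses factoring out $(I-B(z))^{-1}$ and multiplying through; no further hypotheses are needed, and in particular we do not need $z \le q^*$ or any monotonicity here, only that $x = P(x)$ is quadratic.
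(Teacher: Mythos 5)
Your derivation is correct. The paper does not prove this lemma itself---it is recalled from \cite{ESY12}---and your argument (factor out $(I-B(z))^{-1}$, use $q^*=P(q^*)$, then apply the exact quadratic mean-value identity of Lemma~\ref{lem:restate-lem-3-3} with $a=q^*$, $b=z$) is precisely the standard proof, with the hypotheses correctly identified: only non-singularity of $I-B(z)$ and the quadratic bound are needed.
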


\noindent We will also need the following lemma from \cite{lfppoly}:

 \begin{lemma}{(Lemma 5.4 from \cite{lfppoly}, Lemma 3.7 from \cite{ESY12})} 
\label{cone} 
Let $x=P(x)$ be a MPS, with polynomials of 
degree bounded by 2, with LFP, $q^* \geq 0$.
Let $B(x)$ denote the Jacobian matrix of $P(x)$.
For any positive vector 
$\textbf{d} \in \mathbb{R}^n_{> 0}$ 
that satisfies $B(q^*) \textbf{d} \leq \textbf{d}$, any positive real value $\lambda > 0$, 
and any nonnegative vector $z \in \real^n_{\geq 0}$, 
if $q^* - z \leq \lambda \textbf{d}$, and $(I-B(z))^{-1}$
exists and is nonnegative, then 
$$q^* - {\mathcal N}_P(z) \leq \frac{\lambda}{2} \textbf{d}$$\end{lemma}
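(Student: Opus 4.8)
The plan is to start from the closed-form expression for the Newton residual supplied by Lemma~\ref{newton-from-esy12}, namely
$$q^* - \mathcal{N}_P(z) = (I-B(z))^{-1}\,\tfrac{B(q^*) - B(z)}{2}\,(q^* - z),$$
and then bound the right-hand side coordinate-wise using the ``cone'' hypothesis $B(q^*)\mathbf{d} \le \mathbf{d}$ together with monotonicity of the Jacobian. Throughout I will use that, since $x=P(x)$ is a quadratic MPS with non-negative coefficients, $B(\cdot)$ is entry-wise non-negative and monotone on $\real^n_{\ge 0}$ (each $B(x)_{ij} = \partial P_i/\partial x_j$ is a polynomial of degree $\le 1$ with non-negative coefficients), and that $0 \le z \le q^*$ — the latter being the regime in which the lemma is applied (every rounded-down Newton iterate started at $\mathbf{0}$ stays below $q^*$), and it is what makes $B(q^*) - B(z) \ge 0$. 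If one wishes to avoid citing Lemma~\ref{newton-from-esy12}, the residual identity can be re-derived in one line from the definition of $\mathcal{N}_P$ and Lemma~\ref{lem:restate-lem-3-3} ($P(q^*)-P(z) = \tfrac{B(q^*)+B(z)}{2}(q^*-z)$) using $q^*=P(q^*)$.

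The argument then has three short steps. First, since $q^*-z \ge 0$ and $q^* - z \le \lambda\mathbf{d}$, and $B(q^*)-B(z) \ge 0$ entry-wise, multiplying the non-negative matrix against the sandwiched vector gives
$$\tfrac{B(q^*) - B(z)}{2}\,(q^* - z) \ \le\ \tfrac{\lambda}{2}\,\big(B(q^*) - B(z)\big)\mathbf{d} \ =\ \tfrac{\lambda}{2}\big(B(q^*)\mathbf{d} - B(z)\mathbf{d}\big).$$
Second, using the cone hypothesis and $z \le q^*$: since $0 \le B(z)\mathbf{d} \le B(q^*)\mathbf{d} \le \mathbf{d}$, we have $0 \le B(q^*)\mathbf{d} - B(z)\mathbf{d} \le \mathbf{d} - B(z)\mathbf{d} = (I-B(z))\mathbf{d}$. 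Third, the operator $(I-B(z))^{-1}$ exists and is non-negative by hypothesis, so it preserves $\le$ between vectors; applying it to the chain of the previous two displays (and using the residual identity) yields
$$q^* - \mathcal{N}_P(z) \ \le\ \tfrac{\lambda}{2}\,(I-B(z))^{-1}(I-B(z))\mathbf{d} \ =\ \tfrac{\lambda}{2}\,\mathbf{d},$$
which is the claim. (As a byproduct the residual is also $\ge 0$, since it equals $(I-B(z))^{-1}$ times the non-negative vector $\tfrac{B(q^*)-B(z)}{2}(q^*-z)$.)

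I expect the only real subtlety is getting the direction of the inequalities right: the naive route — bounding $\tfrac{B(q^*)-B(z)}{2}(q^*-z) \le \tfrac{\lambda}{2}\mathbf{d}$ outright and then applying $(I-B(z))^{-1}$ — does \emph{not} work, because in general $(I-B(z))^{-1}\mathbf{d} \ge \mathbf{d}$ (indeed $(I-B(z))^{-1}\mathbf{d} = \mathbf{d} + B(z)(I-B(z))^{-1}\mathbf{d}$). The fix used above is to not collapse the bound fully but keep it in the form $\tfrac{\lambda}{2}(I-B(z))\mathbf{d}$ so that the inverse cancels cleanly against it. The other point to watch is that the whole argument relies on $0 \le z \le q^*$: without it $B(q^*)-B(z)$ can have negative entries (a single-variable quadratic example shows the conclusion itself can fail), so I would either list $z \le q^*$ explicitly among the hypotheses or note that it holds automatically for the Newton iterates to which the lemma is invoked.
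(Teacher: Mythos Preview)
Your proof is correct. The paper itself does not prove this lemma --- it is quoted as a background result from \cite{lfppoly} and \cite{ESY12} --- so there is no in-paper argument to compare against; your route via the residual identity of Lemma~\ref{newton-from-esy12} and the key bound $(B(q^*)-B(z))\mathbf{d}\le \mathbf{d}-B(z)\mathbf{d}=(I-B(z))\mathbf{d}$ (so that the inverse cancels) is the standard one. Your caveat about needing $0\le z\le q^*$ is well taken: the statement as written omits it, but the paper only invokes the lemma inside the proof of Lemma~\ref{roundedNewton}, where the standing assumption $0\le x^{[k]}\le q^*$ supplies exactly this.
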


We also need to recall a number of basic facts from matrix analysis and Perron-Frobenius theory.
\noindent For a square matrix $A$, let $\rho(A)$ denote the spectral radius of $A$. Recall that 
a nonnegative square matrix $A$ is called
{\em irreducible} if its underlying directed graph
is strongly connected, where the adjacency matrix
of its underlying directed graph is obtained by setting the positive entries of the matrix $A$
to $1$.

\begin{lemma}{(see, e.g., \cite{HornJohnson85}, Theorem 8.4.4)}
If $A$ is an irreducible nonnegative square matrix, then 
there is a positive eigenvector $v > 0$, such that $Av = \rho(A)v$.
Such a vector $v$ is called the {\em Perron vector} of $A$.
It is unique up to rescaling by a positive factor.
\label{lem:perron}
\end{lemma}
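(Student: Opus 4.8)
Since this is the classical Perron--Frobenius theorem for irreducible nonnegative matrices, the plan is to give a self-contained proof in three stages: a combinatorial positivity fact coming from irreducibility, existence of a positive eigenvector via a fixed-point argument, and identification of its eigenvalue with $\rho(A)$ together with one-dimensionality of that eigenspace (which yields uniqueness up to a positive scalar). The case $n=1$ is trivial ($v=1$ works, uniqueness is clear), so assume $n \ge 2$. The basic fact I would isolate first is: \emph{if $A \ge 0$ is $n \times n$ and irreducible, then $(I+A)^{n-1} > 0$ entrywise}. This holds because the $(i,j)$ entry of $(I+A)^{n-1}$ is a nonnegative combination, with strictly positive coefficients, of the entries $(A^\ell)_{ij}$ for $0 \le \ell \le n-1$, and $(A^\ell)_{ij} > 0$ exactly when there is a directed walk of length $\ell$ from $i$ to $j$ in the digraph underlying $A$; strong connectivity guarantees such a walk of length at most $n-1$ for every pair $i,j$. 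Note also that for $n \ge 2$ irreducibility rules out zero rows and zero columns.

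For existence, I would apply Brouwer's fixed point theorem to the continuous map $f(x) = Ax/\|Ax\|_1$ on the standard simplex $\Delta = \{x \ge 0 : \sum_i x_i = 1\}$; this is well defined since $\|Ax\|_1 = \sum_j (\text{column-sum}_j)\, x_j > 0$ for $x \in \Delta$ (no zero columns). A fixed point $v \in \Delta$ satisfies $Av = \lambda v$ with $\lambda = \|Av\|_1 > 0$. To upgrade $v \ge 0$, $v \ne 0$ to $v > 0$, I would apply $(I+A)^{n-1}$: it commutes with the eigenvalue relation, giving $(I+A)^{n-1} v = (1+\lambda)^{n-1} v$, while the left side is a strictly positive matrix times a nonnegative nonzero vector, hence strictly positive; dividing by $(1+\lambda)^{n-1} > 0$ yields $v > 0$. (A Brouwer-free alternative is the Collatz--Wielandt route: maximize $r(x) = \min_{i : x_i > 0}(Ax)_i/x_i$ over $\Delta$, using $r((I+A)^{n-1}x) \ge r(x)$ to reduce the maximization to strictly positive vectors, where $r$ is continuous and the maximum is attained by compactness; if the maximizer $v$ were not an eigenvector then $Av - \lambda^* v$ is nonnegative and nonzero, so $(I+A)^{n-1}(Av - \lambda^* v) > 0$ would exhibit a strictly larger value of $r$, a contradiction.)

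To identify $\lambda$ with $\rho(A)$, I would first run the same existence argument on $A^{T}$ (also nonnegative and irreducible), obtaining a positive left eigenvector $u > 0$ with $u^{T} A = \lambda' u^{T}$; then $\lambda'(u^{T} v) = u^{T} A v = \lambda (u^{T} v)$ with $u^{T} v > 0$ forces $\lambda' = \lambda$. For any eigenvalue $\mu$ with eigenvector $w \ne 0$, taking entrywise moduli and using $A \ge 0$ gives $|\mu|\,|w| \le A|w|$; multiplying on the left by $u^{T} > 0$ gives $|\mu|\, (u^{T}|w|) \le \lambda\,(u^{T}|w|)$, and $u^{T}|w| > 0$ yields $|\mu| \le \lambda$, so $\lambda = \rho(A)$. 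For uniqueness: if $w$ is a real eigenvector for $\lambda$ not proportional to $v$ (replace $w$ by $-w$ if needed so $w$ has a positive coordinate), then $z := v - t^* w$ with $t^* = \min\{v_i/w_i : w_i > 0\} > 0$ is nonnegative, nonzero, has at least one zero coordinate, and still satisfies $Az = \lambda z$; but $(I+A)^{n-1} z = (1+\lambda)^{n-1} z > 0$ contradicts the zero coordinate. Hence the real $\lambda$-eigenspace is one-dimensional, and splitting a complex eigenvector into real and imaginary parts extends this to the complex eigenspace; in particular the positive eigenvector is unique up to a positive scalar.

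The main obstacle is exactly the step where irreducibility is actually used, namely forcing the eigenvector produced by the fixed-point (or variational) step to be \emph{strictly} positive and forcing the $\rho(A)$-eigenspace to be exactly one-dimensional; both are driven entirely by the single fact $(I+A)^{n-1} > 0$, so the only real care needed is in setting up that combinatorial lemma cleanly and in the sign-chasing of the uniqueness argument. Everything else is routine linear algebra.
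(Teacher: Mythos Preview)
Your proof is correct and follows a standard route to the Perron--Frobenius theorem (essentially the argument one finds in the textbook the paper cites). Note, however, that the paper does not actually prove this lemma: it is quoted as a known background fact with a reference to Horn and Johnson, Theorem~8.4.4, and is used as a black box throughout. So there is no ``paper's own proof'' to compare against; you have simply supplied a self-contained argument where the paper relies on the literature. Your three ingredients---the combinatorial fact $(I+A)^{n-1}>0$ from irreducibility, the Brouwer (or Collatz--Wielandt) existence step, and the one-dimensionality of the $\rho(A)$-eigenspace via the ``subtract-off'' trick---are all standard and correctly executed, including the sign-chasing in the uniqueness step.
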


\begin{lemma}{(see, e.g., \cite{LanTis85}, Theorem 15.4.1 and Exercise 1, page 540)} If $A$  is an irreducible nonnegative square matrix and 
$0 \leq B \leq A$, but $B \not= A$, then 
$\rho(B) < \rho(A)$.
\label{lem:irr-non-neg-spec-lower}\end{lemma}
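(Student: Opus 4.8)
The plan is to prove this by standard Perron--Frobenius arguments, using Lemma~\ref{lem:perron} applied to both $A$ and $A^T$; the only subtlety is that $B$ need not be irreducible, so $B$ may lack a strictly positive eigenvector.

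First I would dispose of the non-strict bound $\rho(B)\le\rho(A)$. Writing $r=\rho(A)$, Lemma~\ref{lem:perron} gives a Perron vector $v>0$ with $Av=rv$. Since $0\le B\le A$ and $v>0$, we have $Bv\le Av=rv$, and, since $B$ (being nonnegative) is monotone on $\real^n_{\geq 0}$, iterating yields $B^kv\le r^kv$ for all $k\geq 0$; as $v>0$ this forces $(B^k)_{ij}\le r^k v_i/v_j$, hence $\rho(B)=\lim_k\|B^k\|^{1/k}\le r$ by Gelfand's formula (\cite{HornJohnson85}).

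Next, for the strict inequality, suppose toward a contradiction that $\rho(B)=r$. By Perron--Frobenius for general nonnegative matrices (\cite{HornJohnson85}), $\rho(B)$ is an eigenvalue of $B$ with a nonnegative eigenvector $w\ge 0$, $w\ne 0$, so $Bw=rw$. Applying Lemma~\ref{lem:perron} to $A^T$ (which is irreducible, nonnegative, and has $\rho(A^T)=r$) gives a strictly positive left Perron vector $u>0$ with $u^TA=ru^T$. Then
$$r\,u^Tw \;=\; u^TAw \;\ge\; u^TBw \;=\; r\,u^Tw,$$
where the middle inequality uses $A\ge B$ entrywise, $w\ge 0$, and $u>0$. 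Hence $u^T(A-B)w=0$; since $u>0$ and $(A-B)w\ge 0$ coordinatewise, this forces $(A-B)w=0$, i.e.\ $Aw=rw$ as well.

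Finally I would boost $w$ to strict positivity: from $Aw=rw$ we get $(I+A)^{n-1}w=(1+r)^{n-1}w$, and $(I+A)^{n-1}$ is entrywise strictly positive because $A$ is irreducible, so $(I+A)^{n-1}w>0$ (as $w\ge0$, $w\ne0$), whence $w>0$. Now $(A-B)w=0$ with $A-B\ge 0$ and $w>0$ forces every entry of $A-B$ to vanish, i.e.\ $A=B$, contradicting $B\ne A$. Therefore $\rho(B)<\rho(A)$. The proof is routine; the only point requiring care is the possible reducibility of $B$, which is exactly why I pair $B$'s nonnegative eigenvector against the strictly positive left Perron vector of $A$, rather than attempting a strict Collatz--Wielandt bound on $B$ directly (which is false for reducible $B$).
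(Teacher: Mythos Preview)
Your argument is correct and complete: the non-strict bound via the right Perron vector, the pairing of a nonnegative eigenvector of $B$ against the strictly positive \emph{left} Perron vector of $A$ to force $(A-B)w=0$, and the use of $(I+A)^{n-1}>0$ to upgrade $w$ to strict positivity are all standard and sound. The paper itself does not supply a proof for this lemma; it simply cites \cite{LanTis85}, Theorem~15.4.1 and Exercise~1, so there is nothing to compare against and your self-contained treatment is a genuine addition.
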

\begin{lemma}{(see, e.g., \cite{LanTis85}, Theorem 15.2.2, page 531)} If $A$ is a square matrix with $\rho(A) < 1$, then $I-A$ is non-singular and 
$(I-A)^{-1} = \sum_{i=0}^\infty A^i$.
\label{lem:inverse-of-sub-1-spec-exists}\end{lemma}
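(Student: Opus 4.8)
The plan is to prove the two assertions — that $I-A$ is non-singular, and that its inverse is the Neumann series $\sum_{i=0}^\infty A^i$ — separately, using only elementary linear algebra together with one standard fact from matrix analysis about the spectral radius $\rho(A)$.

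First, for non-singularity, I would argue by contradiction: if $I-A$ were singular, then $\det(I-A)=0$, so there is a nonzero vector $v$ with $(I-A)v = 0$, i.e. $Av = v$; hence $1$ is an eigenvalue of $A$. But every eigenvalue $\lambda$ of $A$ satisfies $|\lambda| \le \rho(A) < 1$, a contradiction. Therefore $I-A$ is invertible.

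Second, and this is the crux, I need the partial sums $S_k := \sum_{i=0}^{k} A^i$ to converge. The key input is the standard fact (see, e.g., \cite{HornJohnson85}) that for every $\eta > 0$ there is a submultiplicative matrix norm $\|\cdot\|_\eta$ with $\|A\|_\eta \le \rho(A) + \eta$; equivalently, one may appeal to Gelfand's formula $\rho(A) = \lim_k \|A^k\|^{1/k}$, or to the Jordan canonical form of $A$. Picking $\eta$ with $\rho(A) + \eta < 1$ and writing $r := \|A\|_\eta < 1$, submultiplicativity gives $\|A^i\|_\eta \le r^i$, so $\sum_{i\ge 0}\|A^i\|_\eta$ is dominated by the convergent geometric series $\sum_{i\ge 0} r^i$. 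Since the space of $n\times n$ matrices is finite-dimensional, hence complete, the sequence $(S_k)$ is Cauchy and converges to some matrix $S$; moreover $\|A^{k+1}\|_\eta \le r^{k+1}\to 0$, so $A^{k+1}\to 0$.

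Finally, I would expand directly to obtain the telescoping identity $(I-A)S_k = I - A^{k+1} = S_k(I-A)$ for every $k$, then let $k\to\infty$, using continuity of matrix multiplication together with $A^{k+1}\to 0$, to conclude $(I-A)S = I = S(I-A)$. Hence $S = (I-A)^{-1} = \sum_{i=0}^\infty A^i$, as claimed. The only non-routine step is the convergence argument: the whole proof hinges on producing a norm in which $A$ acts as a strict contraction (equivalently, on controlling the growth of $\|A^k\|$), and that is precisely where the hypothesis $\rho(A) < 1$ is used in an essential way.
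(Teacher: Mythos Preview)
Your proof is correct and standard. Note, however, that the paper does not actually prove this lemma: it is stated as a background fact with a citation to \cite{LanTis85}, Theorem 15.2.2, and no proof is given. So there is no ``paper's own proof'' to compare against; your argument is a perfectly good self-contained justification of the cited result.
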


\begin{lemma}{(see, e.g., \cite{LanTis85}, Section 15.3 and Exercise 11)} If $A$  is an irreducible nonnegative square matrix,
and $v > 0$ is a positive eigenvector associated with some eigenvalue  $r$, i.e., such that $A v = r v$,
then $r = \rho(A)$.
Thus $v > 0$ is the Perron vector (which is unique up to scaling).
\label{lem:irr-unique-positive-eigenvector}
\end{lemma}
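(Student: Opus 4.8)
The plan is to pair $v$ against a positive \emph{left} Perron eigenvector of $A$. First I would observe that $A^T$ is again an irreducible nonnegative square matrix: it is obviously nonnegative, and the underlying directed graph of $A^T$ is obtained from that of $A$ by reversing every edge, an operation that preserves strong connectivity. Hence Lemma \ref{lem:perron} applies to $A^T$ and yields a positive vector $u > 0$ with $A^T u = \rho(A^T)\, u$. Since $A$ and $A^T$ have the same characteristic polynomial $\det(\lambda I - A) = \det(\lambda I - A^T)$, they have the same set of eigenvalues and therefore the same spectral radius, so $\rho(A^T) = \rho(A)$; equivalently $u^T A = \rho(A)\, u^T$.

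Next I would compute the scalar $u^T A v$ in two ways. Using $Av = rv$ gives $u^T A v = r\,(u^T v)$, while using $u^T A = \rho(A)\, u^T$ gives $u^T A v = \rho(A)\,(u^T v)$. Both $u \in \real^n_{>0}$ and $v \in \real^n_{>0}$ are strictly positive, so $u^T v = \sum_i u_i v_i > 0$, and we may cancel this factor to conclude $r = \rho(A)$.

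Finally, once $r = \rho(A)$ is established, $v > 0$ is a positive eigenvector of $A$ associated with the eigenvalue $\rho(A)$, so by the uniqueness clause already asserted in Lemma \ref{lem:perron} it coincides, up to a positive scalar, with the Perron vector of $A$; this is exactly the last sentence of the lemma.

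I do not anticipate any genuine obstacle: the only substantive points are that transposition preserves both irreducibility and the spectral radius, and that the inner product of two strictly positive vectors is strictly positive, all of which are one-line facts. (A self-contained alternative avoiding the transpose would instead take the right Perron vector $w$ of $A$, let $t^*$ be the largest $t>0$ with $t v \le w$, and combine $A(t^* v) \le A w$ with Lemma \ref{lem:irr-non-neg-spec-lower} to pin down $r = \rho(A)$; but the left-eigenvector pairing above is shorter and cleaner, so that is the route I would take.)
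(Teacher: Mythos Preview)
Your argument is correct: pairing $v$ against a positive left Perron eigenvector $u$ of $A$ (obtained by applying Lemma~\ref{lem:perron} to $A^T$) and computing $u^T A v$ two ways forces $r = \rho(A)$ since $u^T v > 0$, and then the uniqueness clause in Lemma~\ref{lem:perron} identifies $v$ as the Perron vector. This is a standard and clean proof.

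However, the paper does not actually prove this lemma at all. It is stated with a citation to \cite{LanTis85} (Section~15.3 and Exercise~11) and treated as a known background fact from Perron--Frobenius theory, alongside the other cited matrix-analysis lemmas in that subsection. So there is no ``paper's own proof'' to compare against; you have simply supplied a valid self-contained argument where the authors chose to invoke a reference.
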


\subsection{Proof of Theorem \ref{scmps}}

\label{sec-app:proof-of-scmps}

\noindent 
{\bf Theorem  \ref{scmps}.} 
{\em Let $P(x,y)$ be an  $n$-vector of 
monotone polynomials with degree $\leq 2$ in variables which are 
coordinates of the $n$-vector $x$ 
and the $m$-vector $y$, 
where $n \geq 1$ and $m \geq 1$.

Given non-negative $m$-vectors $y_1$ and $y_2$ such that 
$ 0 < y_1 \leq \mathbf{1}$ 
and $0 \leq y_2 \leq y_1$, let $P_1(x) \equiv P(x,y_1)$ 
and $P_2(x) \equiv P(x,y_2)$. 
Suppose that $x=P_1(x)$ is a strongly-connected MPS with LFP solution 
$0 < q^*_1 \leq \mathbf{1}$. \\
Let
 $\alpha = {\min} \{1,c_{\min}\} \mathrm{min } \{y_{\min}, \frac{1}{2}q^*_{\min}\}$, where 
$c_{\min}$ is the smallest non-zero constant or coefficient 
of any monomial in $P(x,y)$, 
where $y_{\min}$ is  
the minimum coordinate of $y_1$,
and finally where $q^*_{\min}$ is the minimum coordinate of $q^*_1$.
Then:

\begin{itemize}
\item[{\bf 1.}] The LFP solution of the MPS 
$x=P_2(x)$ is $q^*_2$ with $0 \leq q^*_2 \leq q^*_1$, and 
\begin{equation}
\|q^*_1 - q^*_2\|_\infty \leq 
\sqrt{4n \alpha^{-(3n+1)} \|P(\textbf{1}, \textbf{1})\|_\infty \|y_1 - y_2\|_\infty}
\label{eq:bound-on-diff-of-lfps}
\end{equation} 

Furthermore, if $x=P_1(x)$ is a linear system, then:
\begin{equation} 
\|q^*_1 - q^*_2\|_\infty \leq 2n \alpha^{-(n+2)} \|P(\mathbf{1}, \mathbf{1})\|_\infty \|y_1 - y_2\|_\infty 
\label{eq:bound-on-lin-diff-of-lfps}
\end{equation}

\item[{\bf 2.}] Moreover, for every $0 < \epsilon < 1$,  
if we use $g \geq h-1$ iterations of rounded down
Newton's method
with parameter 
$$h \geq \lceil 2 +  n \log \frac{1}{\alpha} + \log \frac{1}{\epsilon} \rceil$$
applied to the MPS, $x=P_2(x)$,
starting at $x^{[0]} := \mathbf{0}$,  
to approximate $q^*_2$, 
then the iterations are all defined, and
$\|q^*_2 - x^{[g]} \|_\infty \leq \epsilon$.\end{itemize}
}

\begin{proof}
We first establish {\bf 1.}
Since $x=P_1(x)$ is
a strongly connected system of equations, and $q^*_1 >  0$, this implies that matrix $B_1(q^*_1)$ is 
non-negative and irreducible,
where $B_1(x)$ is the Jacobian matrix of $P_1(x)$.   

Thus, by Lemma \ref{lem:perron},
there is a positive Perron eigenvector $v >0$ of $B_1(q^*_1)$, 
which satisfies $B_1(q^*_1)v = \rho(B_1(q^*_1)) v$.
We can 
always scale $v$ such that 
 $\|v\|_\infty = 1$.

We will observe that $B_1(q^*_1)v \leq v$, and that if we scale $v$ so that $\|v\|_\infty = 1$ 
then the smallest coordinate 
of $v$, denoted $v_{\min}$, has $v_{\min} \geq \alpha^n$.

\begin{lemma}(This is a variant of  Lemma 6.5 from \cite{rmc})
For any strongly-connected MPS,  $x= P(x)$,  with LFP $q^* > 0$, 
and Jacobian $B(x)$,  
we have $\rho(B(q^*)) \leq 1$, and for all vectors $y$ 
with $0 \leq y < q^*$,  $\rho(B(y)) < 1$.
\label{lem:spectr-at-lfp-is-at-most-1}
\end{lemma}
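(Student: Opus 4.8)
The plan is to prove the two assertions separately, relying throughout on three facts. (a) Since $x=P(x)$ is strongly connected and $q^*>0$, the Jacobian $B(q^*)$ is nonnegative and irreducible: its $(i,j)$ entry is positive exactly when $x_j$ occurs in some monomial of $P_i$ (then $\partial P_i/\partial x_j$ is a nonzero polynomial with nonnegative coefficients, evaluated at the positive point $q^*$), so the support of $B(q^*)$ is the edge set of the strongly connected graph $G_P$; hence Lemma~\ref{lem:perron} applies. (b) Value iteration converges monotonically from below: $P^k(\mathbf{0})$ increases to $q^*$, as recalled in the Background section. (c) The midpoint identity $P(a)-P(b)=B(\tfrac{a+b}{2})(a-b)$ of Lemma~\ref{lem:restate-lem-3-3}, valid because $P$ is quadratic.

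\emph{First claim, $\rho(B(q^*))\le 1$.} I would argue by contradiction. Suppose $\rho:=\rho(B(q^*))>1$ and let $v>0$ be the Perron eigenvector of $B(q^*)$, scaled so that $\|v\|_\infty=1$, so $B(q^*)v=\rho v>v$ componentwise. For $t>0$ put $z_t:=q^*-tv$; for $t$ small, $z_t\ge\mathbf{0}$. By the midpoint identity, $P(q^*)-P(z_t)=B(q^*-\tfrac{t}{2}v)(tv)$, hence
$$ P(z_t)-z_t \;=\; t\big(v - B(q^*-\tfrac{t}{2}v)\,v\big). $$
The map $t\mapsto B(q^*-\tfrac{t}{2}v)\,v$ has polynomial components and equals $\rho v$ at $t=0$; since $\rho v>v$ componentwise, for all sufficiently small $t>0$ we still have $B(q^*-\tfrac{t}{2}v)v>v$ componentwise, so $P(z_t)<z_t$. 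Fix such a $t$ with also $z_t\ge\mathbf{0}$. Since $P$ is monotone, $\{P^k(z_t)\}_k$ is nonincreasing, hence $P^k(z_t)\le z_t$ for all $k$; but also $P^k(z_t)\ge P^k(\mathbf{0})\to q^*$. Therefore $z_t\ge q^*$, contradicting $z_t=q^*-tv<q^*$.

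\emph{Second claim, $\rho(B(y))<1$ for $\mathbf{0}\le y<q^*$} (strict componentwise order). Put $d:=q^*-y>0$. Monotonicity of $B$ gives $0\le B(y)\le B(q^*)$, and $B(q^*)$ is irreducible with $\rho(B(q^*))\le 1$ by the first part. If $B(y)\ne B(q^*)$, Lemma~\ref{lem:irr-non-neg-spec-lower} immediately gives $\rho(B(y))<\rho(B(q^*))\le 1$. If $B(y)=B(q^*)$, I use that the Jacobian of a quadratic MPS is an affine matrix-valued function: writing $B(x)=B(\mathbf{0})+L(x)$ with $L$ linear and entrywise nonnegative on the nonnegative orthant, $B(y)=B(q^*)$ forces $L(d)=L(q^*)-L(y)=0$, and since $d>0$ strictly this forces $L\equiv 0$, i.e.\ $B$ is a constant matrix and $P$ is affine, $P(x)=Bx+c$ with $c=P(\mathbf{0})\ge 0$. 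Then $q^*=\lim_k P^k(\mathbf{0})=\sum_{j\ge 0}B^j c$, so the nonnegative series converges and $B^j c\to\mathbf{0}$; moreover $c\ne\mathbf{0}$ (else $q^*=\mathbf{0}$). Taking the positive left Perron eigenvector $u^{T}>0$ of the irreducible $B=B(q^*)$, we get $u^{T}B^j c=\rho(B)^j\,(u^{T}c)$ with $u^{T}c>0$, which tends to $0$ only if $\rho(B)<1$; hence $\rho(B(y))=\rho(B)<1$.

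The step I expect to be the main obstacle is getting the direction of the perturbation right in the first claim: monotonicity of $B$ only yields the \emph{upper} bound $B(q^*-\tfrac{t}{2}v)v\le\rho v$, whereas the contradiction needs the \emph{lower} bound $B(q^*-\tfrac{t}{2}v)v>v$, which must be squeezed out of continuity, and one must check that it is a strict super-solution \emph{strictly below} $q^*$ that contradicts minimality of the LFP (a sub-solution below $q^*$ would give no contradiction). A secondary point is that ``$y<q^*$'' in the second claim must be the strict componentwise order; if only $y\le q^*$ with $y\ne q^*$ were allowed the statement is false --- e.g.\ for $x_1=\tfrac12 x_1^2+\tfrac12 x_2+\tfrac18$, $x_2=x_1$, which is cleaned and strongly connected with $q^*=(\tfrac12,\tfrac12)$, $\rho(B(q^*))=1$, and $B(\tfrac12,y_2)=B(q^*)$ for every $y_2$.
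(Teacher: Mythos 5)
Your proof is correct, but it takes a genuinely different route from the paper's. The paper first establishes $\rho(B(y))<1$ for $0\le y<q^*$ and only then gets $\rho(B(q^*))\le 1$ as a limit: it invokes the Taylor-type inequality $B^d(z)(y-z)\le P^d(y)-P^d(z)$ (Lemma 6.4 of \cite{rmc}, imported without proof) at the value-iteration points $x^i=P^i(\mathbf{0})<q^*$, observes that $q^*-x^{i+d}\to 0$ while $q^*-x^i>0$ is fixed, concludes $B^d(x^i)\to 0$ and hence $\rho(B(x^i))<1$, sandwiches any $y<q^*$ below some $x^i$, and finally appeals to continuity of the spectral radius along $x^i\to q^*$. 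You go in the opposite direction: you prove $\rho(B(q^*))\le 1$ first by a self-contained perturbation argument --- if $\rho>1$ then $q^*-tv$ (with $v$ the Perron vector) is, for small $t>0$, a nonnegative pre-fixed point strictly below $q^*$ by the midpoint identity of Lemma \ref{lem:restate-lem-3-3}, contradicting minimality of the LFP --- and then deduce the strict bound for $y<q^*$ from strict monotonicity of the spectral radius on irreducible matrices (Lemma \ref{lem:irr-non-neg-spec-lower}), handling the degenerate case $B(y)=B(q^*)$ (which forces $P$ affine when $q^*-y>0$ strictly) via convergence of the Neumann series $\sum_j B^jc=q^*$ and a left Perron vector. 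Your route avoids both the external Lemma 6.4 and the continuity-of-$\rho$ step, at the cost of a case split and the left-eigenvector argument; the paper's route gets the strict bound uniformly in one sweep with no affine/nonlinear distinction. Your reading of ``$y<q^*$'' as strict componentwise order (with the counterexample for the weaker order) matches how the paper's own proof uses the hypothesis.
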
 
\begin{proof}
We will only
show here that $\rho(B(q^*)) \leq 1$ if $x=P(x)$ is strongly
connected, but in fact this holds 
 for any MPS, $x=P(x)$, with LFP $q^* > 0$.
We do so because we will only use the strongly-connected case.

If we have $0 \leq z \leq y$ and  $z \leq P(z)$,
then Lemma 6.4 of \cite{rmc} shows that for any $d \geq 1$, $B^d(z)(y-z) \leq P^d(y) - P^d(z)$.
Let $x^i = P^i(\mathbf{0})$, for all $i \geq 1$.  
Recall that $\lim_{i \rightarrow \infty} x^i = q^*$. 
Also note that,
because $x=P(x)$ is strongly connected,  $x^i < q^*$ for
all $i$.  

Then for all $i, d \geq 1$, $B^d(x^i)(q^* - x^i) \leq P^d(q^*) - P^d(x^i) = q^*- x^{i+d}$.
But since $\lim_{d \rightarrow \infty} x^{i+d} = q^*$, we see that the right hand side goes to $0$.
But since $(q^*- x^i) > 0$ for all $i$,  it must be the case that
$B^d(x^i) \rightarrow 0$, as $d$ goes to infinity.
But this is a necessary and sufficient condition for $\rho(B(x^i)) < 1$.
Now notice that for any vector $y$ such that $0 \leq y < q^*$, there is some 
$i$ such
that $y \leq x^i$.  Thus, by monotonicity of $\rho(B(x))$ in $x \geq 0$, 
we must have $\rho(B(y)) < 1$.

Thus, also, since $\lim_{i \rightarrow \infty} x^i = q^*$, and by continuity of the spectral radius
function, we get that $\rho(B(q^*)) \leq 1$. 
\qed   \end{proof}
 
\begin{corollary}
$\rho(B_1(q^*_1)) \leq 1$,  and thus if $v$ is a Perron
vector of $B_1(q^*_1)$ then
$B_1(q^*_1)v = \rho(B_1(q^*_1))v \leq v$.  
\label{cor:perron-cone}
\end{corollary}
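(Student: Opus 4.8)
The plan is to read the corollary off directly from Lemma~\ref{lem:spectr-at-lfp-is-at-most-1} together with the Perron--Frobenius facts recalled above; essentially no new work is needed, so the whole task is to check that the hypotheses of the lemmas being invoked are met. First I would record that $B_1(q^*_1)$ is a nonnegative square matrix: each entry $(B_1(q^*_1))_{ij} = \frac{\partial P_{1,i}}{\partial x_j}(q^*_1)$ is a polynomial with nonnegative coefficients evaluated at the nonnegative vector $q^*_1$. Next I would check that it is in fact irreducible. The hypothesis states that $x = P_1(x)$ is a strongly-connected MPS, i.e.\ its dependency graph $G_{P_1}$ is strongly connected; and since $q^*_1 > 0$, a monomial of $P_{1,i}$ containing the variable $x_j$ contributes a strictly positive amount to $(B_1(q^*_1))_{ij}$ (all monomial coefficients are positive and the point of evaluation is strictly positive). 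Hence the nonzero pattern of $B_1(q^*_1)$ is exactly the edge set of $G_{P_1}$, so its underlying digraph is strongly connected, i.e.\ $B_1(q^*_1)$ is irreducible. By Lemma~\ref{lem:perron} it therefore has a positive Perron vector $v > 0$ with $B_1(q^*_1) v = \rho(B_1(q^*_1)) v$, which we may scale so that $\|v\|_\infty = 1$.

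Having set that up, I would simply apply Lemma~\ref{lem:spectr-at-lfp-is-at-most-1} to the strongly-connected MPS $x = P_1(x)$, whose LFP is $q^*_1 > 0$; this gives $\rho(B_1(q^*_1)) \leq 1$, which is the first assertion. For the second assertion, combine the Perron eigenvector identity with $\rho(B_1(q^*_1)) \leq 1$ and $v > 0$ to conclude $B_1(q^*_1) v = \rho(B_1(q^*_1)) v \leq v$.

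There is no real obstacle: the corollary is pure bookkeeping, and the only step that genuinely uses $q^*_1 > 0$ rather than merely $q^*_1 \geq 0$ is the identification of the nonzero pattern of $B_1(q^*_1)$ with the edges of $G_{P_1}$, which is what converts strong connectivity of the MPS into irreducibility of the Jacobian evaluated at the LFP; if $q^*_1$ had a zero coordinate this identification (and hence irreducibility, and hence the existence of the positive Perron vector $v$) could fail.
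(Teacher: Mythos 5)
Your proof is correct and follows exactly the paper's route: establish that $B_1(q^*_1)$ is nonnegative and irreducible (using strong connectivity of $x=P_1(x)$ together with $q^*_1>0$ and $y_1>0$), invoke Lemma~\ref{lem:perron} for the positive Perron vector, and apply Lemma~\ref{lem:spectr-at-lfp-is-at-most-1} to get $\rho(B_1(q^*_1))\leq 1$, whence $B_1(q^*_1)v=\rho(B_1(q^*_1))v\leq v$. Your closing remark about where $q^*_1>0$ is genuinely used matches the paper's reasoning as well; nothing is missing.
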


\noindent The following basic lemma, applied to 
$B_1(q^*_1)$ and its normalized Perron vector $v$, yields the desired result about $v$:
\begin{lemma} \label{coneratio} If $A$ is a irreducible, non-negative $n \times n$ matrix with minimum non-zero entry 
$a_{\min}$, and $u \geq 0$ is a non-zero vector in $\mathbb{R}^n$ with $Au \leq u$, then 
$a_{\min} \leq 1$ and if the minimum and maximum coordinates of $u$ are
denoted $u_{\min}$ and $u_{\max}$, respectively, then we have 
$\frac{u_{\min}}{u_{\max}} \geq a_{{\min}}^n$. In particular $u > 0$.\end{lemma}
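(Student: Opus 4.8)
The plan is to establish the three assertions in turn — that $u$ is strictly positive, that $a_{\min} \le 1$, and that $u_{\min}/u_{\max} \ge a_{\min}^n$ — all directly from the single inequality $Au \le u$ together with irreducibility of $A$, with no appeal to Perron--Frobenius theory. Throughout I will say there is an \emph{edge} $i \to j$ when $A_{ij} > 0$; by hypothesis the digraph of these edges is strongly connected, and $a_{\min}$ is the least value of $A_{ij}$ over all such edges, so in particular $a_{\min} > 0$.

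First I would show $u > 0$. Let $S = \{ i : u_i > 0 \}$, which is nonempty since $u \ge 0$ is nonzero. For any index $i \notin S$ we have $0 = u_i \ge (Au)_i = \sum_j A_{ij} u_j = \sum_{j \in S} A_{ij} u_j \ge 0$, and since $u_j > 0$ for each $j \in S$ this forces $A_{ij} = 0$ for every $j \in S$. Hence there is no edge from the complement of $S$ into $S$; if $S$ were a proper subset this would contradict strong connectivity, since no vertex outside $S$ could reach a vertex of $S$. Therefore $S = \{1,\dots,n\}$, i.e.\ $u > 0$, and in particular $u_{\min} > 0$.

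The core observation, used for both remaining claims, is that along any edge $i \to j$ we have $u_i \ge (Au)_i \ge A_{ij} u_j \ge a_{\min} u_j$. Applying this with $i$ a coordinate achieving $u_{\min}$ and $j$ an out-neighbour of $i$ (one exists because every row of an irreducible matrix is nonzero) gives $u_{\min} \ge a_{\min} u_j \ge a_{\min} u_{\min}$, and dividing by $u_{\min} > 0$ yields $a_{\min} \le 1$. For the ratio bound, fix any index $i$ and let $p$ be a coordinate achieving $u_{\max}$. By strong connectivity there is a directed path $i = i_0 \to i_1 \to \cdots \to i_\ell = p$, which we may take to be a shortest one, so its vertices are distinct and $\ell \le n-1$. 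Iterating the core observation along this path gives $u_i \ge a_{\min}^{\ell} u_p = a_{\min}^{\ell} u_{\max}$, and since $0 < a_{\min} \le 1$ and $\ell \le n-1 \le n$, this is at least $a_{\min}^{n} u_{\max}$. Taking $i$ to be a coordinate achieving $u_{\min}$ gives $u_{\min} \ge a_{\min}^{n} u_{\max}$, i.e.\ $u_{\min}/u_{\max} \ge a_{\min}^{n}$.

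I do not expect a genuine obstacle; the only points that need care are the correct use of irreducibility — once to rule out a proper support set for $u$, and once to guarantee that any two vertices are joined by a path of length at most $n-1$ — and the observation that $a_{\min} \le 1$ is precisely what allows the exponent $\ell \le n-1$ to be weakened to the stated $n$.
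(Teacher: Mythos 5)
Your proof is correct and follows essentially the same route as the paper's: both chain the inequality $u_i \ge a_{\min} u_j$ along a path of length at most $n$ from a coordinate achieving $u_{\min}$ to one achieving $u_{\max}$ (the paper phrases this via $(A^k)_{ij}\ge a_{\min}^k$ and $A^k u \le u$, you phrase it by iterating along a shortest path). Your separate support-set argument for $u>0$ and one-edge argument for $a_{\min}\le 1$ are minor reorganizations of the same ingredients, which the paper instead extracts as byproducts of the path argument.
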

\begin{proof}
Let $i$,$j$ be some coordinates with $u_i = u_{\min}$ and $u_j = u_{\max}$. Because $A$ is irreducible and non-negative, there is a power $0 \leq k \leq n$ with $(A^k)_{ij} > 0$.  
By matrix multiplication, for any $k \geq 1$,  $(A^k)_{ij} = \sum \prod_l A_{i_l,i_{l+1}}$, where the sum is taken 
over all length $k+1$ sequences of indices $i_1, \ldots, i_{k+1}$, with 
$i_1 =i$ and $i_{k+1} = j$, and  with $i_l \in \{1,\ldots,n\}$ for all $l$ ranging from $1$ to $k$. 
At least one of these products is non-zero and thus it is  at least $a_{\min}^k$. 
That is $(A^k)_{ij} \geq a_{\min}^k$.  Since $A u \leq u$,
and $A$ is non-negative, 
a simple induction gives that $A^ku \leq u$. 
And since $u$ is non-zero,  $u_{\max} = u_j > 0$, so $0 < A^k_{ij} u_j \leq u_i$. 
Since $u_i = u_{\min}$, this means $u > 0$.  Also,
$1 \geq \frac{u_{\min}}{u_{\max}} = \frac{u_i}{u_j} \geq A^k_{ij} \geq a_{\min}^k$.
Note that since $1 \geq a_{\min}^k$, this implies $a_{\min} \leq 1$.
We know that $1 \leq k \leq n$, so $a_{\min}^k \geq a_{\min}^n$. \qed   \end{proof}

Applying Lemma \ref{coneratio} to $A= B_1(q^*_1)$ and $v$ the Perron vector of $B_1(q^*_1)$,
normalized so that $v_{\max} = 1$, 
and observing that the smallest non-zero entry of $B_1(q^*_1)$ is at least $\alpha$, we get:

\begin{corollary}
If $v$ is the Perron 
vector of $B_1(q^*_1)$, normalized so that $v_{\max} = 1$,
then 
$\frac{v_{\min}}{v_{\max}} = v_{\min} \geq \alpha^n$.
\label{cor:perron-low-bound-ratio}
\end{corollary}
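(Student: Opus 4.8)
The plan is to instantiate Lemma~\ref{coneratio} with $A := B_1(q^*_1)$ and $u := v$, and then read off the conclusion. Two of the three hypotheses are already in hand: it was noted at the start of the proof of Theorem~\ref{scmps} that, because $x=P_1(x)$ is strongly connected and $q^*_1>0$, the Jacobian $B_1(q^*_1)$ is a nonnegative irreducible matrix; and Corollary~\ref{cor:perron-cone} gives $B_1(q^*_1)v=\rho(B_1(q^*_1))v\leq v$, where $v$ is a (strictly positive, in particular nonzero) Perron vector. Hence Lemma~\ref{coneratio} applies and yields $a_{\min}\leq 1$ together with $v_{\min}/v_{\max}\geq a_{\min}^{n}$, where $a_{\min}$ denotes the smallest nonzero entry of $B_1(q^*_1)$.

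The only quantitative point left is the inequality $a_{\min}\geq\alpha$, which I would verify by inspecting the entries $B_1(q^*_1)_{ij}=\bigl(\partial P_i(x,y_1)/\partial x_j\bigr)\big|_{x=q^*_1}$. Since every monomial coefficient of $P(x,y)$ is nonnegative, this partial derivative is a sum of nonnegative terms with no cancellation, so $B_1(q^*_1)_{ij}$ is either $0$ or at least as large as any single one of those terms. As $P(x,y)$ has total degree $\leq 2$, a monomial of $P_i$ in which $x_j$ occurs has one of the forms $c\,x_j$, $c\,x_j x_k$, $c\,x_j^2$, or $c\,x_j y_\ell$ with $c\geq c_{\min}$, contributing respectively $c$, $c\,(q^*_1)_k$, $2c\,(q^*_1)_j$, or $c\,(y_1)_\ell$ to $B_1(q^*_1)_{ij}$; each of these is at least $c_{\min}\cdot\min\{1,\,q^*_{\min},\,y_{\min}\}$. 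Finally, since $q^*_1\leq\mathbf{1}$ forces $\tfrac12 q^*_{\min}<1$, we get $\min\{y_{\min},\tfrac12 q^*_{\min}\}\leq\min\{1,q^*_{\min},y_{\min}\}$, and therefore $\alpha=\min\{1,c_{\min}\}\cdot\min\{y_{\min},\tfrac12 q^*_{\min}\}\leq c_{\min}\cdot\min\{1,q^*_{\min},y_{\min}\}\leq a_{\min}$.

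Putting the pieces together, $t\mapsto t^n$ is nondecreasing on $[0,\infty)$, so $v_{\min}/v_{\max}\geq a_{\min}^{n}\geq\alpha^{n}$; and since $v$ was normalized so that $v_{\max}=1$, this is exactly $v_{\min}\geq\alpha^n$, as claimed. I expect the entrywise lower bound $a_{\min}\geq\alpha$ — specifically, matching the precise constant $\alpha$ from Theorem~\ref{scmps}, with its factor $\tfrac12$ in front of $q^*_{\min}$ and its leading $\min\{1,c_{\min}\}$ — to be the only place that needs any care; everything else is an immediate appeal to Lemma~\ref{coneratio} and Corollary~\ref{cor:perron-cone}.
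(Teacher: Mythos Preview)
Your proof is correct and follows exactly the paper's approach: apply Lemma~\ref{coneratio} with $A=B_1(q^*_1)$ and $u=v$, using Corollary~\ref{cor:perron-cone} for $Av\leq v$, and observe that every nonzero entry of $B_1(q^*_1)$ is at least $\alpha$. The paper in fact states the entrywise bound $a_{\min}\geq\alpha$ without justification, whereas you spell out the case analysis over the possible monomial shapes and the comparison $\min\{1,c_{\min}\}\cdot\min\{y_{\min},\tfrac12 q^*_{\min}\}\leq c_{\min}\cdot\min\{1,q^*_{\min},y_{\min}\}$; this extra detail is fine and your anticipation that it is ``the only place that needs any care'' is accurate.
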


Next, to show that $0 \leq q^*_2 \leq q^*_1$, we consider $P_1^k(0) =
P_1(P_1(...P_1(0)...))$, i.e., the $k$'th iterate of $P_1$ applied to
the vector ${\mathbf 0}$, and $P_2^k(0)$.  
We know 
that for any MPS, $x=P(x)$ with LFP $q^* \in \real^n_{\geq 0}$, we
have $\lim_{k \rightarrow \infty} P^k(0) = q^*$   (\cite{rmc}).
Thanks to the monotonicity of $P$, for any $x \geq 0$, we have 
$P_1(x) \geq P_2(x)$. By the monotonicity of $P_1$ and an easy induction, $P_1^k(0) \geq P_2^k(0)$. 
So $q^*_1 \geq q^*_2$.

Next we want to obtain the bounds 
(\ref{eq:bound-on-diff-of-lfps}) 
and (\ref{eq:bound-on-lin-diff-of-lfps})
on $\|q^*_1 - q^*_2\|_\infty$.
If $q^*_1 = q^*_2$, then we are trivially done so we assume 
that $q^*_2 \neq q^*_1$. Because $x=P_1(x)$ is 
at most quadratic, we can apply Lemma \ref{lem:restate-lem-3-3} to get:
\begin{equation}
B_1(\frac{1}{2}(q^*_1+q^*_2)) (q^*_1 - q^*_2) = P_1(q^*_1) - P_1(q^*_2)  = q^*_1 - P_1(q^*_2)
\label{eq:from-lem-3-3-esy12}
\end{equation}
Multiplying both sides of equation  (\ref{eq:from-lem-3-3-esy12})
by $-1$, and then adding $(q^*_1 - q^*_2)$ to both sides, 
we get:

\begin{eqnarray}
(I - B_1(\frac{1}{2}(q^*_1+q^*_2))) (q^*_1 - q^*_2) & = & 
(q^*_1 - q^*_2) - (q^*_1 - P_1(q^*_2))  \nonumber \\ 
& = & 
P_1(q^*_2) - q^*_2  \nonumber \\ & = & 
{P_1(q^*_2) - P_2(q^*_2) 
\label{eq:mid-deriv-diff-bound}}
\end{eqnarray}

Provided that $(I - B_1(\frac{1}{2}(q^*_1+q^*_2))$ is non-singular,
we can multiply both sides of equation 
(\ref{eq:mid-deriv-diff-bound})
by $(I - B_1(\frac{1}{2}(q^*_1+q^*_2))^{-1}$,
to get
\begin{equation}
q^*_1 - q^*_2  =  (I - B_1(\frac{1}{2}(q^*_1+q^*_2))^{-1}(P_1(q^*_2) - P_2(q^*_2))
\label{eq:iden-deriv-diff-bound}
\end{equation}

We will be taking the $\|.\|_\infty$ norm of equation 
(\ref{eq:iden-deriv-diff-bound})
to obtain the bound we need for $\|q^*_2 - q^*_1\|_\infty$. 
To do this we first need to bound 
$\|(I - B_1(\frac{1}{2}(q^*_1+q^*_2))^{-1}\|_\infty$, and in particular 
we need to show that
$(I - B_1(\frac{1}{2}(q^*_1+q^*_2))^{-1}$ is nonsingular.

By (\ref{eq:from-lem-3-3-esy12}) we have $q^*_1 - P_1(q^*_2) = 
B_1(\frac{1}{2}(q^*_1+q^*_2)) (q^*_1 - q^*_2)$. Now 
$P_1(q^*_2) \geq P_2(q^*_2)=q^*_2$. 
Thus $q^*_1 - q^*_2 \geq q^*_1 - P_1(q^*_2)$.
So $B_1(\frac{1}{2}(q^*_1+q^*_2)) (q^*_1 - q^*_2) \leq (q^*_1 - q^*_2)$. 
Since each polynomial in $P(x,y)$ has degree no more than 2, each entry of 
$B_1(x)$ is a polynomial of degree no more than 1 in both $x$ and in the entries
of $y_1$ when these are treated as variables.
In other words, each entry of $B_1(x)$ can be expressed in
the form $(\sum_i c_i x_i) + (\sum_j c'_j y_j) + c''$, where $c_i, c'_j,$ and
 $c''$ are all non-negative 
coefficients and constants of $P(x,y)$ (possibly multiplied
by $2$ in the case where the term of $P(x,y)$ they originate from is of the form $c x_r^2$) 
for all indices $i$ and $j$. 
So for any $i$,$j$ $B_1(\frac{1}{2}q^*_1)_{ij} \geq \frac{1}{2} B_1(q^*_1)_{ij}$. Also, 
since $q^*_2 \leq q^*_1$ and $q^*_1 > 0$, we have
$B_1(\frac{1}{2}(q^*_1+q^*_2)) \geq B_1(\frac{1}{2}q^*_1)$, and the
matrices  $B_1(\frac{1}{2}(q^*_1+q^*_2))$ and   $B_1(\frac{1}{2}q^*_1)$ are both 
irreducible.  Also, both these matrices  
have non-zero entries $\geq \alpha$,
because the coefficients $c_i, c'_j$, and $c''$ are 
all $\geq c_{\min}$, and the entries of $\frac{1}{2}q^*_1$ and $\frac{1}{2}(q^*_1 + q^*_2)$ are both $\geq \frac{1}{2}{q^*_{\min}}$. 
Now, Lemma \ref{coneratio},
applied to matrix $A = B_1(\frac{1}{2}(q^*_1+q^*_2))$ and
vector $u = (q^*_1 - q^*_2)$,  yields that 
\begin{equation}
\frac{(q^*_1 - q^*_2)_{\min}}{(q^*_1 - q^*_2)_{\max}} \geq \alpha^{n}
\label{eq:ratio-low-q1-q2}
\end{equation}
In particular, we have thus also shown that if $q^*_2 \neq q^*_1$ then: 
\begin{equation}
q^*_2 < q^*_1
\label{eq:q2-less-q1}
\end{equation}

Now suppose that $B_1(x)$ is not independent of $x$.
Since $q^*_2 < q^*_1$, there is some entry of $B_1(\frac{1}{2}(q^*_1+q^*_2))$,
say $B_1(\frac{1}{2}(q^*_1+q^*_2))_{ij}$, which 
is strictly smaller than that of $B_1(q^*_1)_{ij}$.
The entry $B_1(x)_{ij}$ must be of the form 
 $(\sum_i c_i x_i) + (\sum_j c'_j y_j) + c''$, where 
for some $k$,  $c_k > 0$ so that the term $c_k x_k$ depends on $x_k$.
We must therefore have $B_1(q^*_1)_{ij} - (B(\frac{1}{2}(q^*_1+q^*_2))_{ij} 
\geq c_{\min} \frac{1}{2} (q^*_1-q^*_2)_k$,
for some indices $i,j,k$.
From  inequality (\ref{eq:ratio-low-q1-q2}) we know that 
$\frac{(q^*_1-q^*_2)_k}{(q^*_1-q^*_2)_{\max}} \geq \alpha^{n}$, 
for all indices $k$.  Thus,
since $(q^*_1-q^*_2)_{\max} = \| q^*_1 - q^*_2\|_\infty$, we have
\begin{eqnarray}
B_1(q^*_1)_{ij} - B_1(\frac{1}{2}(q^*_1+q^*_2))_{ij}  & \geq & c_{\min} \frac{1}{2} (q^*_1-q^*_2)_k \nonumber \\
& \geq & c_{\min} \frac{1}{2} \alpha^n \|q^*_1 - q^*_2\|_\infty \nonumber \\
& \geq & \alpha^{n+1} \frac{1}{2}\|q^*_1 - q^*_2\|_\infty
\label{ineq:low-bound-for-diff-of-jacobs}
\end{eqnarray}

Since $q^*_2 < q^*_1$, $\frac{1}{2}(q^*_1+q^*_2) < q^*_1$.
This combined with Lemma \ref{lem:spectr-at-lfp-is-at-most-1}
together imply  that $\rho(B_1(\frac{1}{2}(q^*_1+q^*_2))) < 1$, and thus
that $(I - B_1(\frac{1}{2}(q^*_1+q^*_2)))^{-1}$ exists and that
$(I - B_1(\frac{1}{2}(q^*_1+q^*_2)))^{-1} = \sum_{i=0}^\infty B_1(\frac{1}{2}(q^*_1+q^*_2)))^i \geq 0$. 
Now we need the following result from \cite{EWY-qbd-2010}:
\begin{lemma}{(Lemma 18 from \cite{EWY-qbd-2010})} \label{lemma19} Let $A\in\mathbb{R}^{n \times n}_{ \geq 0}$ and $b \in \mathbb{R}^n_{\geq 0}$ such that:
$(I-A)^{-1} = \sum_{k=0} ^\infty A^k$ , $(I - A)^{-1} b \leq \mathbf{1}$, and $A$ is an
irreducible nonnegative matrix whose smallest nonzero entry is $c > 0$, and $b \not= 0$ and $p > 0$ is the largest entry of b. Then $\|(I-A)^{-1}\|_\infty \leq \frac{n}{pc^n}$.
\end{lemma}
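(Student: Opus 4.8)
\textbf{Proof plan for Lemma~\ref{lemma19}.}
The plan is to bound \emph{every} entry of $(I-A)^{-1}$ by $\frac{1}{pc^n}$; since all entries of $(I-A)^{-1}=\sum_{k\geq 0}A^k$ are nonnegative, this immediately yields $\|(I-A)^{-1}\|_\infty=\max_i\sum_j ((I-A)^{-1})_{ij}\leq \frac{n}{pc^n}$.

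First I would record two preliminaries. The hypothesis $(I-A)^{-1}=\sum_{k=0}^\infty A^k$ forces $\rho(A)<1$ and makes $(I-A)^{-1}$ nonnegative (as $A\geq 0$). Moreover $c\leq 1$: the vector $w:=(I-A)^{-1}\mathbf{1}=\mathbf{1}+Aw$ is strictly positive and satisfies $Aw\leq w$, so Lemma~\ref{coneratio} applied to $A$ and $u=w$ gives that the smallest nonzero entry $c$ of $A$ is at most $1$. (Alternatively, argue directly: $A$ irreducible has a cycle of some length $L\leq n$ through a fixed vertex $i$, so $(A^{mL})_{ii}\geq c^{mL}$ for all $m\geq 1$, forcing $\rho(A)\geq c$.)

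Next, fix an index $i_0$ with $b_{i_0}=p$; such an $i_0$ exists and $p>0$ since $b\neq 0$ and $p$ is its largest entry. Reading off the $i$-th coordinate of $(I-A)^{-1}b\leq\mathbf{1}$ and keeping only the (nonnegative) $i_0$-term of the sum gives $((I-A)^{-1})_{ii_0}\,p\leq 1$, i.e. the whole $i_0$-th column of $(I-A)^{-1}$ is entrywise $\leq 1/p$. The crux is then to propagate this single column bound to \emph{all} entries via irreducibility. Given arbitrary indices $i_1$ and $j$, strong connectivity of the graph of $A$ yields a walk from $j$ to $i_0$ of some length $m$ with $0\leq m\leq n-1$, hence $(A^m)_{ji_0}\geq c^m\geq c^n$ (here $c\leq 1$ is used). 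For every $k\geq 0$ we then have $(A^{k+m})_{i_1 i_0}\geq (A^k)_{i_1 j}(A^m)_{j i_0}\geq c^n (A^k)_{i_1 j}$; summing over $k$ and using that the tail $\sum_{k\geq m}(A^k)_{i_1 i_0}$ of the Neumann series is dominated by $((I-A)^{-1})_{i_1 i_0}\leq 1/p$, we get $c^n((I-A)^{-1})_{i_1 j}\leq 1/p$, i.e. $((I-A)^{-1})_{i_1 j}\leq\frac{1}{pc^n}$. As $i_1,j$ were arbitrary, summing the $n$ entries in a row finishes the argument.

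I expect the main obstacle to be exactly this propagation step: one must realize that routing through the single ``good'' column $i_0$, and that $(I-A)^{-1}$ absorbs the Neumann tail $A^m(I-A)^{-1}=\sum_{k\geq m}A^k\leq (I-A)^{-1}$, converts a lower bound on $((I-A)^{-1})_{i_1 j}$ into one on $((I-A)^{-1})_{i_1 i_0}$, which is already controlled. The other ingredients (convergence of the Neumann series, $c\leq 1$, nonnegativity, and the final row-sum estimate) are routine.
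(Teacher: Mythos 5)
The paper does not prove this lemma: it is quoted verbatim from \cite{EWY-qbd-2010}, so there is no in-paper proof to compare against. Your argument is correct and complete: the hypothesis $(I-A)^{-1}b\leq\mathbf{1}$ gives the column bound $((I-A)^{-1})_{i\,i_0}\leq 1/p$, and your tail estimate $c^n\sum_{k\geq 0}(A^k)_{i_1 j}\leq\sum_{k\geq m}(A^k)_{i_1 i_0}\leq((I-A)^{-1})_{i_1 i_0}$, obtained by routing length-$m\leq n$ walks from $j$ to $i_0$ (with $c\leq 1$ justified via Lemma~\ref{coneratio}), correctly propagates this to the uniform entrywise bound $1/(pc^n)$ and hence to the row-sum bound $n/(pc^n)$.
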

We will take $A = B_1(\frac{1}{2}(q^*_1+q^*_2))$ and $b= (I - B_1(\frac{1}{2}(q^*_1+q^*_2)))v$ in this Lemma (recall that 
$v$ is the normalized Perron vector of $B_1(q^*_1)$, such that $v_{\max} = 1$). 
We know that $(I - B_1(q^*_1))v \geq 0$. So 
\begin{equation}
b \geq (B_1(q^*_1) - B_1(\frac{1}{2}(q^*_1+q^*_2)))v \geq {\mathbf 0} 
\label{ineq:low-bound-for-b}
\end{equation}
Inequality
(\ref{ineq:low-bound-for-diff-of-jacobs})
 gives us a lower bound for a single entry of the non-negative matrix $(B_1(q^*_1) - B_1(\frac{1}{2}(q^*_1+q^*_2)))$,
namely the $(i,j)$'th entry.  In $(B_1(q^*_1) - B_1(\frac{1}{2}(q^*_1+q^*_2)))v$
this $(i,j)$'th entry is multiplied by a coordinate of $v$, which is at least $v_{\min}$. 
Thus, combining
inequalities (\ref{ineq:low-bound-for-diff-of-jacobs})
and (\ref{ineq:low-bound-for-b}), we 
have $\|b\|_\infty \geq \alpha^{n+1} \frac{1}{2}\|q^*_1 - q^*_2\|_\infty v_{\min}$.
From Corollary \ref{cor:perron-low-bound-ratio}
we have that 
$v_{\min} \geq \alpha^{n}$. 
So $b \geq 0$ and 
$\|b\|_\infty \geq \alpha^{2n+1} \frac{1}{2}\|q^*_1 - q^*_2\|_\infty$. 
Now, by definition, $(I-B_1(\frac{1}{2}(q^*_1+q^*_2)))^{-1}b = v \leq {\mathbf 1}$. 
Since the smallest non-zero entry of 
 $A = B_1(\frac{1}{2}(q^*_1+q^*_2))$ is at least $\alpha$,
and since $\| b \|_\infty \geq \alpha^{2n+1} \frac{1}{2}\|q^*_1 - q^*_2\|_\infty$, 
Lemma \ref{lemma19} now gives that
\begin{equation}
\|(I - B_1(\frac{1}{2}(q^*_1+q^*_2)))^{-1}\|_\infty \leq \frac{2n} {\alpha^{3n+1}\|q^*_1 - q^*_2\|_\infty}
\label{eq:first-bound-on-norm-of-inverse}
\end{equation}

Next suppose that $B_1(x)$ is independent of $x$ (i.e., $P_1(x)$ consists of linear or constant polynomials in $x$). We can thus
write it as $B_1$, a constant, irreducible Jacobian matrix
of $P_1(x)$,  where the MPS $x=P_1(x)$
has an LFP $q^*_1 > 0$.
It must therefore
be the case that $\rho(B_1) < 1$, because we already know from 
Lemma \ref{lem:spectr-at-lfp-is-at-most-1} that for all $z$ such that $0 \leq z < q^*_1$,  we have $\rho(B_1(z)) < 1$,
but $B_1(z)$ is independent of $z$, because $B_1$ is a constant matrix.

Let us apply Lemma 3.3 of \cite{ESY12}, i.e., Lemma \ref{lem:restate-lem-3-3} above,
with $a= q^*_1$, $b=0$, and $P_1(x)$ in place of $P(x)$.
We get  $(B_1) \cdot (q^*_1 - 0) = P_1(q^*_1) - P(0)$.
Multiplying both sides of this equation by $-1$ and then adding $q^*_1$ to both sides, 
we get  $(I - B_1) q^*_1 = P_1(0)$, and thus
$q^*_1 = (I-B_1)^{-1} P_1(0)$. Since $q^*_1 > 0$, we must have that 
$P_1(0) \not= 0$. 
But $P_1(0) \geq 0$. Indeed, $\|P_1(0)\|_\infty \geq 
c_{\min} \min \{1, y_{\min}^2\} \geq \alpha^2$. 
The smallest non-zero entry of $B_1$ 
is at least $c_{\min} \cdot \min \{1, y_{\min}\} \geq \alpha$. 
We now apply Lemma \ref{lemma19} 
to $A := B_1$ and $b := P_1(0)$, where we note 
that $(I-B_1)^{-1}P_1(0) = q^*_1 \leq {\mathbf 1}$.  Lemma \ref{lemma19} thus
gives:

\begin{equation} 
\|(I - B_1(\frac{1}{2}(q^*_1+q^*_2)))^{-1}\|_\infty \leq n \alpha^{-(n+2)}
\label{eq:lin-bound-on-norm-of-inverse}
\end{equation}

Since $\|q^*_1 - q^*_2\|_\infty \leq 1$ ($q^*_1 \leq \mathbf{1}$ and $q^*_2 \geq 0$), and $0 < \alpha \leq 1$, and since $n \geq 1$, the 
upper bound (\ref{eq:first-bound-on-norm-of-inverse}) for
the non-linear case is worse
than the upper bound 
(\ref{eq:lin-bound-on-norm-of-inverse})
for the linear case, so the upper bound
(\ref{eq:first-bound-on-norm-of-inverse}) holds in all cases.

We have shown that $(I - B_1(\frac{1}{2}(q^*_1+q^*_2)))$ is non-singular,
since $\rho( B_1(\frac{1}{2}(q^*_1+q^*_2))) < 1$.
Equation (\ref{eq:iden-deriv-diff-bound}) is thus valid, and taking norms of (\ref{eq:iden-deriv-diff-bound}) yields:
\begin{equation}
\|q^*_1 - q^*_2\|_\infty \leq \|(I - B_1(\frac{1}{2}(q^*_1+q^*_2)))^{-1}\|_\infty \|P_1(q^*_2) - P_2(q^*_2)\|_\infty 
\label{eqn:bound-from-norms} 
\end{equation}
Inserting our upper bound 
(\ref{eq:first-bound-on-norm-of-inverse}) for $\|(I - B_1(\frac{1}{2}(q^*_1+q^*_2)))^{-1}\|_\infty$ gives:
$$\|q^*_1 - q^*_2\|_\infty \leq \frac{2n} {\alpha^{3n+1}\|q^*_1 - q^*_2\|_\infty}
\|P_1(q^*_2) - P_2(q^*_2)\|_\infty$$
We now move the $\|q^*_1 - q^*_2\|_\infty$ terms to the left and take square roots to obtain:

\begin{equation}
\|q^*_1 - q^*_2\|_\infty \leq \sqrt{2n \alpha^{-(3n+1)} \|P_1(q^*_2) - P_2(q^*_2)\|_\infty}
\label{eq:almost-up-bound-on-q1-q2-sc}
\end{equation}
\begin{lemma} If $0 \leq x \leq 1$, then 
$\|P_1(x) - P_2(x)\|_\infty \leq 2 \|P(\textbf{1}, \textbf{1})\|_\infty \|y_1 - y_2\|_\infty$. 
\label{lem:bound-on-diff-P1-P2-in-0-1}
\end{lemma}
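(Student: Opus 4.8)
The plan is to prove the bound one coordinate at a time and, within each coordinate $i$, one monomial at a time, using that each $P_i(x,y)$ has total degree at most $2$ with non-negative coefficients. Write $P_i(x,y) = \sum_m c_m\,\mu_m(x,y)$, where each $\mu_m$ is a monic monomial of degree $\le 2$ in the $x$ and $y$ variables and $c_m \ge 0$; note $P_i(\mathbf{1},\mathbf{1}) = \sum_m c_m$, so it suffices to show that $0 \le \mu_m(x,y_1) - \mu_m(x,y_2) \le 2\|y_1 - y_2\|_\infty$ for every monomial $\mu_m$ and every $x$ with $0 \le x \le \mathbf{1}$, and then sum over $m$ and take the maximum over $i$.

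First I would observe that monomials $\mu_m$ containing no $y$-variable contribute nothing, since then $\mu_m(x,y_1) = \mu_m(x,y_2)$. For a monomial containing at least one $y$-variable, the constraint of degree $\le 2$ leaves only the shapes $y_j$, $x_k y_j$, $y_j y_k$ (with $j \ne k$), and $y_j^2$. For $y_j$ and $x_k y_j$ the difference equals $y_{1,j}-y_{2,j}$ or $x_k(y_{1,j}-y_{2,j})$, which lies in $[0,\|y_1-y_2\|_\infty]$ because $0 \le x \le \mathbf{1}$ and $0 \le y_2 \le y_1$. For $y_j y_k$ I would telescope, $y_{1,j}y_{1,k} - y_{2,j}y_{2,k} = y_{1,j}(y_{1,k}-y_{2,k}) + y_{2,k}(y_{1,j}-y_{2,j})$; both summands are non-negative, and each is at most $\|y_1-y_2\|_\infty$ since $y_{1,j} \le 1$ and $y_{2,k} \le 1$ (this is where $y_1 \le \mathbf{1}$ enters), so the difference is in $[0, 2\|y_1-y_2\|_\infty]$. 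For $y_j^2$ I would factor $y_{1,j}^2 - y_{2,j}^2 = (y_{1,j}+y_{2,j})(y_{1,j}-y_{2,j})$, which is in $[0,2\|y_1-y_2\|_\infty]$ as $y_{1,j}+y_{2,j} \le 2$.

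Summing these per-monomial bounds over all monomials of $P_i$ gives $0 \le P_i(x,y_1) - P_i(x,y_2) \le 2\|y_1-y_2\|_\infty\sum_m c_m = 2 P_i(\mathbf{1},\mathbf{1})\|y_1-y_2\|_\infty \le 2\|P(\mathbf{1},\mathbf{1})\|_\infty\|y_1-y_2\|_\infty$, and taking the maximum over $i$ yields the claimed inequality. An equivalent route is to write $P_i(x,y_1)-P_i(x,y_2) = \int_0^1 \sum_j \partial_{y_j}P_i\big(x,\,y_2+t(y_1-y_2)\big)\,(y_{1,j}-y_{2,j})\,dt$, note that $y_2+t(y_1-y_2) \le \mathbf{1}$ for $t\in[0,1]$ so each $\partial_{y_j}P_i$ appearing there is bounded by its value at $(\mathbf{1},\mathbf{1})$, and use $\sum_j \partial_{y_j}P_i(\mathbf{1},\mathbf{1}) \le 2P_i(\mathbf{1},\mathbf{1})$ (Euler's identity applied to the $y$-part, together with degree $\le 2$ in $y$); the accounting of the factor $2$ is identical. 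I do not expect a genuine obstacle here: the only delicate point is keeping track of the factor $2$, which arises entirely from the monomials that are quadratic in $y$ (the $y_j y_k$ and $y_j^2$ cases), and the hypotheses $0 \le x \le \mathbf{1}$ and $0 \le y_2 \le y_1 \le \mathbf{1}$ are precisely what makes those quadratic cross-terms controllable.
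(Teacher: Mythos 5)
Your proof is correct and follows essentially the same route as the paper's: both arguments note that only the $y$-dependent terms survive in the difference, telescope the $y$-quadratic terms as $y_{1,i}y_{1,j}-y_{2,i}y_{2,j} \leq (y_{1,i}-y_{2,i})+(y_{1,j}-y_{2,j})$ using $0 \leq y_2 \leq y_1 \leq \mathbf{1}$ (which is where the factor $2$ comes from), bound the $y$-linear terms by $\|y_1-y_2\|_\infty$ using $0 \leq x \leq \mathbf{1}$, and then sum the coefficients into $P_i(\mathbf{1},\mathbf{1}) \leq \|P(\mathbf{1},\mathbf{1})\|_\infty$. The only difference is organizational (monomial-by-monomial versus grouping terms by type), so there is nothing substantive to add.
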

\begin{proof}
Since each entry of $P(x,y)$ is a quadratic 
polynomial, for each $b \in \{1,2\}$ 
and each $d \in \{1,\ldots,n\}$,
the $d$'th coordinate, $(P_b(x))_d$, of $P_b(x) = P(x,y_b)$  
has the form $$\sum_{i,j} a_{d,i,j} x_i x_j + \sum_{i,j} c_{d,i,j} y_{b,i} y_{b,j}  + 
\sum_{i,j} c'_{d,i,j} x_i y_{b,j} + \sum_k a'_{d,k} x_k + \sum_{k} c''_{d,k} y_{b,k} +   c'''_d$$
where $y_{b,j}$ refers to the $j$'th coordinate of the $m$-vector $y_b$,
and where all the coefficients $a_{d,i,j}$, $c_{d,i,j}$, $c'_{d,i,j}$, $c''_{d,k}$ and $c'''_d$, 
are non-negative.
Also, recall $0 < y_1 \leq {\mathbf 1}$ and $0 \leq y_2 \leq y_1$. 
Thus, 
\vspace*{0.1in}

\noindent $\| P_1(x) - P_2(x) \|_\infty  $
\begin{eqnarray*}
& = &   \max_d
 \sum_{i,j} c_{d,i,j} (y_{1,i}y_{1,j} - y_{2,i} y_{2,j}) +  \sum_{i,j} c'_{d,i,j} x_i (y_{1,j} - y_{2,j}) + \sum_k  c''_{d,k} (y_{1,k} - y_{2,k})\\
& \leq & 
\max_d \sum_{i,j}  c_{d,i,j}  ( (y_{1,i} - y_{2,i})  +  (y_{1,j}  - y_{2,j})) +  \sum_{i,j} c'_{d,i,j} (y_{1,j}- y_{2,j}) + \sum_k c''_{d,k} (y_{1,k} - y_{2,k})\\ 
& \leq & \max_d \sum_{i,j} 2 \cdot c_{d,i,j} \cdot \| y_1 - y_2 \|_\infty +  \sum_{i,j} 
c'_{d,i,j} \cdot  \| y_1 - y_2 \|_\infty  + \sum_k c''_{d,k} \| y_1 - y_2 \|_\infty \\
& = &  (\max_d \sum_{i,j} 
2 c_{d,i,j}  + c'_{d,i,j} + c''_{d,k}) \cdot \| y_1 - y_2 \|_\infty\\  
& \leq & 
2 \| P(1,1) \|_{\infty} \| y_1 - y_2 \|_\infty 
\end{eqnarray*} 
\qed   
\end{proof}

\noindent Combining (\ref{eq:almost-up-bound-on-q1-q2-sc}) and
Lemma
\ref{lem:bound-on-diff-P1-P2-in-0-1}, we have, 
$$\|q^*_1 - q^*_2\|_\infty \leq \sqrt{4n \alpha^{-(3n+1)} \|P(\textbf{1}, \textbf{1})\|_\infty \|y_2 - y_1\|_\infty}$$
which completes the proof of the
first inequality of part ({\bf 1.}) of Theorem 
\ref{scmps}. 

We show the second inequality (\ref{eq:bound-on-lin-diff-of-lfps}) of part ({\bf 1.}) in the next lemma.

\begin{lemma} If $B_1(x)$ is a constant matrix, i.e. $x=P_1(x)$ is linear,
$$ \|q^*_1 - q^*_2\|_\infty \leq 2n \alpha^{-(n+2)} \|P(\mathbf{1}, \mathbf{1})\|_\infty \|y_1 - y_2\|_\infty $$
\label{lem:linear-scc} \end{lemma}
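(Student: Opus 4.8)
The plan is to reuse the machinery already assembled for the non-linear case, since the linear case is strictly simpler. The key identity
\[
q^*_1 - q^*_2 = \bigl(I - B_1(\tfrac{1}{2}(q^*_1+q^*_2))\bigr)^{-1}\bigl(P_1(q^*_2) - P_2(q^*_2)\bigr),
\]
i.e.\ equation (\ref{eq:iden-deriv-diff-bound}), was derived via Lemma \ref{lem:restate-lem-3-3}, which only requires that $P$ have degree $\leq 2$; it therefore holds verbatim here. When $x=P_1(x)$ is linear the Jacobian $B_1(x)$ is a constant matrix $B_1$, so $B_1(\tfrac{1}{2}(q^*_1+q^*_2)) = B_1$, and the identity reads $q^*_1 - q^*_2 = (I-B_1)^{-1}(P_1(q^*_2)-P_2(q^*_2))$.

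First I would check that $(I-B_1)$ is non-singular: by Lemma \ref{lem:spectr-at-lfp-is-at-most-1}, $\rho(B_1(z)) < 1$ for every $z$ with $0 \leq z < q^*_1$, and since $B_1$ does not depend on $z$ this forces $\rho(B_1) < 1$; hence by Lemma \ref{lem:inverse-of-sub-1-spec-exists}, $(I-B_1)^{-1}$ exists and equals $\sum_{i \geq 0} B_1^i \geq 0$. I would then invoke the bound already established as inequality (\ref{eq:lin-bound-on-norm-of-inverse}) — whose derivation applies Lemma \ref{lemma19} with $A := B_1$ and $b := P_1(0)$, using that $B_1$ is irreducible (because $x=P_1(x)$ is strongly connected), that $\|P_1(0)\|_\infty \geq \alpha^2$, and that $(I-B_1)^{-1}P_1(0) = q^*_1 \leq \mathbf{1}$ — namely $\|(I-B_1)^{-1}\|_\infty \leq n\alpha^{-(n+2)}$.

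It remains to bound $\|P_1(q^*_2) - P_2(q^*_2)\|_\infty$. Since $0 \leq q^*_2 \leq q^*_1 \leq \mathbf{1}$, Lemma \ref{lem:bound-on-diff-P1-P2-in-0-1} (which does not assume $P_1$ is linear) applies with $x := q^*_2$ and gives $\|P_1(q^*_2) - P_2(q^*_2)\|_\infty \leq 2\|P(\mathbf{1},\mathbf{1})\|_\infty \|y_1 - y_2\|_\infty$. Taking the $\|\cdot\|_\infty$ norm of the displayed identity and combining the two estimates yields
\[
\|q^*_1 - q^*_2\|_\infty \;\leq\; \|(I-B_1)^{-1}\|_\infty\,\|P_1(q^*_2) - P_2(q^*_2)\|_\infty \;\leq\; 2n\,\alpha^{-(n+2)}\,\|P(\mathbf{1},\mathbf{1})\|_\infty\,\|y_1 - y_2\|_\infty,
\]
which is exactly the claim. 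There is essentially no conceptual obstacle here, only bookkeeping: the one thing worth double-checking is that the chain of reasoning producing (\ref{eq:iden-deriv-diff-bound}) and (\ref{eq:lin-bound-on-norm-of-inverse}) never secretly used non-linearity of $P_1$. It does not — it uses only degree $\leq 2$, irreducibility of $B_1$, and $q^*_1 > 0$ — so everything transfers directly, and the linear estimate is in fact obtained as a byproduct of the argument that was already written for the general (possibly non-linear) case.
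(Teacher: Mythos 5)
Your proposal is correct and follows essentially the same route as the paper: take norms in the identity $q^*_1 - q^*_2 = (I-B_1)^{-1}(P_1(q^*_2)-P_2(q^*_2))$, bound $\|(I-B_1)^{-1}\|_\infty \leq n\alpha^{-(n+2)}$ via Lemma \ref{lemma19} applied with $A:=B_1$ and $b:=P_1(0)$ (this is exactly inequality (\ref{eq:lin-bound-on-norm-of-inverse})), and bound the second factor by Lemma \ref{lem:bound-on-diff-P1-P2-in-0-1}. The only cosmetic difference is that the paper dispatches the case $q^*_1=q^*_2$ as trivial up front, whereas you verify non-singularity of $I-B_1$ directly so that the identity holds unconditionally; both are fine.
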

\begin{proof}
If $q^*_1=q^*_2$, then the result is trivial. 
So we assume that $q^*_1 \not= q^*_2$.
In the proof of the first inequality, under the assumption that $B_1(x)$ is a constant, 
we obtained equation (\ref{eq:lin-bound-on-norm-of-inverse}). 
So, under the assumptions of this Lemma, equation (\ref{eq:lin-bound-on-norm-of-inverse}) is valid,
and we can substitute the bound (\ref{eq:lin-bound-on-norm-of-inverse}) into the equation (\ref{eqn:bound-from-norms}) instead. This gives

$$ \|q^*_1 - q^*_2\|_\infty \leq n \alpha^{-(n+2)} \|P_1(q^*_2) - P_2(q^*_2)\|_\infty $$
Again, Lemma \ref{lem:bound-on-diff-P1-P2-in-0-1} gives a bound on $\|P_1(q^*_2) - P_2(q^*_2)\|_\infty $. Substituting this gives:

$$ \|q^*_1 - q^*_2\|_\infty \leq n \alpha^{-(n+2)} 2 \|P(\mathbf{1}, \mathbf{1})\|_\infty \|y_1 - y_2\|_\infty $$
\qed
\end{proof}

We will next establish part ({\bf 2.}) of Theorem \ref{scmps}.
Let us first prove that, starting from $x^{[0]} := 0$, 
all the iterations of R-NM, applied
to $x=P_2(x)$ are defined.

We firstly note that if $0 \leq x^{[k]} \leq q^*_2$ and
$\rho(B_2(x^{[k]})) < 1$, then $\mathcal{N}_{P_2}(x^{[k]})$ is
well-defined and $0 \leq x^{[k+1]} \leq q^*_2$.  If
$\rho(B_2(x^{[k]})) < 1$, then by Lemma
\ref{lem:inverse-of-sub-1-spec-exists}, $(I- B_2(x^{[k]}))$ is
non-singular and so $\mathcal{N}_{P_2}(x^{[k]})$ is
well-defined. Lemma \ref{lem:inverse-of-sub-1-spec-exists} also gives
that $(I- B_2(x^{[k]}))^{-1} = \sum_{i=0}^\infty B_2(x^{[k]})^i \geq
0$. Lemma \ref{newton-from-esy12} yields that:
$$q^*_2 -  \mathcal{N}_{P_2}(x^{[k]}) = (I-B_2(x^{[k]}))^{-1}\frac{B_2(q^*_2) - B_2(x^{[k]})}{2}(q^*_2 - x^{[k]})$$
Note that $(q^*_2 - x^{[k]}) \geq 0$, thus that
$B_2(q^*_2) - B_2(x^{[k]}) \geq 0$, and we have just shown that $(I-B_2(x^{[k]}))^{-1} \geq 0$. So 
all the terms on the right of the above equation are non-negative, and thus $q^*_2 -  \mathcal{N}_{P_2}(x^{[k]}) \geq 0$. That is $q^*_2 \geq  \mathcal{N}_{P_2}(x^{[k]})$. $x^{[k+1]}$ is defined by rounding down  $\mathcal{N}_{P_2}(x^{[k]})$
and maintaining non-negativity, 
thus for all coordinates $i$, either $x^{[k+1]}_i=0$,
in which case trivially we have $x^{[k+1]}_i = 0 \leq (q^*_2)_i$,
or else
 $0 \leq  x^{[k+1]}_i \leq \mathcal{N}_{P_2}(x^{[k]})_i \leq (q^*_2)_i$.
Thus $x^{[k+1]} \leq q^*_2$.

What is still missing is to show that $\rho(B_2(x^{[k+1]})) < 1$. If we can show this
then by an easy induction, for all $k$, $\mathcal{N}_{P_2}(x^{[k]})$ is well-defined and $0 \leq x^{[k]} \leq q^*_2$. 
We will prove $\rho(B_2(x^{[k+1]})) < 1$ by considering separately the cases where $P_1(x)$ contains non-linear
or only linear
polynomials.

\begin{lemma} \label{non-linear-sc-strict} If $x=P(x)$ is a strongly-connected quadratic MPS 
with $n$ variables, with LFP $q^* > 0$,
and there is some non-linear quadratic term in some polynomial $P_i(x)$, 
then if $0 \leq z < q^*$, then $\mathcal{N}_{P}(z)$ is defined and $\mathcal{N}_{P}(z) < q^*$. \end{lemma}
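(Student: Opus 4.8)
The plan is to read off both conclusions from the closed form of the Newton residual supplied by Lemma~\ref{newton-from-esy12}. First I would note that $\mathcal{N}_P(z)$ is well-defined: since $0 \le z < q^*$, Lemma~\ref{lem:spectr-at-lfp-is-at-most-1} gives $\rho(B(z)) < 1$, so by Lemma~\ref{lem:inverse-of-sub-1-spec-exists} the matrix $I - B(z)$ is non-singular and $(I-B(z))^{-1} = \sum_{m \ge 0} B(z)^m \ge I \ge 0$. Lemma~\ref{newton-from-esy12} then gives
$$q^* - \mathcal{N}_P(z) \;=\; (I - B(z))^{-1} u, \qquad u \;:=\; \tfrac{1}{2}\big(B(q^*) - B(z)\big)(q^* - z).$$
Every entry of $B(x)$ is a polynomial with non-negative coefficients, so $B$ is monotone; since $z \le q^*$ we get $B(q^*) \ge B(z)$, and since $q^* - z > 0$ (the hypothesis $z < q^*$ is coordinatewise, which is also what Lemma~\ref{lem:spectr-at-lfp-is-at-most-1} needs) we have $u \ge 0$, hence $q^* - \mathcal{N}_P(z) \ge u \ge 0$. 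This already yields $\mathcal{N}_P(z) \le q^*$; the real work is upgrading this to a strict, coordinatewise inequality.

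For strictness, write $v := q^* - \mathcal{N}_P(z) \ge 0$ and $Z := \{\, k : v_k = 0 \,\}$, and let $T$ be the set of indices $i$ for which $P_i(x)$ contains a non-linear quadratic term, so $T \neq \emptyset$ by hypothesis. I claim $u_i > 0$ for every $i \in T$: differentiating the offending monomial $c\,x_a x_b$ (or $c\,x_a^2$) of $P_i$ with respect to $x_a$ shows $B(q^*)_{ia} - B(z)_{ia} \ge c(q^*_b - z_b) > 0$ (resp.\ $\ge 2c(q^*_a - z_a) > 0$), while the remaining terms of $u_i = \tfrac12 \sum_l (B(q^*)_{il} - B(z)_{il})(q^* - z)_l$ are non-negative. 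Since $(I-B(z))^{-1} \ge I$ gives $v \ge u$, this forces $v_i > 0$ for $i \in T$; hence $Z \cap T = \emptyset$, so $Z$ is a proper subset of $\{1,\ldots,n\}$.

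The key step is that $Z$ is closed under the out-edges of the dependency graph $G_P$. Rewriting $(I-B(z))v = u$ as $v = u + B(z)v$, for $k \in Z$ this reads $0 = u_k + \sum_l B(z)_{kl} v_l$, a sum of non-negative terms; hence $u_k = 0$, so $k \notin T$, i.e.\ $P_k$ is linear, and $B(z)_{kl} v_l = 0$ for all $l$. But when $P_k$ is linear, $B(x)_{kl} = p_{kl}$ is just the constant coefficient of $x_l$ in $P_k$, independent of $x$, and by definition $(k,l)$ is an edge of $G_P$ precisely when $p_{kl} > 0$. So every $G_P$-edge $(k,l)$ out of a node $k \in Z$ has $B(z)_{kl} = p_{kl} > 0$, forcing $v_l = 0$, i.e.\ $l \in Z$. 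Thus $Z$ is an edge-closed proper subset of the strongly connected graph $G_P$, which is impossible unless $Z = \emptyset$; therefore $v > 0$, i.e.\ $\mathcal{N}_P(z) < q^*$. The one point that needs care is exactly this observation that a surviving index in $Z$ must carry a linear equation, which is what lets us replace reachability in the (possibly sparser) graph of $B(z)$ by reachability in $G_P$ and invoke strong connectivity; everything else is routine bookkeeping.
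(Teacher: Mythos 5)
Your proof is correct and takes essentially the same route as the paper's: well-definedness via Lemmas \ref{lem:spectr-at-lfp-is-at-most-1} and \ref{lem:inverse-of-sub-1-spec-exists}, non-negativity of the residual from Lemma \ref{newton-from-esy12}, strict positivity at every coordinate whose polynomial carries a degree-2 term, and propagation to the remaining coordinates via strong connectivity using precisely the observation you flag --- that a coordinate where strictness could fail must have a linear equation and hence constant positive Jacobian entries on its out-edges. The only (cosmetic) difference is that the paper phrases the propagation by following a path forward to a nonlinear coordinate and using $(I-B(z))^{-1} \geq B^k(z)$, whereas you show the zero set of the residual is edge-closed in $G_P$; these are dual formulations of the same argument.
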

\begin{proof} Lemma \ref{lem:spectr-at-lfp-is-at-most-1} tells us
that $\rho(B(q^*)) \leq 1$. Non-linearity of $P(x)$ means that $B(x)$ does depend on $x$. 
That is, some entry of $B(x)$ contains  a term of the form $c x_i$ for some $x_i$ with $c > 0$. 
So $B(z) \not= B(q^*)$, and $B(z) \leq B(q^*)$ since $B$ is monotone. 
Since $x=P(x)$ is strongly-connected and $q^* > 0$, 
Lemma \ref{lem:spectr-at-lfp-is-at-most-1}
yields that $\rho(B(z)) < 1$. By Lemma \ref{lem:inverse-of-sub-1-spec-exists}, $(I-B(z))$ is non-singular and so the Newton iterate ${\mathcal N}_{P}(z)$ is well-defined. 
Consider the equation given by Lemma \ref{newton-from-esy12}:
$$q^* -  \mathcal{N}_{P}(z) = (I-B(z))^{-1}\frac{B(q^*) - B(z)}{2}(q^* - z)$$
We know that $q^* - z > 0$, and thus $B(q^*) - B(z) \geq 0$. 
Since $\rho(B(z)) < 1$, by Lemma \ref{lem:inverse-of-sub-1-spec-exists}, $(I-B(z))^{-1} = \sum_{k=0}^\infty B(z)^k \geq 0$. This and 
Lemma \ref{newton-from-esy12} is already enough 
to yield that $q^* - \mathcal{N}_{P}(z) \geq 0$, and we just need to show that this is a strict inequality. 

We first show that if $P_i(x)$ contains a term of degree 2, then $(\frac{B(q^*) - B(z)}{2}(q^* - z))_i > 0$. This term of degree 2 must be of the form $cx_jx_k$ for some $j,k$. Then $B(x)_{i,j}$ has a term $cx_k$ with $c > 0$ and so $(B(q^*) - B(z))_{i,j} \geq c(q^* - z)_k$. But then $(\frac{B(q^*) - B(z)}{2}(q^* - z))_i \geq c(q^* - z)_k(q^* - z)_j > 0$.

Now we will show that for all $i \in \{1,\ldots,n\}$, $(q-\mathcal{N}_{P}(z))_i > 0$. If $P_i(x)$ 
contains a term of degree 2, then we have just shown that $(\frac{B(q^*) - B(z)}{2}(q^* - z))_i > 0$.
 But $(I-B(z))^{-1} = \sum_{k=0}^\infty B(z)^k \geq I$. 
So $(q-\mathcal{N}_{P}(z))_i \geq (\frac{B(q^*) - B(z)}{2}(q^* - z))_i > 0$. 
 If $P_i(x)$ does not contain a term of degree 2, there must be some other $x_j$ with $P_j(x)$ containing a term of degree 2 and, since $x=P(x)$ is strongly-connected, $x_i$ depends on $x_j$, possibly indirectly.
 That is, there is a sequence of variables $i_0$,$i_1,...,i_l$ with $l < n$, $i_0=i$,$i_l=j$, and for each $0 < m \leq l$,$x_{i_m}$ appears in a term of $P(x)_{i_{m-1}}$.
 Let $k$ be the the least integer such that $P(x)_{i_k}$ contains a term of degree 2.
 Then if $0 < m \leq k$, $x_{i_m}$ appears in a degree 1 term in $P(x)_{i_{m-1}}$, that is one of the form $c_mx_m$ with $c_m > 0$. So $B(x)_{i_{m-1},i_m}$ contains the constant term $c_m > 0$. So $B(z)_{i_{m-1},i_m} \geq c_m > 0$.
 So $B^k(z)_{i,i_k} \geq \prod_{m=0}^{k-1} B(z)_{i_m,i_{m+1}} \geq  \prod_{m=0}^{k-1} c_m > 0$.
 Since $P(x)_{i_k}$ contains a term of degree 2, from above $(\frac{B(q^*) - B(z)}{2}(q^* - z))_{i_k} > 0$.
 So $(B^k(z)\frac{B(q^*) - B(z)}{2}(q^* - z))_i > 0$.
 But $q^* -  \mathcal{N}_{P}(z) = (I-B(z))^{-1}\frac{B(q^*) - B(z)}{2}(q^* - z) = (\sum_{m=0}^\infty B^m(z)) \frac{B(q^*) - B(z)}{2}(q^* - z) \geq B^k(z) \frac{B(q^*) - B(z)}{2}(q^* - z)$. So $(q^*-\mathcal{N}_{P}(z))_i > 0$ for all $i$, as required.
\qed   \end{proof}
We will only actually need to apply
Lemma \ref{non-linear-sc-strict} in the case when  $q^*_2=q^*_1$ and $x=P_1(x)$ is non-linear. 

Suppose that $q^*_1 = q^*_2$ and some polynomial in $P_1(x)$ is non-linear in $x$. 
We claim that then $P_1(x) \equiv
P_2(x)$.  That is, for all those variables in $y$,  say $(y)_j$, that actually appear in
some polynomials in $P(x,y)$, it must be the case that $(y_1)_j= (y_2)_j$.  
Otherwise, if there is some variable 
$(y)_j$ with $(y_2)_j < (y_1)_j$ such that $(y)_j$ appears in $P_i(x,y)$, then 
$(P_2(q^*_1))_i
= (P(q^*_1,y_2))_i < P(q^*_1,y_1))_i = (q^*_1)_i$, so $q^*_1$ is not
a fixed point of $P_2(x)$, contradicting that $q^*_1=q^*_2$.  Thus if $x=P_1(x)$ is non-linear 
and $q^*_1 = q^*_2$ then
$x=P_2(x)$ is also non-linear and $q^*_2 = q^*_1 > 0$, so we can use
Lemma \ref{non-linear-sc-strict},  which shows that if $0 \leq x^{[k]} < q^*_2$,
then $\mathcal{N}_{P_2}(x^{[k]}) < q^*_2$ and so $0 \leq x^{[k+1]} < q^*_2 \leq q^*_1$. 
Since $x^{[k+1]} < q^*_1$,  we have $\rho(B_1(x^{[k+1]})) < 1$.
Since $B_2(x^{[k+1]}) \leq B_1(x^{[k+1]})$, 
we also have $\rho(B_2(x^{[k+1]})) < 1$.

This leaves us with two cases remaining to show that all Newton iterates exist:
first, 
the case where $x=P_1(x)$ is linear or constant, and second, the case where $x=P_1(x)$ is non-linear and $q^*_2 \not= q^*_1$. 
Recall that it is sufficient to show that $\rho(B_2(x^{[k]})) < 1$ for all iterates
in order to show that all R-NM iterates exist. 
It thus suffices to show that in these cases for any $0 \leq z \leq q^*_2$, $\rho(B_2(z)) < 1$. 

For the first case, suppose that $x=P_1(x)$ is linear. Then $B_1(x)$ is a constant
matrix.  Thus $B_1(z) = B_1(0)$ for all $0 \leq z$. But
Lemma \ref{lem:spectr-at-lfp-is-at-most-1} tells us that, since $0 < q^*_1$, $\rho(B_1(0)) < 1$. 
Thus $\rho(B_1(z)) < 1$ for all $0 \leq z \leq q^*_2$. Since $0 \leq B_2(z) \leq B_1(z)$,
we have $\rho(B_2(z)) < 1$ for all $0 \leq z \leq q^*_2$.

For the second case, suppose that $q^*_2 \not= q^*_1$ and that $x=P_1(x)$ is non-linear,
and thus $B_1(x)$ depends on $x$. 
Then we have previously argued that $q^*_2 < q^*_1$  (see inequality (\ref{eq:q2-less-q1})). 
But then $B_1(q^*_2) \not= B_1(q^*_1)$. For any $0 \leq z \leq q^*_2$, 
$B_2(z) \leq B_2(q^*_2) \leq B_1(q^*_2) \leq B_1(q^*_1)$ but because $B_1(q^*_2) \not= B_1(q^*_1)$, 
we have $B_2(z) \not= B_1(q^*_1)$. 
But $B_1(q^*)$ is irreducible, and 
Lemma \ref{lem:irr-non-neg-spec-lower} then tells us that $\rho(B_2(z)) < \rho(B_1(q^*_1))$. 
But we know, by Corollary \ref{cor:perron-cone}, that $\rho(B_1(q^*_1) \leq 1$. So  $\rho(B_2(z)) < 1$.

Thus the R-NM iterations applied to $x=P_2(x)$ are defined in all cases, and yield iterates
$0 \leq x^{[k]} \leq q^*$, for all $k \geq 0$.
We can now prove the upper bound on the rate of
convergence for R-NM
applied to $x = P_2(x)$.

\begin{lemma} \label{roundedNewton} Suppose an MPS, $x = P(x)$, with $n$ variables
has LFP $0 \leq q^* \leq 1$, and for some $n$-vector $v > 0$ we have
$B(q^*)v \leq v$.   
Suppose we perform $g \geq h-1$ iterations of R-NM with 
parameter $h \geq 2 + \lceil \log \frac{v_{\max}}{v_{\min} \cdot \epsilon} \rceil$ 
on the MPS $x=P(x$),  
and suppose that for all $k \geq 0$,
every iteration $x^{[k]}$ is defined and $0 \leq x^{[k]} \leq q^*$.  Then
$\|q^* - x^{[g]}\|_\infty 
\leq \epsilon$. \end{lemma}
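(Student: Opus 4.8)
The plan is to combine the convergence bound of Lemma~\ref{cone} (the ``cone'' lemma from \cite{lfppoly}) with a careful accounting of the rounding error introduced at each step. First I would set up the following inductive claim: for each $k \ge 0$, the exact Newton operator satisfies $q^* - \mathcal{N}_P(x^{[k]}) \le 2^{-k} v$ if $x^{[k]}$ were the result of $k$ exact Newton steps from $\mathbf{0}$; but since we round, I instead track $x^{[k]}$ directly and show by induction that $q^* - x^{[k]} \le (2^{-k} + c_k 2^{-h}) v / v_{\min}$ or some comparable expression, where $c_k$ is a modest (linear in $k$) accumulation factor. The base case $x^{[0]} = \mathbf{0}$ gives $q^* - x^{[0]} = q^* \le \mathbf{1} \le (1/v_{\min}) v$ since $\|v\|$ can be taken with $v_{\max}$ appropriately normalized and $q^* \le \mathbf 1$. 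The inductive step uses Lemma~\ref{cone}: if $q^* - x^{[k]} \le \lambda v$, then $q^* - \mathcal{N}_P(x^{[k]}) \le (\lambda/2) v$; the rounding-down step can only decrease each coordinate by at most $2^{-h}$, so $q^* - x^{[k+1]} \le (\lambda/2) v + 2^{-h}\mathbf{1} \le (\lambda/2 + 2^{-h}/v_{\min}) v$.

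Next I would solve this recursion: starting from $\lambda_0 = 1/v_{\min}$ and iterating $\lambda_{k+1} = \lambda_k / 2 + 2^{-h}/v_{\min}$, we get $\lambda_k \le 2^{-k}/v_{\min} + (2 \cdot 2^{-h})/v_{\min}$ (geometric-series bound on the rounding contributions, converging to $2^{-h+1}/v_{\min}$). Hence after $g$ iterations, $\|q^* - x^{[g]}\|_\infty \le \lambda_g v_{\max} \le \frac{v_{\max}}{v_{\min}}(2^{-g} + 2^{-h+1})$. Plugging in $h \ge 2 + \lceil \log \frac{v_{\max}}{v_{\min}\epsilon}\rceil$ makes $\frac{v_{\max}}{v_{\min}} 2^{-h+1} \le \epsilon/2$, and using $g \ge h-1$ makes $\frac{v_{\max}}{v_{\min}} 2^{-g} \le \frac{v_{\max}}{v_{\min}} 2^{-h+1} \le \epsilon/2$, so the total is at most $\epsilon$. (I would double-check the exact constant $2$ versus $3/2$ in the recursion's fixed point and adjust the ``$+2$'' in the hypothesis on $h$ accordingly; the statement already carries a ``$+2$'', which matches this kind of slack.)

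One subtlety I would be careful about: Lemma~\ref{cone} requires that $(I - B(x^{[k]}))^{-1}$ exists and is nonnegative, and that $q^* - x^{[k]} \le \lambda \mathbf{d}$ for the particular direction $\mathbf{d} = v$ with $B(q^*)v \le v$. The hypothesis of Lemma~\ref{roundedNewton} already grants that every iterate $x^{[k]}$ is defined with $0 \le x^{[k]} \le q^*$, so invertibility and nonnegativity of $(I-B(x^{[k]}))^{-1}$ come along with ``defined'' (this was established in the surrounding argument in the proof of Theorem~\ref{scmps} via $\rho(B_2(x^{[k]})) < 1$ and Lemma~\ref{lem:inverse-of-sub-1-spec-exists}); I would just invoke that. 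The hypothesis $B(q^*) v \le v$ is exactly what Lemma~\ref{cone} needs for $\mathbf{d} = v$. The one genuine piece of work is verifying that rounding down preserves the invariant $x^{[k+1]} \le q^*$ and only perturbs by $\le 2^{-h}$ per coordinate --- that is immediate from the definition of R-NM, since we round $\max(x^{\{k+1\}}_i, 0)$ down to the nearest multiple of $2^{-h}$ and $x^{\{k+1\}} = \mathcal{N}_P(x^{[k]}) \le q^*$.

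The main obstacle, such as it is, is purely bookkeeping: making sure the accumulated rounding error is controlled by a \emph{geometric} series (so it stays bounded by a constant times $2^{-h}$ rather than growing with $g$), and that the scaling by $v_{\max}/v_{\min}$ is handled consistently when passing between the $v$-cone bound and the $\|\cdot\|_\infty$ bound. Since $v_{\min} \ge \alpha^n$ by Corollary~\ref{cor:perron-low-bound-ratio} and $v_{\max} = 1$ by normalization, in the application this $\log(v_{\max}/v_{\min})$ becomes $n\log(1/\alpha)$, which is why Corollary~\ref{cor:single-scmps} and part ({\bf 2.}) of Theorem~\ref{scmps} state the hypothesis on $h$ as $\lceil 2 + n\log\frac 1\alpha + \log\frac 1\epsilon\rceil$. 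I would close by noting this substitution explicitly so the reader sees how Lemma~\ref{roundedNewton} specializes to the theorem's hypothesis.
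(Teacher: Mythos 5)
Your proposal is correct and follows essentially the same route as the paper: the same inductive invariant $q^* - x^{[k]} \le (2^{-k} + 2^{-h+1})\frac{1}{v_{\min}}v$ (which is exactly the closed form of your recursion $\lambda_{k+1} = \lambda_k/2 + 2^{-h}/v_{\min}$), the same application of Lemma~\ref{cone} with $\mathbf{d} = v$, the same per-step rounding bound $\mathcal{N}_P(x^{[k]}) - x^{[k+1]} \le 2^{-h}\mathbf{1} \le 2^{-h}\frac{1}{v_{\min}}v$, and the same final arithmetic converting the $v$-cone bound to an $\|\cdot\|_\infty$ bound via $v_{\max}/v_{\min}$. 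The constant works out exactly as you guessed (the geometric accumulation converges to $2^{-h+1}/v_{\min}$, and the ``$+2$'' in the hypothesis on $h$ absorbs the two $\epsilon/2$ contributions), so no adjustment is needed.
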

\begin{proof}
By induction on $k$,
we claim that $\forall k \geq 0$, 
$q^*-x^{[k]} \leq (2^{-k} + 2^{-h+1}) \frac{1}{v_{\min}}v$. 
Note that this would indeed yield the Lemma:
for all $k$  $0 \leq x^{[k]} \leq q^*$.
and the claim would yield
$q^*-x^{[g]} \leq (2^{-h+1} + 2^{-h+1}) \frac{1}{v_{\min}}v \leq 2^{-\log \frac{v_{\max}}{v_{\min} \cdot \epsilon}}  \frac{1}{v_{\min}}v
= \epsilon \frac{1}{v_{\max}} v \leq \epsilon {\mathbf 1}$.

It remains to prove by induction on $k \geq 0$ that
$q^*-x^{[k]} \leq (2^{-k} + 2^{-h+1}) \frac{1}{v_{\min}}v$. 
This is true for 
$k = 0$, because $q^* \geq 0 = x^{[0]}$, and $q^* - x^{[0]} = q^* \leq 1 \leq \frac{1}{v_{\min}}v$.

Lemma \ref{cone} then gives that $q^* - {\mathcal N}_P(x^{[k]}) \leq (2^{-(k+1)} + 2^{-h}) \frac{1}{v_{\min}}v$. 
 Now, by definition of $x^{[k+1]}$,  
${\mathcal N}_P(x^{[k]}) - x^{[k+1]} \leq 2^{-h} \mathbf{1} \leq 2^{-h} \frac{1}{v_{\min}}v$. 
 So $q^* - x^{[k+1]} \leq  (2^{-(k+1)} + 2^{-h+1})\frac{1}{v_{\min}}v$ as required. 
Thus $q^*- x^{[h - 1]} \leq 2^{-h+2}\frac{1}{v_{\min}}v \leq \frac{\epsilon}{v_{\max}} v \leq 
\epsilon \mathbf{1}$.
\qed   \end{proof}
To use Lemma \ref{roundedNewton} to get a bound on
using R-NM on
$x=P_2(x)$ to compute $q^*_2$, note that 
because  $0 \leq B_2(q^*_2) \leq B_1(q^*_1)$,
the Perron vector $v > 0$ of $B_1(q^*_1)$, which satisfies $B_1(q^*_1) v \leq v$, must
also satisfy $B_2(q^*_2)v \leq v$.

Thus, we just need  to perform $g \geq h-1$ iterations of R-NM on $x=P_2(x)$, with parameter
$h \geq 2 + \mathrm{log} \frac{v_{\max}}{v_{\min}\epsilon} \geq 2 + \mathrm{log} \alpha^{-n}\epsilon^{-1}$
in order to obtain that $\| q^*_2 - x^{[h-1]}\|_\infty \leq \epsilon$.
This completes the proof of Theorem \ref{scmps}.
\qed   \end{proof}

\noindent {\bf Corollary \ref{cor:single-scmps}.} 
{\em Let  $x=P(x)$ be a strongly connected MPS with $n$ 
variables, and with LFP $q^*$ where $0 < q^* \leq 1$.
Let
 $\alpha = {\min} \{1,c_{\min}\} \frac{1}{2}q^*_{\min}$, where 
$c_{\min}$ is the smallest non-zero constant or coefficient 
of any monomial in $P(x)$. 

Then for all $0 < \epsilon < 1$,  
if we use $g \geq h-1$ iterations of  R-NM 
with parameter 
$h \geq \lceil 2 +  n \log \frac{1}{\alpha} + \log \frac{1}{\epsilon} \rceil$
applied to the MPS, $x=P(x)$,
starting at $x^{[0]} := \mathbf{0}$, 
then the iterations are all defined, and
$\|q^* - x^{[g]} \|_\infty \leq \epsilon$.
}

\begin{proof} This is just a trivial application of Theorem \ref{scmps}, part {\bf 2.},  where
we define $y$ to be a dummy variable of dimension $m=1$, and
we define $y_1 = y_2 = y_{\min} = 1$, and where we define the $n$-vector
of monotone polynomials
$P(x,y)$, by replacing all constant terms $c > 0$ in every polynomial in $P(x)$ 
by $c y$.   In this case, note that $P_1(x) = P_2(x) = P(x)$, and that
since $y_{\min} = 1$, the $\alpha$ defined in the statement of this corollary is the same $\alpha$ as in Theorem
\ref{scmps}.\qed
\end{proof}

\subsection{Proof of Theorem \ref{gen}}

\noindent {\bf Theorem \ref{gen}.}
{\em If $x=Q(x)$ is an MPS with $n$ variables, with LFP solution $q^* > 0$,  
if $c'_{\min}$ is the least positive coefficient of any monomial in $Q(x)$, 
then R-DNM 
with rounding parameter $h'$, and using $g'$ iterations per nonlinear SCC (and one for linear), gives an approximation 
$\tilde{q}$ such that $\|q^* - \tilde{q}\|_\infty \leq \epsilon'$, where
$$g'= 2 + \lceil \; 2^f \cdot (\log (\frac{1}{\epsilon'}) + d \cdot (2u   +  \log (\alpha'^{-(4n+1)}) + \log(16n) + \log (\| Q({\mathbf 1}) \|_\infty)) \; ) \ \rceil $$ 
and 
$h' = g'+1 - u$, 
where $u = \max \{ 0, \lceil \log q^*_{\max} \rceil \}$,  $d$ is the maximum depth of SCCs in the DAG $H_Q$ of SCCs of the dependency graph of 
$x=Q(x)$, $f$ is the nonlinear depth of $H_Q$, and $\alpha' = 2^{-2u} \min \{1,c'_{\min}\} \min \{1, \frac{1}{2}q^*_{\min}\}$.}

\begin{proof} If $q^*_{\max} \leq 1$, then Theorem \ref{gen<=1} gives
  this immediately. So we assume that $q^*_{\max} > 1$. $u$ is chosen so that $2^u \geq q^*_{\max}$. 
We rescale and use
  Lemma \ref{scale} with scaling parameter $c = 2^u$.  This yields the
``rescaled''
  MPS  $x = 2^{-u}Q(2^ux)$, which has LFP
  $p^* = 2^{-u}q^* \leq \mathbf{1}$.

So we can apply
  Theorem \ref{gen<=1} to this rescaled MPS $x=P(x)$,
where $P(x) \equiv 2^{-u}Q(2^u x)$, and 
letting $\epsilon := 2^{-u}\epsilon'$.
Then Theorem \ref{gen<=1} gives us the needed number of iterations 
$g$ and
the rounding parameter $h = g+1$, needed
to obtain an approximation $\tilde{p}$ of the LFP $p^* = 2^{-u}q^*$, such that 
$\| \tilde{p} - p^* \|_\infty \leq \epsilon$.

In the bounds specified for Theorem  \ref{gen<=1}
for $g$ and $h$,  
in place of 
$q^*_{\min}$ we get $p^*_{\min}=2^{-u} q^*_{\min}$,
and in place of $c_{\min}$ we get $2^{-u} c'_{\min}$.
Thus $\alpha$ becomes  the $\alpha'$ we have specified in the statement
of this theorem.
Furthermore, the $\| P({\mathbf 1}) \|_\infty$ appearing in Theorem \ref{gen<=1}
is now $\| 2^{-u}Q(2^u {\mathbf 1}) \|_\infty$, 
but it is easy to verify that for a quadratic MPS, 
$\| 2^{-u}Q(2^u {\mathbf 1}) \|_\infty \leq 2^u  \| Q({\mathbf 1}) \|_\infty$.

Theorem \ref{gen<=1} tells us that if we use R-DNM on $x=P(x)$ for $g$
iterations per nonlinear SCC and a precision of $h=g+1$ bits, we will obtain an
approximation $\tilde{p}$ to the LFP $p^*$ of $x=P(x)$
with $\|\tilde{p} - p^*\|_\infty \leq \epsilon$ 
provided that $h \geq \lceil 3 + 2^f \cdot ( \log (\frac{1}{\epsilon}) + d \cdot (\log (\alpha^{-(4n+1)}) + \log(16n) + \log (\| P({\mathbf 1}) \|_\infty)) ) \rceil$. 
This condition is satisfied if we take $g=g'$ and $h=g'+1$ because:

\begin{eqnarray*}  &  & \lceil 3 + 2^f \cdot ( \log (\frac{1}{\epsilon}) + d \cdot (\log (\alpha^{-(4n+1)}) + \log(16n) + \log (\| P({\mathbf 1}) \|_\infty)) ) \rceil \\
& \leq & 3 + 2^f (\log (\frac{1}{2^{-u}\epsilon'}) + d( \log (\alpha'^{-(4n+1)}) + \log(16n) + \log (2^u\| Q({\mathbf 1}) \|_\infty) )) \rceil \\
& = & 3 + 2^f(u + \log (\frac{1}{\epsilon'}) + d( \log (\alpha'^{-(4n+1)}) + \log(16n) + u+\log (\| Q({\mathbf 1}) \|_\infty) )) \rceil \\
& \leq & g' + 1 =h\end{eqnarray*}
Thus, applying R-DNM on $x=P(x)$ with parameters $g=g'$ and
$h=g'+1$ yields an approximation
 $\tilde{p}$ to the LFP $p^*$ of $x=P(x)$
with $\|\tilde{p} - p^*\|_\infty \leq \epsilon$ or, in terms of the original MPS,  $\|\tilde{p} - 2^{-u}q^*\|_\infty \leq 2^{-u}\epsilon'$.

To obtain Theorem \ref{gen}, we now show that if we
apply R-DNM to $x=Q(x)$
with LFP $q^*$, 
using rounding parameter $h'$ and using  
$g'$ iterations per nonlinear SCC (where $h'$ and $g'$ were specified
in the statement of the Theorem), we will obtain an
approximation $\tilde{q}$ to $q^*$ that satisfies
$\tilde{q} = 2^{u}\tilde{p}$.
This would then give us that 
$\|q^* - \tilde{q}\|_\infty =  \| 2^{u} p^* - 2^u \tilde{p} \|_{\infty} = 
2^u \| p^* - \tilde{p} \|_{\infty} \leq 2^u \epsilon = \epsilon'$,
which is what we want to prove.  

Since we are using the decomposed Newton's method, we will show 
that $\tilde{q}_S = 2^{u}\tilde{p}_S$ for every SCC $S$ by induction on 
the depth of the SCC $S$. 
Suppose that for the variables $D(S)$ that $S$ depends on (if any), we have 
that $\tilde{q}_{D(S)} = 2^{u}\tilde{p}_{D(S)}$.
If we call the $k$th iterate of R-NM applied to $x_S=P_S(x_S, \tilde{p}_{D(S)})$ with parameter $h$, $x^{[k]}$ and the $k$th iterate of R-NM applied to $x_S=Q_S(x_S, \tilde{q}_{D(S)} )$ with parameter $h'$, $x'^{[k]}$, then we aim to show by induction on $k$ that $x'^{[k]} = 2^u x^{[k]}$.

The base case is $x'^{[0]} = 0 = 2^u x^{[0]}$.
By abuse of notation, we will call the Newton iterate of 
$x_S=P_S(x_S, \tilde{p}_{D(S)})$, $\mathcal{N}_P(x_S)$ and that of $x_S=Q_S(x_S, \tilde{q}_{D(S)} )$, 
$\mathcal{N}_Q(x_S)$. Note that because we assume that $\tilde{q}_{D(S)} = 2^u \tilde{p}_{D(S)}$, $x_S=P_S(x_S, \tilde{p}_{D(S)})$ is the result of scaling $x_S=Q_S(x_S, \tilde{q}_{D(S)} )$ using $c=2^u$. 
So Lemma \ref{scale} (iii) yields that $\mathcal{N}_P(x_S) = 2^{-u}{\mathcal N}_Q(2^ux_S)$. If $x'^{[k]} = 2^u x^{[k]}$, then $\mathcal{N}_Q(x'^{[k]}) = 2^u \mathcal{N}_P(x^{[k]})$.

If $(\mathcal{N}_P(x^{[k]}))_i \leq 0$, we would set $x^{[k+1]}_i:=0$. If so, 
$\mathcal{N}_Q(x'^{[k]})_i = 2^u \mathcal{N}_P(x^{[k]})_i \leq 0$, so we would set $x'^{[k+1]}_i:=0$.

If $(\mathcal{N}_P(x^{[k]}))_i > 0$, we set $x^{[k+1]}_i$ to be the result of rounding $(\mathcal{N}_P(x^{[k]}))_i$ down to a multiple of $2^{h}$. 
But then $\mathcal{N}_Q(x'^{[k]}) = 2^u \mathcal{N}_P(x^{[k]}) > 0$ and 
we would set $x'^{[k+1]}_i$ to be the result of rounding $(\mathcal{N}_Q(x'^{[k]}))_i$ down to a multiple of $2^{-h'}$. Note that $h' = h - u$. So the result of rounding $2^u (\mathcal{N}_P(x^{[k]}))_i$ down to a multiple of $2^{-h'}$ is just $2^u$ times the result of  rounding $(\mathcal{N}_P(x^{[k]}))_i$ down to a multiple of $2^{-h}$. So $x'^{[k+1]} = 2^u x^{[k+1]}$.

This completes the induction showing that $x'^{[k]} = 2^u x^{[k]}$ for all $k \geq 0$. Note that $g=g'$. So $\tilde{q}_{S} = x'^{[g']} = 2^{u} x^{[g]} = 2^{u}  \tilde{p}_{S}$. This in turn completes the induction on the SCCs, showing that $\tilde{q} = 2^{u}\tilde{p}$, which completes the proof.\qed\end{proof}

\subsection{Proof of Theorem \ref{thm:qbd-termin-p-time}}

\noindent {\bf Theorem 
\ref{thm:qbd-termin-p-time}.}
{\em 
Let $x=P(x)$ be the MPS associated with p1CA, $M$, 
let $r$ denote the number of
control states of $M$, and let $m$ denote the maximum number
of bits required to represent the numerator and denominator of
any positive rational transition probability in $M$.

Apply R-DNM, including rounding down linear SCCs, to the MPS $x=P(x)$,
using rounding parameter
$$h := 8mr^7 + 
2mr^5 + 9r^2 + 3 + \lceil 2 \log \frac{1}{\epsilon} \rceil$$
and such that for each non-linear SCC we perform $g = h-1$ iterations,
whereas for each linear SCC we only perform $1$ R-NM iteration.  

This algorithm computes an approximation $\tilde{q}$ to $q^*$, such that $\|q^* - \tilde{q}\|_\infty < 
\epsilon$.
The algorithm runs in time polynomial in $| M|$ and 
$\log \frac{1}{\epsilon}$, in the standard Turing model of computation.
}

\begin{proof}
We apply Theorem \ref{gen<=1}, 
which tells us that R-DNM  with parameter 
\begin{equation}
h \geq \left\lceil 3 + 2^f \cdot (\; \log (\frac{1}{\epsilon}) +  d \cdot (\log (\alpha^{-(4n+1)}) + \log(16n) + \log (\| P({\mathbf 1}) \|_\infty)) \; ) \right\rceil \label{eq:qbd-needsub}
\end{equation}
using $g = h-1$ iterations for every SCC, gives an 
approximation $\tilde{q}$ to $q^*$ with $\tilde{q} \leq q^*$ and such that $\|q^* - \tilde{q}\|_\infty \leq \epsilon$.
Here $f \leq 1$ since there is at most $1$ non-linear SCC in any path through 
the dependancy graph. 
Furthermore, $n=r^2$ since the variables in $x$ are indexed by two states $x_{uv}$. Also, $d \leq n$, and so $d \leq r^2$.  
Also, $c_{\min} \geq 2^{-m}$ and so by Lemma \ref{lem:low-bound-probs-in-qbds}, $q^*_{\min} \geq 2^{-mr^3}$. So 
$\alpha \geq 2^{-(mr^3+1)}$. 
To show that $\|P(\mathbf{1})\|_\infty  \leq r$, by equation (\ref{eq:xuv}), 
$P(\mathbf{1})_{uv} =  p^{(-1)}_{uv} + (\sum_{w \in V} p^{(0)}_{uw}) + \sum_{y \in V} p^{(1)}_{uy} r \leq r$. Plugging all this into equation 
(\ref{eq:qbd-needsub}),  we get:
$h \geq \lceil 3 + 2 \cdot (\log (\frac{1}{\epsilon}) + r^2 \cdot ((4r^2 + 1)(mr^3 + 1) + \log(16r^2) + \log r \rceil$.
Noting that $\log(16r^2) + \log r =  \log(16 r^3)$, and noting
that when $r \geq 1$, $\log (16r^3) \leq 4r$, we have:
$$h \geq 3 + 8mr^7 + 2mr^5 + 9r^4 + \lceil 2 \cdot \log (\frac{1}{\epsilon}) \rceil$$

Note that the rounding parameter $h$ and the number
of iterations $g=h-1$ are both polynomials in the encoding size of the p1CA,
and in $\log \frac{1}{\epsilon}$.
Thus   
each iteration of R-DNM can be computed in polynomial time, 
and we only do polynomially many iterations. 
Thus the entire computation of $\tilde{q}$ can be carried out in P-time in the Turing model of computation.\qed
\end{proof}

\subsection{Proof of Theorem \ref{thm:qbd-mcing}}

{\bf Theorem \ref{thm:qbd-mcing}.}
{\em Given a p1CA, $M$, with states labeled from 
an alphabet $\Sigma$, and
with a specified initial control state $v$, and given an $\omega$-regular
property $L(B) \subseteq \Sigma^\omega$, 
which is specified by a non-deterministic 
B\"{u}chi automaton, $\mathcal{B}$,  let $Pr_{M}(L(B))$ denote the probability that a run of
$M$ starting at configuration $(v,0)$ generates an $\omega$-word
in $L(B)$.  There is an algorithm that, for any $\epsilon > 0$,
computes an additive 
$\epsilon$-approximation, $\tilde{p} \geq 0$, of $Pr_{M}(L(B))$,
i.e., with $| Pr_{M}(L(B)) - \tilde{p} | \leq \epsilon$.
The algorithm runs in time polynomial in $|M|$,
$\log \frac{1}{\epsilon}$, and $2^{|\mathcal{B}|}$,   
in the standard Turing model of computation.}

\begin{proof}[sketch]
By Theorem \ref{thm:qbd-termin-p-time}, we know 
we can compute termination probabilities $q^*$ for a p1CA, $M$, 
with additive error $\epsilon > 0$ in time polynomial
in $|M|$ and $\log \frac{1}{\epsilon}$.

Let us first observe that if we do not insist on
having the $\omega$-regular property specified by a non-deterministic 
B\"{u}chi automaton  $\mathcal{B}$, and instead assume
it is specified by a deterministic Rabin automaton $R$, 
then the analogous theorem follows
immediately as a corollary of 
Theorem \ref{thm:qbd-termin-p-time} and results 
established by
Brazdil, Kiefer, and Kucera in \cite{BKK11}.
Specifically, in \cite{BKK11} it was shown that,
given
a p1CA, $M$, and a 
deterministic Rabin automaton, $R$,
and given $\epsilon > 0$,  
there is an algorithm that, firstly,
decides in P-time whether $P_M(L(R)) > 0$, and
if so computes a value $\tilde{p}$ which 
approximates $P_M(L(R))$ with {\em relative error}  $\epsilon > 0$,
i.e., such that $|P_M(L(R)) - \tilde{p} |/  P_M(L(R))  < \epsilon$,
and the algorithm runs in time polynomial in $|M|$, $|R|$, and
$\log \frac{1}{\epsilon}$,  {\em in the unit-cost RAM model of
computation}.

The first observation we make is that,
the results in \cite{EWY-qbd-2010} and \cite{BKK11}
together imply that
for p1CAs there is no substantial difference
in complexity between relative and absolute approximation,
because the probabilities $P_M(L(R))$ 
can be bounded away from zero by $1/2^{poly(|M|,|R|)}$
if it is not equal to zero (which can be detected in P-time).
Thus, computing $P_M(L(R))$ with given relative error $\epsilon > 0$ 
is P-time
equivalent to computing $P_M(L(R))$ with $\epsilon$ absolute error.

Secondly, a close inspection of \cite{BKK11} shows that
the {\em only} use made in their entire paper
of the unit-cost RAM model of computation
is for the purpose of computing termination
probabilities for p1CAs, and specifically because they 
directly invoke the earlier result from \cite{EWY-qbd-2010} which showed
that termination probabilities $q^*$ for 
a p1CA can be $\epsilon$-approximated 
in polynomial time in the unit-cost RAM model.
Thus, the only thing needed in order to 
obtain an absolute error 
$\epsilon$-approximation of $P_M(L(R))$ in
P-time in the standard Turing model of computation
is to appeal instead to
Theorem \ref{thm:qbd-termin-p-time} of this paper for computation
of termination probabilities in P-time in the standard Turing
model, and apply the rest of the construction in \cite{BKK11}.

Next, let's first note that we can of course use Safra's construction 
to 
convert any non-deterministic
B\"{u}chi automaton $\mathcal{B}$ to a deterministic Rabin
automaton of size $2^{O(|B| \log |B|)}$.
So, obtaining a complexity bound that is polynomial in
$2^{|B| \log |B|}$ is no more difficult.

Let us now very briefly sketch 
why one can in fact obtain the (slightly) better complexity bound,
polynomial in $2^{|B|}$, by combining 
prior results regarding model checking of RMCs \cite{EY-MC-12}
 with Theorem \ref{thm:qbd-termin-p-time}
and  Lemma \ref{lem:low-bound-probs-in-qbds}, and
with the key result by Brazdil, et. al. in \cite{BKK11}, 
which  establishes that non-zero {\em non-termination} probabilities
for a p1CA are also bounded away from zero by $1/2^{poly(|M|)}$.

As shown in \cite{CY95,EY-MC-12},  for probabilistic
model checking a {\em naive}
subset construction can be used (instead of
Safra's construction) to obtain from a BA, $\mathcal{B}$, 
a deterministic 
B\"{u}chi automaton, $D$,  such that $|D| = 2^{|B|}$.
(It need not be the case that $L(D) = L(B)$.)
One then constructs the ``product'' $M \otimes D$, of the
p1CA, $M$, with the deterministic B\"{u}chi automaton $D$.
A key observation is that
this ``product'' 
remains a p1CA.  
In terms of RMCs, p1CAs correspond to 1-box RMCs,
and the ``product'' of a 1-box RMC with a deterministic BA, $D$,
remains a 1-box RMC.

It was shown in \cite{EY-MC-12} that
given a ``product'' 
(1-box) RMC $M \otimes D$,  it is possible to construct a 
finite-state {\em conditioned summary chain},  $\mathcal{M'}$,
which is a finite state Markov chain and
whose transition probabilities are {\em rational expressions in positive
termination and non-termination probabilities} of the (1-box) RMC.
It is then possible to identify
in P-time certain bottom strongly connected components ${\mathcal{T}}$
of  $\mathcal{M'}$, 
such that the probability
$P_M(L(B))$ is equal to the probability that starting from a specific
initial state of $\mathcal{M'}$, a run eventually hits
a state in ${\mathcal{T}}$.

In this way, the model checking problem is boiled down to the 
problem of computing hitting probabilities in a {\em finite-state}
Markov chain whose transition probabilities are simple rational
expressions with numerators and
denominators that are products of coefficients in a p1CA
together with
 positive termination and non-termination probabilities
of a p1CA.

It is well known that non-zero hitting probabilities for a finite-state
Markov chain are the unique solution $(I-A)^{-1}b$, to 
a linear system of equations $x= Ax+b$, where the coefficients
in $A$ and $b$ come from the transition probabilities of the Markov
chain. 
The key remaining question is, {\em how well-conditioned
is this linear system of equations?}. In other words, what happens 
to its unique solution if we only
approximate the coefficients in $A$ and $b$ to within
a small error?
Now, the key is that applying Lemma \ref{lem:low-bound-probs-in-qbds} 
(which is from \cite{EWY-qbd-2010}),
and applying the key result in \cite{BKK11}, together shows that
both positive termination and positive non-termination probabilities 
of the product p1CA are bounded
away from $0$ by  $1/2^{poly(|M|,|D|)}$.   

Under these conditions, {\em exactly the same} known condition
number bounds from numerical analysis that were used  
in \cite{EWY-qbd-2010} 
namely
Theorem 17 of \cite{EWY-qbd-2010}, which
is a version of Theorem 2.1.2.3 of \cite{IsaKel66},
also establish that the linear system of equations
that one has to solve for 
hitting probabilities in the conditioned summary chain ${\mathcal{M}}'$
derived from a p1CA
are ``polynomially well-conditioned'', meaning that
approximating their non-zero coefficients within suitable $1/2^{poly}$ 
additive error yields
a linear system of equations whose unique solution is $\epsilon$-close to
the unique solution of the original system, for the chosen $\epsilon > 0$.
We omit a detailed elaboration in this conference paper.
\qed
\end{proof}

\section{Proof of Theorem \ref{thm:q-bounds}}

\label{sec:proof-of:thm:q-bounds}

\noindent {\bf Theorem  \ref{thm:q-bounds}.} 
{\em If $x=P(x)$ is a quadratic MPS in $n$ variables,  with LFP $q^* > 0$,  
and where $P(x)$ has rational coefficients and total encoding size $|P|$ bits, then
\begin{enumerate}

\item  \ \ $q^*_{\min} \geq 2^{-|P|(2^{n}-1)}$, \ \  \mbox{\rm and}\\

\item \ \ 
$q^*_{\max}  \leq 2^{2(n+1)(|P| + 2(n+1)\log(2n+2)) \cdot 5^{n}}$.
\end{enumerate}
}

\begin{proof} 
We first prove (1.), by lower bounding $q^*_{\min}$ in terms of the smallest constant $c_{\min}$ in $P(x)$.

 \begin{lemma} \label{lem:qmincmin} 
If $x=P(x)$ has LFP $q^* > 0$, and least term $c_{\min}$, then $q^*_{\min} \geq \min \{1, c_{\min} \}^{2^n -1}$. 
\end{lemma}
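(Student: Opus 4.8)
The plan is to prove Lemma~\ref{lem:qmincmin} by a ``positivity‑propagation'' argument: I will track, for each depth $d$, the set of coordinates that can be certified positive (with an explicit lower bound) by a derivation of depth $\le d$, and show this set gains at least one coordinate every time $d$ increases, until it exhausts $\{1,\dots,n\}$. Write $c := \min\{1,c_{\min}\}\le 1$. For $d\ge 1$ define $S_d \subseteq \{1,\dots,n\}$ by: $i\in S_1$ iff $P_i(x)$ has a positive constant term; and for $d\ge 1$, $i\in S_{d+1}$ iff $i\in S_d$, or $P_i(x)$ contains a positive monomial $c'x_j$ with $j\in S_d$, or $P_i(x)$ contains a positive monomial $c'x_jx_k$ with $j,k\in S_d$ (these are the only possibilities since $P$ is quadratic; the degenerate case $j=k$ is allowed). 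Clearly $S_1\subseteq S_2\subseteq\cdots\subseteq\{1,\dots,n\}$, and every positive coefficient appearing here is $\ge c_{\min}$.

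First I would prove by induction on $d$ that $i\in S_d$ implies $q^*_i\ge c^{2^d-1}$. For $d=1$, $q^*_i=P_i(q^*)$ is at least the positive constant term of $P_i$, which is $\ge c_{\min}\ge c = c^{2^1-1}$. For the inductive step, again $q^*_i=P_i(q^*)$ is at least the value at $q^*$ of any single monomial of $P_i$ (a sum of nonnegative terms): a monomial $c'x_j$ with $j\in S_d$ gives $q^*_i\ge c_{\min}\,q^*_j\ge c\cdot c^{2^d-1}=c^{2^d}$, and a monomial $c'x_jx_k$ with $j,k\in S_d$ gives $q^*_i\ge c_{\min}\,q^*_j q^*_k\ge c\cdot c^{2^d-1}\cdot c^{2^d-1}=c^{2^{d+1}-1}$. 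Since $0<c\le 1$ the function $t\mapsto c^t$ is non‑increasing, so $c^{2^d}\ge c^{2^{d+1}-1}$ and the case $i\in S_d$ is covered too (as $2^d-1\le 2^{d+1}-1$); hence $q^*_i\ge c^{2^{d+1}-1}$ in all cases.

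The key step, and the main obstacle, is to show $S_n=\{1,\dots,n\}$. First, $S_1\ne\emptyset$: if no $P_i$ had a positive constant term then $P(\mathbf 0)=\mathbf 0$, so $P^k(\mathbf 0)=\mathbf 0$ for all $k$ and $q^*=\lim_k P^k(\mathbf 0)=\mathbf 0$ (convergence of value iteration, \cite{rmc}), contradicting $q^*>0$. It then suffices to show that whenever $S_d\subsetneq\{1,\dots,n\}$ one has $S_{d+1}\supsetneq S_d$, since then $|S_d|\ge d$ until the set is full, forcing $S_n=\{1,\dots,n\}$. So suppose for contradiction $S_{d+1}=S_d=:T$ with $U:=\{1,\dots,n\}\setminus T\ne\emptyset$. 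By definition of $S_{d+1}$, for each $i\in U$: $P_i$ has no positive constant term, every positive monomial $c'x_j$ of $P_i$ has $j\in U$, and every positive monomial $c'x_jx_k$ of $P_i$ has $j\in U$ or $k\in U$. Consequently $P_i(x)=0$ for every $i\in U$ and every $x\in\real^n_{\ge 0}$ with $x_U=\mathbf 0$. Now set $\hat q:=(q^*_T,\mathbf 0_U)$. For $i\in U$, $P_i(\hat q)=0=\hat q_i$; for $i\in T$, $P_i(\hat q)\le P_i(q^*)=q^*_i=\hat q_i$ by monotonicity of $P$ (since $\mathbf 0_U\le q^*_U$). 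Thus $P(\hat q)\le\hat q$, so by an easy induction $P^k(\mathbf 0)\le\hat q$ for all $k$, whence $q^*=\lim_k P^k(\mathbf 0)\le\hat q$; but $\hat q_U=\mathbf 0$ while $q^*_U>0$, a contradiction. Combining this with the previous paragraph gives $q^*_{\min}\ge c^{2^n-1}=\min\{1,c_{\min}\}^{2^n-1}$, as claimed. (The quadratic degree bound on $P$ is used essentially, both in the inductive lower bound and in the exhaustiveness of the recursion defining $S_d$; the argument works for any quadratic MPS, in particular for SNF‑form MPS.)
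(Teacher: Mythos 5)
Your proof is correct and follows essentially the same route as the paper's: your sets $S_d$ are exactly the supports of the value-iteration iterates $P^d(\mathbf{0})$ that the paper tracks (via their zero sets), and the doubling induction yielding the exponent $2^d-1$ is the same. The only cosmetic differences are that you run the lower-bound induction directly on $q^*$ via the fixed-point equation rather than on the iterates, and you derive the saturation contradiction from a pre-fixed point $\hat{q}$ rather than from the paper's observation that a zero set which stops shrinking persists into the limit.
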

\begin{proof} 
We first observe that, since $q^* > 0$, and
there are $n$ variables, it must be the case that  $P^n(0) > 0$. 
To see this, for any $y \geq 0$, let us use $Z(y)$ to denote the set of zero coordinates of $y$.  
For any $k \geq 0$, $P^{k+l}(0) \geq P^k(0)$, for all $l \geq 0$, so $Z(P^{k+l}(0)) \subseteq Z(P^{k}(0))$.
Thus either $|Z(P^{k+1}(0))| = |Z(P^{k}(0))|$ or $|Z(P^{k+1}(0))| \leq |Z(P^{k}(0))| - 1$.
Now $|Z(0)| = n$ and $|Z(P^{k}(0))| \geq 0$ for all $k$, so there must be some least $0 \leq k \leq n$ 
such that $|Z(P^{k}(0))| = |Z(P^{k+1}(0))|$ and such that $Z(P^k(0)) = Z(P^{k+1}(0))$.

Note that, for any $y \geq 0$, $Z(P(y))$ depends only on $Z(y)$ and on $P(x)$, but not on the specific
values of non-zero coordinates of $y$.

So if for some $n \geq k \geq 0$, $Z(P^{k+1}(0)) = Z(P^{k}(0))$ then, by a simple induction $Z(P^{k+l}(0)) = Z(P^{k}(0))$ for all 
$l \geq 0$.  So we must have $Z(P^{k}(0)) = Z(P^{n}(0)) = Z(P^{n+l}(0))$, for all $l \geq 0$. 
Now $\lim_{m \rightarrow \infty} P^m(0) = q^*$. Now if $P^n(0)_i=0$, then $P^{n+l}(0)_i = 0$ for all $l \geq 0$, and so $q^*_i=0$. 
This contradicts our assumption that $q^* >0$. So $P^n(0) > 0$.

Let us use  $P^k(0)_{@}$ to denote the minimum value of any non-zero coordinate of $P^k(0)$.  
Firstly, $P(0) \not= 0$, i.e., there is some non-zero constant in some polynomial $P(x)_i$. 
Thus $P(0)_{@} \geq c_{\min}$.
We show by induction that for $k > 0$, $P^k(0)_{@} \geq \min \{1, c_{\min} \}^{2^k - 1}$. This is true for $k=0$. We assume 
that $P^k(0)_{@} \geq \min \{1, c_{\min} \}^{2^k - 1}$. If for some coordinate $i$, $P^{k+1}(0)_i = P(P^{k}(0))_i > 0$, 
there must be a term in $P(x)_i$ which is not zero in $P(P^{k}(0))_i$, this is either a constant $c$, 
or a linear term $cx_j$ with $P^k(0)_j > 0$, or a quadratic term $cx_jx_l$ with $P^k(0)_j > 0$ and  $P^k(0)_l > 0$.
In any of these 3 cases, this term is $\geq c_{\min} \min \{1, P^k(0)_{@} \}^2$. 
Since $P^k(0)_{@} \geq \min \{1, c_{\min} \}^{2^k - 1}$, we now have that $P^{k+1}(0)_{@} \geq c_{\min} (\min \{1, c_{\min} \}^{2^k - 1})^2 \geq  \min \{1, c_{\min} \}^{2^{k+1} - 1}$. 
So for all $k$, $P^k(0)_{@} \geq \min \{1, c_{\min} \}^{2^k - 1}$. In particular $P^n(0)_{@} \geq \min \{1, c_{\min} \}^{2^n - 1}$. But $P^n(0) > 0$ so $P^n(0)_{\min} \geq \min \{1, c_{\min} \}^{2^n - 1}$. 
We know $q^* \geq P^n(0)$, so $q^*_{\min} \geq \min \{1, c_{\min} \}^{2^n -1}$. 
\qed \end{proof}
To get our lower bound on $q^*_{\min}$ in terms of $|P|$ and $n$, we just note that clearly $c_{\min} \geq 2^{-|P|}$. 
This and Lemma \ref{lem:qmincmin} give the bound 
$q^*_{\min} \geq 2^{-|P|(2^{n}-1)}$ in part (1.) of the Theorem.

We now prove part (2.).
To prove the upper bound on $q^*_{\max}$, we need the following 
isolated root separation bound for systems of polynomial equations   
by Hansen et. al. \cite{miltersen-et-al-arxiv'12}:
\begin{theorem}{(Theorem 23 from \cite{miltersen-et-al-arxiv'12})}
  \label{th:isol-real-root-bd-full}
  Consider a polynomial system of equations
  \begin{equation}
    (\Sigma) \quad \quad
    g_1(x_1, \dots, x_n) = \cdots = g_m(x_1, \dots, x_n) = 0 \enspace,
  \label{eq:orig-system}
  \end{equation}
  with polynomials of degree at most $d$ and integer
  coefficients of magnitude at most $2^{\tau}$.

  Then, the coordinates of any {\em isolated} (in Euclidean topology) real 
  solutions of the system are
  real algebraic numbers of degree at most $(2d+1)^n$, and their
  defining polynomials have coefficients of magnitude at most 
  $2^{2n(\tau+4n\log(dm))(2d+1)^{n-1}}$.
  Also, if $\gamma_j =
  (\gamma_{j,1}, \cdots, \gamma_{j,n})$ is an isolated solution of $(\Sigma)$,
  then for any $i$, either
  \begin{equation}
    2^{-2n(\tau + 2n\log(dm))(2d+1)^{n-1}} < | \gamma_{j,i } |
    \quad \text{ or } \quad \gamma_{j,i} = 0 \enspace.
    \label{eq:isol-rr-lower-bd}
  \end{equation}
  Moreover, given coordinates of isolated solutions of two such systems, if they are not identical, they differ by at least
  \begin{equation}
    \sep(\Sigma) 
    \geq 2^{-3n(\tau + 2n\log(dm))(2d+1)^{2n-1} - \frac{1}{2}\log(n)}
    \label{eq:isol-rr-sep-bd}
    \enspace .
  \end{equation}
\end{theorem}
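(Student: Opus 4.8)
\medskip
\noindent The remaining task is to derive the stated bound on $q^*_{\max}$ from Theorem~\ref{th:isol-real-root-bd-full}, and here is the plan. First I would write the quadratic MPS $x=P(x)$, which has $n$ variables and polynomials of degree at most $2$, as the polynomial system $g_i(x) := P_i(x) - x_i = 0$, $i=1,\dots,n$, so that its LFP $q^*$ is one of its real solutions. The coefficients are rational, so for each $i$ I would multiply $g_i$ by the least common multiple $L_i$ of the denominators of the coefficients appearing in $P_i$. Since $L_i$ divides the product of those denominators, $L_i \le 2^{|P_i|}\le 2^{|P|}$, while each numerator also has magnitude at most $2^{|P|}$; hence the cleared system $\{L_i g_i = 0\}_i$ has the same solution set and integer coefficients of magnitude at most $2^{2|P|}$ (the $-L_i x_i$ terms being even smaller). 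This is a system of the form~(\ref{eq:orig-system}) with $d=2$, $m=n$, and $\tau \le 2|P|$.

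The key step, and what I expect to be the main obstacle, is to verify that $q^*$ is an \emph{isolated} real solution of this system, since Theorem~\ref{th:isol-real-root-bd-full} applies only to isolated solutions. I would use the defining property of the LFP: every nonnegative solution $a$ of $x=P(x)$ satisfies $a \ge q^*$, and since $q^* > 0$, any real solution $a$ sufficiently close to $q^*$ is itself positive, hence nonnegative, hence $a \ge q^*$. For such an $a$, Lemma~\ref{lem:restate-lem-3-3} (valid because $P$ is quadratic) gives $q^* - a = P(q^*) - P(a) = B(\tfrac{q^*+a}{2})(q^*-a)$, i.e.\ $(I - B(\tfrac{q^*+a}{2}))(a - q^*) = 0$. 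When $\rho(B(q^*)) < 1$, Lemma~\ref{lem:inverse-of-sub-1-spec-exists} shows $I - B(q^*)$ is nonsingular, so by continuity of the determinant $I - B(\tfrac{q^*+a}{2})$ is nonsingular for $a$ close enough to $q^*$, forcing $a = q^*$; hence $q^*$ is isolated. The delicate case is the degenerate one $\rho(B(q^*)) = 1$ permitted by Lemma~\ref{lem:spectr-at-lfp-is-at-most-1}: here I would establish isolation either by a limiting/perturbation argument, or by combining a Perron--Frobenius analysis of the nonnegative $1$-eigenvector of $B(q^*)$ with the dependency structure of the cleaned system. I expect this degenerate case to absorb most of the work.

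Once isolation is in hand, Theorem~\ref{th:isol-real-root-bd-full} applied with $d=2$, $m=n$, $\tau \le 2|P|$ tells us that each coordinate $q^*_i$ is a real algebraic number of degree at most $(2\cdot 2+1)^n = 5^n$, with a defining polynomial $p_i \in \mathbb{Z}[t]$ whose coefficients have magnitude at most $2^{2n(2|P| + 4n\log(2n))\cdot 5^{n-1}}$. I would then apply the classical Cauchy bound: for $p_i(t) = \sum_{j=0}^{D} a_j t^j$ with integer coefficients and $a_D \ne 0$ (so $|a_D|\ge 1$), every root $\xi$ satisfies $|\xi| \le 1 + \max_{0\le j < D}|a_j/a_D| \le 1 + \max_j |a_j|$. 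Hence $q^*_i \le 1 + 2^{2n(2|P| + 4n\log(2n))\cdot 5^{n-1}} \le 2^{2n(2|P| + 4n\log(2n))\cdot 5^{n-1} + 1}$, and taking the maximum over $i$ bounds $q^*_{\max}$ by the same quantity.

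Finally I would verify the numerical inequality $2n(2|P| + 4n\log(2n))\cdot 5^{n-1} + 1 \le 2(n+1)\bigl(|P| + 2(n+1)\log(2n+2)\bigr)\cdot 5^{n}$, which is routine and holds with a large margin: $n\cdot 5^{n-1} \le (n+1)5^{n}$, $4n \le 2(n+1)^2$, $\log(2n)\le \log(2n+2)$, and the additive $1$ is absorbed since the right-hand exponent already exceeds $1$ for $n\ge 1$. This yields $q^*_{\max} \le 2^{2(n+1)(|P| + 2(n+1)\log(2n+2))\cdot 5^{n}}$, completing part~(2) and hence the proof of Theorem~\ref{thm:q-bounds}.
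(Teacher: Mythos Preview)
The cited theorem is quoted from \cite{miltersen-et-al-arxiv'12} and carries no proof in the paper; your proposal is (correctly) about the \emph{application} of this result to finish part~(2) of Theorem~\ref{thm:q-bounds}, so that is what I compare.

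Your route differs from the paper's in two places. First, for the upper bound itself you read off the coefficient bound on the defining polynomial of $q^*_i$ and apply Cauchy's root bound. The paper instead uses an inversion trick: it adjoins one new variable $y$ and the equation $x_k y = 1$ (where $k$ realises $q^*_{\max}$), observes that $(q^*,1/q^*_k)$ is still an isolated solution of the enlarged system in $n+1$ variables, and then applies the \emph{lower} bound~\eqref{eq:isol-rr-lower-bd} directly to the coordinate $y=1/q^*_k$. Both approaches are valid; yours is more elementary and actually yields a slightly sharper intermediate bound (with $n$ rather than $n+1$ in the exponent, at the cost of $\tau\le 2|P|$ rather than $|P|$), while the paper's avoids appealing to the defining-polynomial part of the cited theorem and to Cauchy's bound.

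Second, and more substantively, your isolation argument is incomplete exactly where you flag it. The case $\rho(B(q^*))=1$ is not exotic --- it already occurs for $x=\tfrac12 x^2+\tfrac12$ --- and your continuity-of-determinant argument says nothing there. The paper closes this gap via Lemma~\ref{q*-isolated}: for a \emph{strongly connected} nonlinear MPS it argues by contradiction, taking two distinct nearby solutions $q>q^*$ and $q'>q^*$ with $q'<q$, and uses Lemma~\ref{lem:restate-lem-3-3} to show that $q-q^*$ and $q'-q^*$ are positive $1$-eigenvectors of $B(\tfrac12(q^*{+}q))$ and $B(\tfrac12(q^*{+}q'))$ respectively; since $P$ is nonlinear the first matrix strictly dominates the second, contradicting Lemma~\ref{lem:irr-non-neg-spec-lower}. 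The general (non-strongly-connected) case is then handled by an easy induction over SCCs. This is precisely the ``Perron--Frobenius analysis combined with the dependency structure'' you anticipated, and it is the missing piece in your sketch.
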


To apply  Theorem \ref{th:isol-real-root-bd-full}, we 
now establish that $q^*$ is
an isolated solution of an MPS with LFP $q^* > 0$. 
\begin{lemma} \label{q*-isolated} If $x=P(x)$ is a quadratic MPS with LFP $q^*> 0$, 
then $q^*$ is an isolated solution of the system of equations $x=P(x)$.\end{lemma}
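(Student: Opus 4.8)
The plan is to reduce the general statement to the strongly connected case, and to settle that case by a dichotomy on the spectral radius $\rho(B(q^*))$.

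For the reduction, I would first record the standard fact (which underlies the correctness of decomposed value iteration and decomposed Newton's method in \cite{rmc}) that for a cleaned quadratic MPS $x=P(x)$ with LFP $q^*>0$ and any SCC $S$ of $G_P$ with lower dependency set $D(S)$, the subsystem $x_S=P_S(x_S,q^*_{D(S)})$ is a strongly connected quadratic MPS in the variables $x_S$ whose \emph{least} fixed point is exactly $q^*_S$; this can either be cited from \cite{rmc}, or proved by restricting value iteration to the downward-closed variable set $S\cup D(S)\cup\cdots$ and taking monotone limits. Granting the strongly connected case, the general case then follows by induction along a topological ordering $S_1,\dots,S_t$ of the SCCs with $D(S_i)\subseteq S_1\cup\cdots\cup S_{i-1}$: assuming inductively a neighbourhood of $q^*$ in which every solution $y$ of $x=P(x)$ satisfies $y_{S_j}=q^*_{S_j}$ for all $j<i$, such a $y$ has $y_{D(S_i)}=q^*_{D(S_i)}$, so $y_{S_i}$ solves $x_{S_i}=P_{S_i}(x_{S_i},q^*_{D(S_i)})$; since by the strongly connected case $q^*_{S_i}$ is isolated for this subsystem, shrinking the neighbourhood forces $y_{S_i}=q^*_{S_i}$. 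After $t$ steps we obtain a neighbourhood of $q^*$ containing no other solution.

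For the strongly connected case, $B(q^*)$ is a nonnegative irreducible matrix (its zero/nonzero pattern is that of the strongly connected graph $G_P$, since $q^*>0$), and $\rho(B(q^*))\le 1$ by Lemma~\ref{lem:spectr-at-lfp-is-at-most-1}. If $\rho(B(q^*))<1$, then $I-B(q^*)$ is nonsingular by Lemma~\ref{lem:inverse-of-sub-1-spec-exists}, so the polynomial map $G(x):=x-P(x)$ has invertible Jacobian $I-B(q^*)$ at $q^*$, and the inverse function theorem makes $q^*$ an isolated zero of $G$, i.e.\ an isolated solution. So assume $\rho(B(q^*))=1$, and suppose for contradiction that solutions $y\ne q^*$ of $x=P(x)$ occur arbitrarily close to $q^*$; fix one with $\|y-q^*\|$ small enough that $y>0$. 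Then $y$ is a nonnegative solution, so $y\ge q^*$ by the LFP property, and $v:=y-q^*$ is nonnegative and nonzero. By Lemma~\ref{lem:restate-lem-3-3}, $v=P(y)-P(q^*)=B(\tfrac{y+q^*}{2})v$; since $B$ is monotone this also gives $B(q^*)v\le v$, and then Lemma~\ref{coneratio} (applied to the irreducible matrix $B(q^*)$) yields $v>0$. Hence $\tfrac{y+q^*}{2}$ is a positive vector with $\tfrac{y+q^*}{2}\ge q^*$, so $B(\tfrac{y+q^*}{2})$ is nonnegative irreducible and has the positive eigenvector $v$ for eigenvalue $1$, whence $\rho(B(\tfrac{y+q^*}{2}))=1$ by Lemma~\ref{lem:irr-unique-positive-eigenvector}. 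Finally I would compare $B(q^*)$ and $B(\tfrac{y+q^*}{2})$ entrywise: each entry of $B(x)$ is an affine form in $x$ with nonnegative coefficients, so if $B(q^*)_{ij}=B(\tfrac{y+q^*}{2})_{ij}$ then that entry's linear part annihilates $v>0$ and is therefore identically $0$; thus either $B(q^*)\le B(\tfrac{y+q^*}{2})$ with $B(q^*)\ne B(\tfrac{y+q^*}{2})$, giving $\rho(B(q^*))<\rho(B(\tfrac{y+q^*}{2}))=1$ by Lemma~\ref{lem:irr-non-neg-spec-lower}, or $B$ is a constant matrix, so $B(q^*)=B(0)$ and $\rho(B(q^*))=\rho(B(0))<1$ by Lemma~\ref{lem:spectr-at-lfp-is-at-most-1} applied at $z=0<q^*$. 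Either way this contradicts $\rho(B(q^*))=1$, completing the strongly connected case.

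The step I expect to be the main obstacle is making the SCC-decomposition reduction fully rigorous --- specifically, verifying that $q^*_S$ is the \emph{least} fixed point (not merely a fixed point) of $x_S=P_S(x_S,q^*_{D(S)})$, which is precisely what licenses the use of the LFP property inside the strongly connected argument; for this I would either cite the corresponding statement from \cite{rmc} or supply the short monotone value-iteration argument sketched above. A minor point to get right is that $B(z)$ has exactly the nonzero pattern of $G_P$ whenever $z>0$, so that the relevant Jacobians are genuinely irreducible, but this is immediate from the definition of the dependency graph.
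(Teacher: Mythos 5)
Your proof is correct, and in the strongly connected case it takes a genuinely different route from the paper's. The paper splits on whether $P_1(x)$ is linear or nonlinear: in the linear case it argues that $I-B$ is nonsingular so the fixed point is unique, and in the nonlinear case it extracts \emph{two} fixed points $q,q'$ successively closer to $q^*$, shows $\rho(B(\frac{1}{2}(q^*+q)))=\rho(B(\frac{1}{2}(q^*+q')))=1$ for two distinct, comparable, irreducible matrices, and contradicts Lemma~\ref{lem:irr-non-neg-spec-lower}. You instead split on $\rho(B(q^*))<1$ versus $\rho(B(q^*))=1$ (exhaustive by Lemma~\ref{lem:spectr-at-lfp-is-at-most-1}): the first case is dispatched uniformly for linear and nonlinear systems by the inverse function theorem, and in the second case the standing hypothesis $\rho(B(q^*))=1$ itself plays the role of the paper's second fixed point, so a \emph{single} nearby solution $y$ suffices --- you compare $B(q^*)$ directly with $B(\frac{1}{2}(y+q^*))$ and invoke Lemma~\ref{lem:irr-non-neg-spec-lower}, falling back on Lemma~\ref{lem:spectr-at-lfp-is-at-most-1} at $z=0$ when $B$ turns out to be constant. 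This is a mild but genuine simplification, and all the lemma applications check out ($B(z)$ has the nonzero pattern of $G_P$ for $z>0$, so the relevant Jacobians are irreducible; the degenerate one-variable SCC with no self-loop lands in your $\rho<1$ case). The SCC reduction is the same induction the paper uses, and the point you flag --- that $q^*_S$ is the \emph{least} fixed point of $x_S=P_S(x_S,q^*_{D(S)})$, which is what licenses $y_S\geq q^*_S$ inside the strongly connected argument --- is indeed needed, but the paper asserts it without proof as well; it is a standard consequence of decomposed value iteration from \cite{rmc}, so citing it or supplying the monotone-limit argument you sketch closes the proof.
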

\begin{proof}
Firstly, we consider strongly connected MPSs. These can be divided into two cases, linear strongly-connected MPSs, where $B(x) = B$  
is a constant matrix and $P(x)$ is affine, and nonlinear strongly-connected MPSs, where $B(x)$ 
is not a constant matrix and $P(x)$ is nonlinear.

For the linear case, the Jacobian is a constant $B(x) = B$, and  $x = P(x) = Bx + P(0)$.  
We know that $\rho(B(q^*)) \leq 1$ from Corollary \ref{cor:perron-cone},
and thus since $B = B(0) = B(q^*)$, from Lemma \ref{lem:spectr-at-lfp-is-at-most-1},
we know that $\rho(B) < 1$, and thus $(I-B)$ is non-singular,
and there is a unique solution to $x=P(x)= Bx + P(0)$, namely
$q^*=(I-B)^{-1} P(0)$. Being unique, this solution is isolated.

Now suppose, for contradiction, that $x=P(x)$ is a non-linear strongly-connected quadratic MPS but that 
$q^* > 0$ is not an isolated solution to $x=P(x)$. 
Because $q^*$ is not isolated, there is another fixed-point $q$ with $\|q^* - q\|_\infty \leq q^*_{\min}$ and $q \not= q^*$. 
Then $q \geq 0$ and, since $q^*$ is the least non-negative fixed-point,  $q \geq q^*$. From Lemma \ref{lem:restate-lem-3-3}
we have:
$$P(q) - P(q^*) = B(\frac{1}{2}(q^*+q))(q - q^*)$$
Because $q^*$ and $q$ are fixed points
 $$q - q^* = B(\frac{1}{2}(q^*+q))(q - q^*)$$
 Lemma \ref{coneratio} now yields that since $q -q^* \geq 0$ but $q -q^*\not= 0$ and $B(\frac{1}{2}(q^*+q))$ is irreducible, $q > q^*$. 
Thus $q - q^* > 0$ is a positive eigenvector of the irreducible matrix $B(\frac{1}{2}(q^*+q))$ associated with eigenvalue 1, 
thus $\rho(B(\frac{1}{2}(q^*+q))) = 1$ by Lemma \ref{lem:irr-unique-positive-eigenvector}.
 
We now again invoke the assumption of non-isolation of $q^*$,
which implies there is a vector $q' \neq q^*$ 
such that $q' = P(q')$ and $\|q^* - q'\|_\infty \leq \min \{q^*_{\min}, \frac{1}{2}(q - q^*)_{\min} \}$.
 By the same reasoning as above, we have that $q' > q^*$ and $\rho(B(\frac{1}{2}(q^*+q'))) = 1$. 
But now the condition $\|q^* - q'\|_\infty \leq \frac{1}{2}(q - q^*)_{\min}$ yields that 
$q' \leq q^* + \frac{1}{2}(q-q^*) < q$.  We thus also have that $\frac{1}{2}(q^*+q) > \frac{1}{2}(q^*+q')$,
and because $B(x)$ is non-constant and monotone in $x$, we have $B(\frac{1}{2}(q^*+q)) \geq B(\frac{1}{2}(q^*+q'))$ and
$B(\frac{1}{2}(q^*+q)) \not= B(\frac{1}{2}(q^*+q'))$. 
However, $\rho(B(\frac{1}{2}(q^*+q))) = 1 = \rho(B(\frac{1}{2}(q^*+q')))$. 
This contradicts Lemma \ref{lem:irr-non-neg-spec-lower}. So $q^*$ is also isolated in this case.

This establishes that for all strongly-connected MPSs, with LFP $q^* > 0$,  $q^*$ is isolated.

Now suppose that $x=P(x)$ is not strongly-connected. 
For each SCC $S$ of $x=P(x)$, the MPS $x_S=P_S(x_S, q^*_{D(S)})$ is strongly connected, 
so its LFP $q^*_S$ is an isolated solution of $x_S=P_S(x_S, q^*_{D(S)})$. 
That is, there is an $\epsilon_S > 0$ such that if $q_S$ has $\|q_S - q^*_S\| \leq \epsilon_S$ and 
$q_S=P_S(q_S, q^*_{D(S)})$, then $q_S = q^*_S$. Now take 
$\epsilon = \min_S \{ \epsilon_S \}$. We claim that if $\|q - q^*\|_\infty \leq \epsilon$ 
and $P(q) = q$, then $q=q^*$. We can show this by induction on the depth of strongly-connected components. 
If $S$ is a bottom strongly-connected component, then $q_S$ has $\|q_S - q^*_S\|_\infty \leq \epsilon \leq \epsilon_S$ and $q_S = P_S(q_S)$. So $q_S = q^*_S$.
If $S$ is a SCC and for all variables $D(S)$  that variables in $S$ depend on, directly or indirectly, $q_{D(S)} = q^*_{D(S)}$, then $q_S$ has 
$q_S = P_S(q_S, q_{D(S)}) = P_S(q_S, q^*_{D(S)})$. But this and $\|q_S - q^*_S\|_\infty \leq \epsilon \leq \epsilon_S$ are enough to establish 
$q_S = q^*_S$. This completes the induction showing that $q = q^*$. 
So $q^*$ is isolated solution for any MPS with LFP $q^* > 0$. \qed \end{proof}

For each $x_i$, let $d_i$ be the product of the denominators of  all coefficients of 
$P(x)_i$. Then $d_ix = d_iP(x)_i$ clearly has 
integer coefficients which are no larger than $2^{|P|}$.
Suppose $x = P(x)$ has LFP $q^* > 0$, and suppose 
that coordinate $k$ is the maximum coordinate of $q^*$, i.e., that $q^*_k = q^*_{\max}$. 
Now consider the system of $n+1$ polynomial equations,  in $n+1$ variables
(with an additional variable $y$), given by:

\begin{equation}
 d_ix_i = d_iP(x)_i \  , \ \ \mbox{and} \ \ \mbox{for all} \ i \in \{1,\ldots,n\}; \ \ \mbox{and}
 \ \ x_k \: y = 1  \ .
\label{eq:system-to-be-applied-on-iso}
\end{equation}

 Lemma \ref{q*-isolated} tells us that $q^* > 0$ is an isolated solution of $x=P(x)$. 
If $z \in \real^n$ is any solution vector for $x=P(x)$, there is a unique $w \in \real$ such that 
$x:=z$ and $y :=w$ 
forms a solution to the equations (\ref{eq:system-to-be-applied-on-iso});
namely let $w = \frac{1}{z_k}$.
So, letting $x := q^*$, and letting $y := \frac{1}{q^*_k}$ for all $i$, gives us
an isolated solution of the equations (\ref{eq:system-to-be-applied-on-iso}). 
 We can now apply Theorem \ref{th:isol-real-root-bd-full} to the system
(\ref{eq:system-to-be-applied-on-iso}).  For $y =\frac{1}{q^*_k}$, 
equation (\ref{eq:isol-rr-lower-bd}) in Theorem \ref{th:isol-real-root-bd-full} 
says that
  \begin{equation*}
    2^{-2(n+1)(|P| + 2(n+1)\log(2n+2))5^{n}} < \frac{1}{q^*_k}
    \quad \text{ or } \quad \frac{1}{q^*_k} = 0 \enspace.
  \end{equation*}
Since $q^* > 0$, clearly $\frac{1}{q^*_k} \neq 0$, so $\frac{1}{q^*_{\max}} = \frac{1}{q^*_k} > 
2^{-2(n+1)(|P| + 2(n+1)\log(2n+2))5^{n}}$. So 

\begin{equation}
q^*_{\max}  < 2^{2(n+1)(|P| + 2(n+1)\log(2n+2))5^{n}}.
\label{eq:upper-bound-on-qmax-tighter}
\end{equation}
\qed
  \end{proof}

\subsection{How good are our upper bounds for R-DNM on MPSs?}

\label{sec:app-how-good}

We prove in this paper upper bounds on the number of iterations
required by R-DNM to converge to within additive 
error $\epsilon > 0$ of the LFP
$q^*$ for an arbitrary MPS $x=P(x)$.

We now discuss some important parameters 
of the problem in which our upper bounds can
not be improved substantially.

To begin with, our upper bounds for the
number of iterations required contain a term of the form 
$2^d \log \frac{1}{\epsilon}$.
Here $d$ denotes the nesting depth of SCCs
in the dependency graph $G_P$ of the input MPS, $x=P(x)$.

It was already pointed out in \cite{lfppoly} (Section 7)
that such a term is a lower bound using Newton's method on MPSs,
even for exact Newton's method (whether decomposed or not),
even for rather simple MPSs.
\cite{lfppoly} provided a family of simple examples entailing the lower bound.
Indeed, consider the following MPS, $x = P(x)$, which is a simpler
variant of bad MPSs noted in \cite{lfppoly}.
The MPS has $n+1$ variables, $x_0, \ldots, x_n$.
The equations are:
\begin{eqnarray}
x_i & = & \frac{1}{2} x_i^2 + \frac{1}{2} x_{i-1} \ \ ,  \ \ 
\mbox{for all $i \in \{1,\ldots,n\}$}  \nonumber \\
 x_0 & = & \frac{1}{2} x_0^2 + \frac{1}{2} \nonumber
\end{eqnarray}
The LFP of this MPS is $q^* = {\mathbf 1}$,
and it captures the termination probabilities
of a (rather simple) stochastic context-free grammar, pBPA, 
or 1-exit Recursive Markov chain.  
Note that the encoding size of this MPS is $|P| = O(n)$.

As observed in \cite{rmc}, {\em exact} Newton's method,
starting from $x^{(0)} := 0$, on the
univariate equation 
$x_0 = \frac{1}{2} x_0^2 + \frac{1}{2}$ gains exactly one bit
of precision per iteration.  In other words,  if $x^{(k)}$
denotes the $k$'th iterate, then  $1- x^{(k)} = 2^{-k}$.

Suppose we perform $m$ iterations of exact NM on
the bottom SCC, $x_0 = \frac{1}{2} x_0^2 + \frac{1}{2}$,
and suppose that by doing 
so we obtain an appoximation $q'_0 = 1- a_0$, where $a_0 = 2^{-m}$.
Plugging the approximation $q'_0$ into the next higher SCC,
the equation for $x_1$ becomes 
$x_1  =  \frac{1}{2} x_1^2 + \frac{1}{2} q'_0$.
For the rest of the argument we do not need to appeal 
to Newton iterations: even {\em exact} computation
of the LFPs for the remaining SCCs will yield bad
approximations overall unless $1-q'_0 \leq \frac{1}{2^{2^n}}$ 
(showing that the system of equations
is terribly {\em ill-conditioned}).

Indeed, by induction on $i \geq 0$, 
suppose that the value obtained for LFP of $x_i$
is $q'_i = (1- a_i)$.
Then
after plugging in $q'_i$ in place of $x_i$
in the SCC for $x_{i+1}$, the
adjusted  LFP, $q'_{i+1}$, of the next higher SCC:
$x_{i+1}  = (1/2) (x_1)^2 + (1/2) (1 - a_{i})$,
becomes $q'_{i+1} = 1- \sqrt{a_i}$.
Thus, by induction on depth, the adjusted LFP of
$x_n$ becomes $q'_n =  1-  a_0^{2^{-n}}$.
But $a_0 = 2^{-m}$.  
Thus $q'_n = 1-  2^{-m2^{-n}}$.

We would like to have error $1- q'_n = 2^{-m2^{-n}} \leq \epsilon$.
Taking logs, we get that we must perform at least
$m \geq 2^n \log \frac{1}{\epsilon}$ NM iterations on the bottom SCC 
alone.

Note that $n$ here is also the depth $d$ of SCCs in this example.

Other terms in our upper bounds on
the number of iterations required to compute the LFP 
of a general MPS
are $\log \frac{1}{q^*_{\min}}$, and $\log {q^*_{\max}}$.
Simple ``repeated squaring''
MPSs, with $x_i = x_{i-1}^2$, $x_0 = \{ \frac{1}{2}$ or $2 \}$,   
show that we can have $q^*_{\min} \leq \frac{1}{2^{2^n}}$,
and $q^*_{\max} \geq 2^{2^n}$, where
$n$ is the number of variables.
In Theorem \ref{thm:q-bounds} we give explicit lower bounds on $q^*_{\min}$ and
explicit upper bounds on $q^*_{\max}$, 
in terms of $|P|$ and $n$,  showing
that linear-double-exponential dependence on $n$ 
is indeed the worst case possible.

However, it should be noted that 
the worst-case bounds on $q^*_{\min}$ and $q^*_{\max}$
are not representative of many important families of MPSs.
In particular, note that MPSs corresponding
to termination probabilities must have $q^*_{\max} \leq 1$.
Furthermore, for a number of classes of probabilistic systems
we can prove bounds of the form $\log \frac{1}{q^*_{\min}} \leq poly(|P|)$.
Indeed, for MPSs corresponding to   QBDs and probabilistic 1-counter
automata, such a bound was proved in \cite{EWY-qbd-2010}.

If the family of MPSs happens to have 
$\log \frac{1}{q^*_{\min}}, \log q^*_{\max}  \leq poly(|P|)$, then
our upper bounds show that the total number of iterations of R-DNM
needed is only exponential in $d$, the depth of SCCs,
and thus if $d \leq \log |P|$, then for such MPSs 
R-DNM runs {\em in P-time in the encoding size of the input, $|P|$ and 
$\log \frac{1}{\epsilon}$, in the standard Turing model
of computation}, to compute an approximation to the LFP $q^*$, within
additive error $\epsilon > 0$.

It should be noted that 
for the case of {\em strongly connected MPSs} only,
and only for {\em Exact Newton's Method}, without rounding, 
\cite{lfppoly} obtained comparable
result to ours in terms of worst-case dependence on
$\log \frac{1}{q^*_{\min}}$ and $\log q^*_{\max}$,\footnote{Technically,
their bounds are with respect to {\em relative error}, and their
bounds for strongly connected MPSs do not depend at all on $q^*_{\max}$, but of course 
if $q^*_{\max}$ is large, 
then in order to obtain absolute (additive) error $\epsilon > 0$,
the relative error required is $\epsilon' = \frac{\epsilon}{q^*_{\max}}$, 
and since their bounds depend on $\log \frac{1}{\epsilon'}$
they depend (indirectly) on $\log{q^*_{\max}}$, with the same
magnitude as ours.}
However, 
in \cite{lfppoly} they did not obtain any constructive bounds in terms
of $|P|$, $q^*_{\min}$ or $q^*_{\max}$  for MPSs that are not
strongly connected,  nor 
did they obtain any results for rounded versions of Newton's method. 
Using exact Newton's method of course
entails the assumption of a unit-cost arithmetic model of computation,
rather than the Turing model.

\end{document}